\subjclass{37K30, 37K35}
\newcommand{\eval}[2][\right]{\relax
  \ifx#1\right\relax \left.\fi#2#1\rvert}
\newcommand{\pd}{{\partial}}
\newcommand{\al}{{\alpha}}
\newcommand{\la}{{\lambda}}
\newcommand{\er}{\eqref}
\newcommand{\cl}{\colon}
\newcommand{\beq}{\begin{equation}}
\newcommand{\ee}{\end{equation}}
\newcommand{\bmu}{\begin{multline}}
\newcommand{\emul}{\end{multline}}
\newcommand{\qalg}{\mathcal{Q}}
\newcommand{\ipol}{\mathfrak{I}}
\newcommand{\frl}{\mathfrak{F}}
\newcommand{\frid}{\mathfrak{I}}
\newcommand{\fla}{\mathbf{A}}
\newcommand{\flb}{\mathbf{B}}
\newcommand{\flz}{\mathbf{Z}}
\newcommand{\ds}{\mathbf{S}}
\newcommand{\be}{{\beta}}
\newcommand{\CE}{\mathcal{E}}
\newcommand{\ce}{\mathcal{E}}
\newcommand{\mat}{\mathcal{M}}
\newcommand{\zp}{\mathbb{Z}_{\ge 0}}
\newcommand{\zsp}{\mathbb{Z}_{>0}}
\renewcommand{\sl}{\mathfrak{sl}}
\newcommand{\mg}{\mathfrak{g}}
\newcommand{\bl}{\mathfrak{L}}
\newcommand{\ga}{\mathbb{A}}
\newcommand{\gb}{\mathbb{B}}
\newcommand{\mR}{\mathfrak{R}}
\newcommand{\lb}{\label}
\newcommand{\vf}{\varphi}
\newcommand{\Com}{\mathbb{C}}
\newcommand{\ve}{\mathcal{D}}
\newcommand{\fd}{\mathbb{F}}
\newcommand{\fdn}{\mathbb{F}}
\newcommand{\cprime}{\/{\mathsurround=0pt$'$}}
\newcommand{\hkneq}{\mathrm{KN}}
\newcommand{\knpr}{v}
\newcommand{\oc}{p}
\newcommand{\ocn}{p}
\newcommand{\dc}{q}
\newcommand{\nv}{m}
\newcommand{\eo}{d}
\newtheorem{theorem}{Theorem}
\newtheorem{proposition}{Proposition}
\newtheorem{lemma}{Lemma}
\theoremstyle{definition}
\newtheorem{example}{Example}
\newtheorem{remark}{Remark}
\begin{document}


\author{Sergey Igonin}
\address{Department of Mathematics \\ 
Utrecht University \\
P.O. Box 80010 \\ 
3508 TA Utrecht \\ 
the Netherlands} 

\email{s-igonin@yandex.ru}

\title[Higher jet prolongation Lie algebras and B\"acklund transformations]
{Higher jet prolongation Lie algebras 
and B\"acklund transformations for $(1+1)$-dimensional PDEs}


\begin{abstract}
For any $(1+1)$-dimensional (multicomponent) evolution PDE, 
we define a sequence of Lie algebras $\fdn^\ocn$, $\ocn=0,1,2,3,\dots$, 
which are responsible for all Lax pairs and zero-curvature representations (ZCRs) 
of this PDE. 

In our construction, jets of arbitrary order are allowed. 
In the case of lower order jets, the algebras $\fdn^\ocn$ generalize 
Wahlquist-Estabrook prolongation algebras. 

To achieve this, we find a normal form for (nonlinear) ZCRs with respect 
to the action of the group of gauge transformations. 
One shows that 
any ZCR is locally gauge equivalent to the ZCR arising from a vector field representation of the algebra $\fdn^\ocn$, where $\ocn$ is the order of jets 
involved in the $x$-part of the ZCR.

More precisely, we define a Lie algebra $\fdn^\ocn$ for each nonnegative integer~$\ocn$ and each point~$a$ of the infinite prolongation~$\CE$ of the evolution PDE. 
So the full notation for the algebra is $\fdn^\ocn(\CE,a)$.  

Using these algebras, 
one obtains a necessary condition for two given evolution PDEs  
to be connected by a B\"acklund transformation. 

In this paper, the algebras $\fdn^\ocn(\CE,a)$ are computed for some PDEs 
of KdV type. 
In a different paper with G.~Manno, 
we compute $\fdn^\ocn(\CE,a)$ 
for multicomponent Landau-Lifshitz systems of Golubchik and Sokolov.  
Among the obtained Lie algebras, one encounters  
infinite-dimensional algebras of certain matrix-valued functions on some algebraic curves. 
Besides, some solvable ideals and semisimple Lie algebras appear in the description 
of $\fdn^\ocn(\CE,a)$.

Applications to classification of KdV and Krichever-Novikov type equations 
with respect to B\"acklund transformations are also briefly discussed.

\end{abstract}


\maketitle

\tableofcontents

\newpage

\section{Introduction}

\subsection{The main results}

\lb{main_int}



A large part of the 
theory of integrable systems is devoted to $(1+1)$-dimensional evolution PDEs 
\begin{gather}
\label{sys_intr}
\frac{\pd u^i}{\pd t}
=F^i(x,t,u^1,\dots,u^\nv,\,u^1_1,\dots,u^\nv_1,\dots,u^1_{\eo},\dots,u^\nv_{\eo}),\\
\notag
u^i=u^i(x,t),\quad\qquad u^i_k=\frac{\pd^k u^i}{\pd x^k},\quad\qquad  
i=1,\dots,\nv,\quad\qquad k\in\zsp. 
\end{gather}
Here the number $\eo$ is such that the functions $F^i$ may depend only 
on the variables $x$, $t$, $u^j$, $u^j_k$ for $k\le\eo$.

This class of PDEs includes many celebrated equations of mathematical physics 
(e.g., the KdV, Landau-Lifshitz, nonlinear Schr\"odinger equations). 

Many more PDEs can be written in the evolution form~\er{sys_intr} after a suitable change of variables\footnote{It is known that almost any determined system of PDEs 
with two independent variables can be written in the evolution form~\er{sys_intr} 
by means of a change of variables.}.
For example, the sine-Gordon equation $u_{tt}-u_{xx}=\sin u$ is equivalent to the evolution system 
$$
u^1_t=u^2,\qquad\qquad u^2_t=u^1_{xx}+\sin u^1,
$$
where $u^1=u$, $u^2=u_t$, and subscripts denote derivatives.

In this paper, integrability of PDEs is understood 
in the sense of soliton theory and the inverse scattering method. 
This is sometimes called $S$-integrability. 

It is well known that, in order to understand 
integrability properties of~\er{sys_intr}, one needs to study overdetermined systems of the form
\begin{gather}
\lb{intcov}
\begin{aligned}
w^j_x&=
\al^j(w^1,\dots,w^\dc,x,t,u^1,\dots,u^\nv,u^1_1,\dots,u^\nv_1,\dots,u^1_\ocn,\dots,u^\nv_\ocn),\\
w^j_t&=
\beta^j(w^1,\dots,w^\dc,x,t,u^1,\dots,u^\nv,\,u^1_1,\dots,u^\nv_1,\dots,u^1_{\ocn+\eo-1},\dots,u^\nv_{\ocn+\eo-1}),
\end{aligned}\\
\notag
w^j=w^j(x,t),\qquad\qquad j=1,\dots,\dc, 
\end{gather}
such that system~\eqref{intcov} is compatible modulo~\eqref{sys_intr}.
The precise meaning of this compatibility condition is explained in Remark~\ref{compatcond} below.

It is well known that Lax pairs, B\"acklund transformations, 
and zero-curvature representations for~\er{sys_intr} can be described in terms of systems~\er{intcov} 
compatible modulo~\er{sys_intr}. 
Thus compatible systems~\er{intcov} are of fundamental importance for the theory of nonlinear PDEs 
in two independent variables $x$, $t$. 

Set $u^i_0=u^i$. 
The number $\ocn$ in~\er{intcov} is such that  
the functions $\al^j$ may depend only on the variables $w^l$, $x$, $t$, $u^i_{k}$ for 
$0\le k\le\ocn$.
Then, as is explained in Remark~\ref{compatcond}, 
the compatibility condition implies that the functions $\beta^j$ may depend only on 
$w^l$, $x$, $t$, $u^{i'}_{k'}$ for $0\le k'\le\ocn+\eo-1$.

If the functions $\al^j$, $\be^j$ are linear with respect to $w^1,\dots,w^\dc$, 
then \er{intcov} corresponds to a zero-curvature representation for system~\er{sys_intr}. 
In the case of nonlinear functions $\al^j$, $\be^j$, 
a compatible system~\er{intcov} can be regarded as a nonlinear zero-curvature representation for~\er{sys_intr}. 

In this paper, we study the following problem. 
Given a system~\er{sys_intr}, how to describe all systems~\er{intcov} 
that are compatible modulo~\er{sys_intr}? 

In the case when $\ocn=0$ and the functions $F^i$, $\al^j$, $\beta^j$ do not depend on $x$, $t$, 
a partial answer to this question is provided by the Wahlquist-Estabrook prolongation
method (WE method for short). Namely, for a given system~\er{sys_intr}, 
the WE method constructs a Lie algebra in terms of generators and relations such that 
compatible systems of the form  
\begin{gather}
\lb{wecov}
\begin{aligned}
w^j_x&=
\al^j(w^1,\dots,w^\dc,u^1,\dots,u^\nv),\\
w^j_t&=
\beta^j(w^1,\dots,w^\dc,u^1,\dots,u^\nv,\,u^1_1,\dots,u^\nv_1,\dots,u^1_{\eo-1},\dots,u^\nv_{\eo-1}),
\end{aligned}\\
\notag
w^j=w^j(x,t),\qquad\qquad j=1,\dots,\dc, 
\end{gather}
correspond to representations of this algebra by vector fields on the manifold $W$ 
with coordinates $w^1,\dots,w^\dc$ (see, e.g.,~\cite{dodd,nonl,Prol}) and references therein).
This algebra is called the \emph{Wahlquist-Estabrook prolongation algebra}.

In order to study the general case of systems~\er{intcov} with arbitrary $\ocn$, 
we need to consider gauge transformations. 
A \emph{gauge transformation} is given by an invertible change of variables 
\begin{equation}
\label{intgt}
  x\mapsto x,\quad t\mapsto t,\quad u^i\mapsto u^i,\quad u^i_k\mapsto u^i_k,\quad
  w^j\mapsto g^j(\tilde w^1,\dots,\tilde w^\dc,x,t,u^i,u^i_l,\dots),\quad j=1,\dots,\dc.
\end{equation}
Substituting~\er{intgt} to~\er{intcov}, we obtain equations of the form
\begin{gather}
\lb{tcov}
\begin{aligned}
\tilde w^j_x&=
\tilde\al^j(\tilde w^1,\dots,\tilde w^\dc,x,t,u^i,u^i_k,\dots),\\
\tilde w^j_t&=
\tilde\beta^j(\tilde w^1,\dots,
\tilde w^\dc,x,t,u^i,u^i_{k},\dots),
\end{aligned}\\
\notag
\tilde w^j=\tilde w^j(x,t),\qquad\qquad j=1,\dots,\dc. 
\end{gather} 
System~\er{tcov} is said to be \emph{gauge equivalent} to system~\er{intcov} if 
\er{tcov} and \er{intcov} are connected by an invertible change of variables 
of the form~\er{intgt}.

If~\er{intcov} is compatible then for any gauge transformation~\er{intgt} 
the corresponding system~\er{tcov} is compatible as well.

The WE method does not consider gauge transformations. 
In the classification of compatible systems~\er{wecov} this is acceptable, 
because the class of systems~\er{wecov} is relatively small.  

The class of systems~\er{intcov} is much larger than that of~\er{wecov}.
As we show below, gauge transformations play a very important role in the classification 
of compatible systems~\er{intcov}. 
Because of this, the classical WE method does not produce satisfactory results for~\er{intcov}.  
 
To overcome this problem, we combine 
the technique of gauge transformations with ideas similar to the WE method. 
Loosely speaking, the main results can be stated as follows.

We find a normal form for systems~\er{intcov}
with respect to the action of the group of gauge transformations.
This allows us to define a Lie algebra $\fdn^{\ocn}$ for each $\ocn\in\zp$ such 
that the following properties hold. 
Any compatible system~\er{intcov} 
is locally gauge equivalent to the system arising from a vector field representation of the algebra $\fdn^{\ocn}$. 
Two compatible systems of the form~\er{intcov} are locally gauge equivalent iff 
the corresponding vector field representations of ${\fdn}^{\ocn}$ are locally isomorphic.

More precisely, as is discussed below, we define a Lie algebra $\fdn^{\ocn}$ for each $\ocn\in\zp$ 
and each point~$a$ of the infinite prolongation $\CE$ of system~\er{sys_intr}. 
So the full notation for the algebra is $\fdn^{\ocn}(\CE,a)$.  

Recall that the \emph{infinite prolongation} $\CE$ of~\er{sys_intr} is the  
infinite-dimensional manifold with the coordinates 
\beq
\notag
x,\qquad t,\qquad u^i_k,\qquad i=1,\dots,\nv,\qquad k\in\zp.
\ee 
In this paper all manifolds, functions, vector fields, 
and maps of manifolds are supposed to be complex-analytic.
The precise definition of $\fdn^{\ocn}(\CE,a)$ for any system~\er{sys_intr} is presented in Section~\ref{csev}. 
In this definition, the algebra $\fdn^{\ocn}(\CE,a)$ is given in terms of generators and relations.

We consider representations of the Lie algebra $\fdn^{\ocn}(\CE,a)$ by vector fields on the manifold $W$ 
with coordinates $w^1,\dots,w^\dc$.
Such vector field representations of $\fdn^{\ocn}(\CE,a)$ classify 
(up to local gauge equivalence) 
all compatible systems~\er{intcov}, where functions $\al^j$, $\beta^j$ are defined on a neighborhood of 
the point $a\in\CE$. See Section~\ref{csev} for details.

Some applications of the algebras $\fdn^{\ocn}(\CE,a)$ to the theory of 
B\"acklund transformations are discussed in Subsection~\ref{subsecbt}.

According to Section~\ref{csev}, 
the algebras $\fdn^{\ocn}(\CE,a)$ for $\ocn\in\zp$ are arranged in a sequence of surjective homomorphisms 
\beq
\lb{intfdoc1}
\dots\to\fdn^{\ocn}(\CE,a)\to\fdn^{\ocn-1}(\CE,a)\to\dots\to\fdn^1(\CE,a)\to\fdn^0(\CE,a).
\ee

Let us describe the structure of $\fdn^{\ocn}(\CE,a)$ and the homomorphisms~\er{intfdoc1} more explicitly for some PDEs. 
Theorem~\ref{kntkdvtypeth} is proved in Section~\ref{seckdvtype}.
\begin{theorem}[Section~\ref{seckdvtype}]
\lb{kntkdvtypeth}
Let $\CE$ be the 
infinite prolongation of an equation of the form 
\beq 
\lb{intkdveq}
u_t=u_{xxx}+f(u,u_x),\qquad\qquad u=u(x,t),
\ee 
where $f$ is an arbitrary function. Let $a\in\CE$.
 
For each $\ocn\in\zsp$, 
consider the homomorphism $\vf_\ocn\cl\fdn^\ocn(\CE,a)\to\fdn^{\ocn-1}(\CE,a)$ from~\er{intfdoc1}. 
Then we have 
\beq
\notag
[v_1,v_2]=0\qquad\qquad\forall\,v_1\in\ker\vf_\ocn,\qquad\forall\,v_2\in\fdn^\ocn(\CE,a).
\ee
That is, the kernel of $\vf_\ocn$ is contained in the center of the Lie 
algebra $\fdn^\ocn(\CE,a)$.

For each $k\in\zsp$, let $\psi_k\colon\fdn^k(\CE,a)\to\fdn^{0}(\CE,a)$ 
be the composition of the homomorphisms
\beq
\notag
\fdn^k(\CE,a)\to\fdn^{k-1}(\CE,a)\to\dots
\to\fdn^{1}(\CE,a)\to\fdn^{0}(\CE,a)
\ee
from~\er{intfdoc1}. Then   
\beq
\notag
[h_1,[h_2,\dots,[h_{k-1},[h_k,h_{k+1}]]\dots]]=0\qquad\qquad\forall\,h_1,\dots,h_{k+1}\in\ker\psi_k.
\ee
In particular, the kernel of $\psi_k$ is nilpotent.
\end{theorem}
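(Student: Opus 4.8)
The plan is to establish the two assertions separately, with the understanding that the second is a purely formal consequence of the first: all the genuine content lies in the claim that $\ker\vf_\ocn$ is central, which is where the specific structure of equations~\er{intkdveq} is used. So I would first prove the centrality statement by a direct computation with the presentation of $\fdn^\ocn(\CE,a)$ by generators and relations from Section~\ref{csev}, and then deduce the iterated-bracket identity for $\ker\psi_k$ by a telescoping argument along the tower~\er{intfdoc1}.

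For the first assertion I would work with the homomorphism $\vf_\ocn$ from~\er{intfdoc1}, which lowers the jet-order from $\ocn$ to $\ocn-1$; its kernel should be spanned by exactly those generators that encode the dependence of the $x$-part $\al$ of a gauge-normalized system~\er{intcov} on the top jet coordinate $u_\ocn$. The strategy is to extract from the compatibility condition of Remark~\ref{compatcond} the terms of highest order in the jet variables. Since $\eo=3$ for~\er{intkdveq}, the $t$-part $\be$ involves jets up to order $\ocn+2$, and applying the total derivative $D_x$ to $\be$, together with $D_t u_\ocn=u_{\ocn+3}+\dots$, produces terms of order $u_{\ocn+3}$ in the zero-curvature relation. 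Matching these highest-order terms, and then descending through the coefficients of $u_{\ocn+2},\dots,u_\ocn$, isolates relations involving the top-order generators. I expect these relations to say precisely that every generator of $\ker\vf_\ocn$ brackets trivially with every generator of $\fdn^\ocn(\CE,a)$, which is the asserted centrality; the simple leading structure of~\er{intkdveq} (the linear term $u_{xxx}$ and $f$ depending only on $u,u_x$) is what keeps this extraction tractable.

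Granting centrality of each $\ker\vf_\ocn$, the second assertion is a formal statement about a tower of central extensions. For $0\le i\le k$ let $\pi_i\cl\fdn^k(\CE,a)\to\fdn^{k-i}(\CE,a)$ be the composite of the homomorphisms in~\er{intfdoc1}, so that $\pi_0=\id$, $\pi_k=\psi_k$, and $\pi_i=\vf_{k-i+1}\circ\pi_{i-1}$; set $F^i=\ker\pi_i$, giving an increasing chain $0=F^0\subseteq F^1\subseteq\dots\subseteq F^k=\ker\psi_k$. The key step is the inclusion
\beq
\notag
[\fdn^k(\CE,a),\,F^i]\subseteq F^{i-1},\qquad 1\le i\le k.
\ee
Indeed, for $n\in F^i$ one has $\vf_{k-i+1}(\pi_{i-1}(n))=\pi_i(n)=0$, so $\pi_{i-1}(n)\in\ker\vf_{k-i+1}$, which by the first assertion is central in $\fdn^{k-i+1}(\CE,a)$; hence for any $x\in\fdn^k(\CE,a)$ we get $\pi_{i-1}([x,n])=[\pi_{i-1}(x),\pi_{i-1}(n)]=0$, i.e.\ $[x,n]\in F^{i-1}$. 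Applying this repeatedly to the iterated bracket finishes the proof: for $h_1,\dots,h_{k+1}\in\ker\psi_k=F^k$ the innermost bracket $[h_k,h_{k+1}]$ lies in $F^{k-1}$, bracketing with $h_{k-1}$ lands in $F^{k-2}$, and after $k$ such steps $[h_1,[h_2,\dots,[h_k,h_{k+1}]\dots]]\in F^0=0$, whence $\ker\psi_k$ is nilpotent. The main obstacle is thus the first assertion: everything hinges on correctly identifying the generators spanning $\ker\vf_\ocn$ inside the construction of Section~\ref{csev} and on pushing the highest-order analysis of the zero-curvature relation far enough to show these generators are central, rather than merely generating a small ideal.
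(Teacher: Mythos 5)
Your treatment of the second assertion is correct and is in substance the paper's own argument: the paper proves the iterated-bracket identity by induction on $k$, using exactly your observation that an iterated bracket of elements of $\ker\psi_{n+1}$ lies in $\ker\vf_{n+1}$ and is therefore central; your filtration $F^i=\ker\pi_i$ with $[\fdn^k(\CE,a),F^i]\subseteq F^{i-1}$ is just a clean repackaging of that induction, and it is complete as written.

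The first assertion, however, is where all the content lies, and your sketch of it has a genuine gap. The highest-order analysis of the zero-curvature relation~\er{kdvqzcr} that you describe does not by itself yield ``every generator of $\ker\vf_\ocn$ brackets trivially with every generator of $\fdn^\ocn(\CE,a)$'', and in the paper it does not. What that analysis actually produces is structural: $\ga=(u_\ocn-a_\ocn)^2\ga_2+\ga_0$ with no term linear in $u_\ocn$ (formulas~\er{ga210},~\er{uqga10}), the fact that $\ga_2$ depends on $t$ alone (Lemma~\ref{ga2uk} and~\er{pdxga20}), the expansion $\ga_2=\sum_l t^l\tilde\ga^l$ of~\er{ga2sum} whose coefficients generate $\ker\vf_\ocn$, and the single power-series identity $[\ga_2,\ga_0]=0$ of~\er{ga0ga20}. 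This identity relates the $\tilde\ga^l$ only to the generators $\ga^{l_1,l_2}_{i_0\dots i_{\ocn-1}0}$ occurring in $\ga_0$, and only in an aggregated, coefficient-by-coefficient form; it says nothing about brackets with the $\gb$-generators or with anything your descent through the coefficients of $u_{\ocn+2},\dots,u_\ocn$ never touches. The indispensable missing ingredient is Proposition~\ref{lemgenfdq} (proved via Lemmas~\ref{gbllalgl},~\ref{gb00mgl},~\ref{gallmg} by a separate filtration-and-induction argument): the whole algebra $\fdn^\ocn(\CE,a)$ is generated by the elements $\ga^{l_1,0}_{i_0\dots i_\ocn}$ of $t$-degree zero, so that the $\gb$-generators and all $\ga$-generators of positive $t$-degree are redundant. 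Only with this does setting $t=0$ in $[\ga_2,\ga_0]=0$ give centrality of $\tilde\ga^0$ (equations~\er{tgagal1},~\er{tga0fdq}); and even then a further induction on $l$, differentiating $[\ga_2,\ga_0]=0$ in $t$, is needed to reach all the $\tilde\ga^l$ (equation~\er{tgalfdq}). Without the generation statement your argument shows only that $\ker\vf_\ocn$ commutes with part of the algebra — i.e.\ precisely the ``merely generating a small ideal'' outcome you flag at the end — and note also that identifying $\ker\vf_\ocn$ as generated by the $\tilde\ga^l$ in the first place already requires the structural formulas~\er{ga210},~\er{uqga10},~\er{gblong},~\er{ga2sum}, not merely the definition of $\vf_\ocn$ on generators.
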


Let $\CE$ be the infinite prolongation of the KdV equation 
\beq
\lb{intkdv}
u_t=u_{xxx}+u_xu.
\ee
Consider the infinite-dimensional Lie algebra 
$\sl_2(\Com[\la])\cong \sl_2(\mathbb{C})\otimes_{\mathbb{C}}\Com[\lambda]$, 
where $\Com[\lambda]$ is the algebra of polynomials in $\la$.

It is shown in~\cite{lie-scal} that, for the KdV equation, 
the algebra $\fdn^0(\CE,a)$ is isomorphic to the direct sum of $\sl_2(\Com[\la])$ and 
a $3$-dimensional abelian Lie algebra. 
Combining this with Theorem~\ref{kntkdvtypeth}, we obtain the following.

\begin{theorem}
\lb{intthkdv}
Let $\CE$ be the infinite prolongation of the KdV equation~\er{intkdv}.
Let $a\in\CE$. 
Then 
\begin{itemize}
 \item the algebra $\fdn^0(\CE,a)$ is isomorphic to the direct sum of $\sl_2(\Com[\la])$ and 
a $3$-dimensional abelian Lie algebra,
\item for each $\ocn\in\zsp$, the kernel of the surjective homomorphism $\fdn^\ocn(\CE,a)\to\fdn^{0}(\CE,a)$ from~\er{intfdoc1} is nilpotent.  
\end{itemize}
\end{theorem}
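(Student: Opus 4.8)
The plan is to obtain Theorem~\ref{intthkdv} as a direct consequence of two facts that are already at our disposal: the explicit identification of $\fdn^0(\CE,a)$ for the KdV equation carried out in~\cite{lie-scal}, and the general structural result of Theorem~\ref{kntkdvtypeth} concerning the homomorphisms in the tower~\er{intfdoc1} for equations of the form~\er{intkdveq}. No new hard analysis should be required; the entire task is to match these two inputs to the two assertions of the statement.

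For the first bullet I would simply invoke~\cite{lie-scal}, which asserts precisely that $\fdn^0(\CE,a)$ is isomorphic to the direct sum of $\sl_2(\Com[\la])$ and a three-dimensional abelian Lie algebra. This is verbatim the content of the first assertion, so nothing further is needed there.

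For the second bullet the key observation is that the KdV equation~\er{intkdv} is of the form~\er{intkdveq} with $f(u,u_x)=u_x u$, so Theorem~\ref{kntkdvtypeth} applies without modification to $\CE$ and the given point $a$. I would then identify the surjective homomorphism $\fdn^\ocn(\CE,a)\to\fdn^0(\CE,a)$ named in the second assertion with the composition $\psi_\ocn$ of Theorem~\ref{kntkdvtypeth}: both are the telescoped map $\fdn^\ocn(\CE,a)\to\fdn^{\ocn-1}(\CE,a)\to\dots\to\fdn^0(\CE,a)$ read off from~\er{intfdoc1}. Since the second half of Theorem~\ref{kntkdvtypeth} gives $[h_1,[h_2,\dots,[h_{\ocn-1},[h_\ocn,h_{\ocn+1}]]\dots]]=0$ for all $h_1,\dots,h_{\ocn+1}\in\ker\psi_\ocn$, and records in particular that $\ker\psi_\ocn$ is nilpotent, the second assertion follows at once. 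Note that only this second half of Theorem~\ref{kntkdvtypeth} is used; the centrality statement for $\ker\vf_\ocn$ is not needed here.

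The one point deserving care is the bookkeeping of index conventions, namely confirming that the map written $\fdn^\ocn(\CE,a)\to\fdn^0(\CE,a)$ in the statement is genuinely $\psi_\ocn$ and not an off-by-one variant; this is routine. The genuinely substantial work has been relegated elsewhere: the structural Theorem~\ref{kntkdvtypeth} is established in Section~\ref{seckdvtype}, and the description of $\fdn^0(\CE,a)$ is proved in~\cite{lie-scal}. Relative to those results the present statement is a corollary, so the main obstacle is conceptual rather than computational, namely recognizing that these two independently obtained descriptions combine to pin down both the base algebra $\fdn^0(\CE,a)$ and the kernels of all the projections onto it.
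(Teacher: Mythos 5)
Your proposal is correct and is essentially identical to the paper's own argument: the paper derives Theorem~\ref{intthkdv} exactly by citing~\cite{lie-scal} for the structure of $\fdn^0(\CE,a)$ and then applying Theorem~\ref{kntkdvtypeth} (with $f(u,u_x)=u_xu$) to conclude that $\ker\psi_\ocn$ is nilpotent. Your identification of the map $\fdn^\ocn(\CE,a)\to\fdn^0(\CE,a)$ with the composition $\psi_\ocn$ from~\er{intfdoc1} is precisely what the paper intends, so no further work is needed.
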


To describe $\fdn^0(\CE,a)$ for the KdV equation, the paper~\cite{lie-scal} uses 
the following fact. For the KdV equation (and some other PDEs), the algebra $\fdn^0(\CE,a)$ 
is isomorphic to a certain subalgebra of the Wahlquist-Estabrook prolongation algebra. 
The explicit structure of the Wahlquist-Estabrook prolongation algebra for the KdV equation 
is given in~\cite{kdv,kdv1}, and this allows us to describe $\fdn^0(\CE,a)$ 
(see~\cite{lie-scal} for details).

\begin{remark}
Using some extra computations, one can prove the following.
\begin{proposition}
\lb{kdvnilp}
Let $\CE$ be the infinite prolongation of the KdV equation~\er{intkdv}. 
For any $a\in\CE$ and any $\oc\in\zp$, the algebra $\fdn^\ocn(\CE,a)$
is isomorphic to the direct sum of $\sl_2(\Com[\la])$ and a finite-dimensional nilpotent Lie algebra.
\end{proposition}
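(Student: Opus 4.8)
The plan is to induct on $\ocn$, peeling off one central extension $\vf_\ocn$ at a time and using the rigidity of $\sl_2(\Com[\la])$ to carry the semisimple part through each step. For the base case $\ocn=0$, Theorem~\ref{intthkdv} gives $\fdn^0(\CE,a)\cong\sl_2(\Com[\la])\oplus\mathfrak{z}$ with $\mathfrak{z}$ a $3$-dimensional abelian, hence nilpotent, Lie algebra. For the inductive step I would assume $\fdn^{\ocn-1}(\CE,a)\cong\sl_2(\Com[\la])\oplus\mathfrak{m}$ with $\mathfrak{m}$ finite-dimensional nilpotent, and invoke Theorem~\ref{kntkdvtypeth}: the kernel $\mathfrak{c}:=\ker\vf_\ocn$ of the surjection $\vf_\ocn\cl\fdn^\ocn(\CE,a)\to\fdn^{\ocn-1}(\CE,a)$ from~\er{intfdoc1} lies in the center, so $\vf_\ocn$ presents $\fdn^\ocn(\CE,a)$ as a central extension
\[
0\to\mathfrak{c}\to\fdn^\ocn(\CE,a)\xrightarrow{\vf_\ocn}\sl_2(\Com[\la])\oplus\mathfrak{m}\to 0.
\]

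The key structural input I would use is that $\sl_2(\Com[\la])$ is perfect and centrally closed. Perfectness is immediate, since $[x\otimes f,y\otimes g]=[x,y]\otimes fg$ gives $[\sl_2(\Com[\la]),\sl_2(\Com[\la])]=[\sl_2,\sl_2]\otimes(\Com[\la]\cdot\Com[\la])=\sl_2(\Com[\la])$. Central closedness amounts to $H^2(\sl_2(\Com[\la]),\Com)=0$: for simple finite-dimensional $\mg$ and a commutative unital algebra $R$, the universal central extension of $\mg\otimes_\Com R$ has center $\Omega^1_{R/\Com}/dR$, and for $R=\Com[\la]$ every polynomial has a polynomial antiderivative, so $\Omega^1_{\Com[\la]/\Com}=\Com[\la]\,d\la=d\Com[\la]$ and the quotient vanishes. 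Since $\mathfrak{c}$ is a trivial module, $H^2(\sl_2(\Com[\la]),\mathfrak{c})\cong H^2(\sl_2(\Com[\la]),\Com)\otimes_\Com\mathfrak{c}=0$, so the restriction of the extension to the factor $\sl_2(\Com[\la])$ splits. I would fix a splitting homomorphism $\sigma\cl\sl_2(\Com[\la])\to\fdn^\ocn(\CE,a)$ and set $\ms:=\sigma(\sl_2(\Com[\la]))\cong\sl_2(\Com[\la])$.

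Next I would check that $\ms$ is a direct summand. Put $\mathfrak{k}:=\vf_\ocn^{-1}(\mathfrak{m})$, an ideal containing $\mathfrak{c}$ with $\mathfrak{k}/\mathfrak{c}\cong\mathfrak{m}$. From $\vf_\ocn\circ\sigma=\id$ one gets $\ms\cap\mathfrak{k}=0$, while $\ms+\mathfrak{k}=\fdn^\ocn(\CE,a)$ since the sum surjects onto the quotient and contains $\mathfrak{c}=\ker\vf_\ocn$. Because $[\sl_2(\Com[\la]),\mathfrak{m}]=0$ downstairs, $[\ms,\mathfrak{k}]\subseteq\mathfrak{c}$; feeding this into $\ms=[\ms,\ms]$ and using centrality of $\mathfrak{c}$ yields $[\ms,\mathfrak{k}]=[[\ms,\ms],\mathfrak{k}]\subseteq[\ms,[\ms,\mathfrak{k}]]\subseteq[\ms,\mathfrak{c}]=0$. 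Hence $\fdn^\ocn(\CE,a)=\ms\oplus\mathfrak{k}$ as a direct sum of ideals, with $\ms\cong\sl_2(\Com[\la])$ and $\mathfrak{k}$ a central extension of the nilpotent $\mathfrak{m}$ by the central ideal $\mathfrak{c}$, hence itself nilpotent.

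The one missing ingredient, and the main obstacle, is finite-dimensionality of the complement $\mathfrak{k}$, equivalently of the central kernel $\mathfrak{c}=\ker\vf_\ocn$; note that Theorem~\ref{intthkdv} only asserts nilpotency of $\ker\psi_\ocn$, not finiteness. This is precisely the \emph{extra computation} that must be supplied: one returns to the generators-and-relations description of $\fdn^\ocn(\CE,a)$ from Section~\ref{csev}, specialized to the KdV equation~\er{intkdv}, and bounds the dimension of each successive central kernel. Building on the explicit Wahlquist-Estabrook prolongation algebra of KdV from~\cite{kdv,kdv1} and the identification of $\fdn^0(\CE,a)$ in~\cite{lie-scal}, I would show that adjoining the $\ocn$-th order jet variable contributes only finitely many new central generators at each stage. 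Granting this, $\mathfrak{c}$ is finite-dimensional, so $\mathfrak{k}$ is a finite-dimensional nilpotent Lie algebra and the induction closes, giving $\fdn^\ocn(\CE,a)\cong\sl_2(\Com[\la])\oplus\mathfrak{k}$ as claimed.
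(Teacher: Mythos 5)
First, an important caveat: the paper does not actually prove Proposition~\ref{kdvnilp} --- it states explicitly that the proof requires ``some extra computations'' and omits it --- so your proposal has to stand entirely on its own. As far as it goes, your reduction is correct. The algebra $\sl_2(\Com[\la])$ is perfect, and the kernel of the universal central extension of $\mg\otimes_\Com R$ ($\mg$ simple, $R$ commutative unital) is $\Omega^1_{R/\Com}/dR$, which vanishes for $R=\Com[\la]$; hence every central extension of $\sl_2(\Com[\la])$ splits. Combined with the centrality of $\ker\vf_\ocn$ from Theorem~\ref{kntkdvtypeth}, this legitimately splits off a copy of $\sl_2(\Com[\la])$, and your bracket computation $[\ms,\mathfrak{k}]=[[\ms,\ms],\mathfrak{k}]\subseteq[\ms,\mathfrak{c}]=0$ correctly shows the splitting is a direct sum of ideals, with the complement $\mathfrak{k}$ nilpotent as a central extension of the inductively known nilpotent part. (One small repair: for a possibly infinite-dimensional kernel $\mathfrak{c}$, the identification $H^2(\mg,\mathfrak{c})\cong H^2(\mg,\Com)\otimes_\Com\mathfrak{c}$ is not the right justification; over a field one should use $H^2(\mg,\mathfrak{c})\cong\Hom_\Com\big(H_2(\mg,\Com),\mathfrak{c}\big)$, or argue directly via the universal central extension of the perfect algebra $\sl_2(\Com[\la])$.)

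The genuine gap is the one you flag yourself: finite-dimensionality of $\mathfrak{c}=\ker\vf_\ocn$. This cannot be ``granted'' --- it is precisely the content of Proposition~\ref{kdvnilp} that goes beyond Theorems~\ref{kntkdvtypeth} and~\ref{intthkdv}, and it is exactly the ``extra computation'' the paper itself declines to present. Everything you do prove (the splitting and the nilpotency of the complement) follows formally from the two quoted theorems plus general structure theory of central extensions; with finiteness omitted, your argument only yields that $\fdn^\ocn(\CE,a)$ is $\sl_2(\Com[\la])$ plus a nilpotent ideal of a priori countable dimension. To see concretely what is missing: the proof of Theorem~\ref{thcenter} in Section~\ref{seckdvtype} shows that $\ker\vf_\ocn$ is spanned by the central elements $\tilde\ga^l=\ga^{0,l}_{0\dots 02}$, $l\in\zp$, an infinite family of generators. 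Finishing the proof requires extracting further relations from the compatibility equation~\er{gckdv} --- relations your sketch never touches --- showing that only finitely many of the $\tilde\ga^l$ are linearly independent in $\fdn^\ocn(\CE,a)$. Your closing paragraph gestures at this (``bounds the dimension of each successive central kernel'') but gives no mechanism for producing such bounds, so the proposal is a correct reduction of the proposition to its hard part, not a proof of it.
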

We do not present the proof of Proposition~\ref{kdvnilp} in this paper, 
because the explicit structure of nilpotent ideals of $\fdn^\ocn(\CE,a)$ 
is not needed for the main applications to B\"acklund transformations. 
See Remark~\ref{nsideals} below for a discussion of this.
\end{remark}

For any constants $e_1,e_2,e_3\in\Com$, 
consider the Krichever-Novikov equation~\cite{krich80,svin-sok83}
\begin{equation}
\lb{kneqprop}
  \hkneq(e_1,e_2,e_3)=\left\{
 u_t=u_{xxx}-\frac32\frac{u_{xx}^2}{u_x}+
 \frac{(u-e_1)(u-e_2)(u-e_3)}{u_x},\ \quad u=u(x,t)\right\}.
\end{equation}
To study the algebras $\fdn^\ocn(\CE,a)$ for this equation, 
we need some auxiliary constructions.

Let $\mathbb{C}[v_1,v_2,v_3]$ be the algebra of polynomials 
in the variables $v_1$, $v_2$, $v_3$. 
Let $e_1,e_2,e_3\in\Com$ be such that $e_1\neq e_2\neq e_3\neq e_1$. 
Consider the ideal $\mathcal{I}_{e_1,e_2,e_3}\subset\mathbb{C}[v_1,v_2,v_3]$ generated by the
polynomials
\begin{equation}
  \label{elc}
  v_i^2-v_j^2+e_i-e_j,\qquad\qquad i,\,j=1,2,3.
\end{equation}

Set 
$$
E_{e_1,e_2,e_3}=\mathbb{C}[v_1,v_2,v_3]/\mathcal{I}_{e_1,e_2,e_3}.
$$ 
In other words, $E_{e_1,e_2,e_3}$ is the commutative associative algebra of regular 
functions on the algebraic curve 
in $\mathbb{C}^3$ defined by the polynomials~\eqref{elc}.
It is easy to check that this curve is nonsingular and is of genus~$1$. 

We have the natural homomorphism $\mathbb{C}[v_1,v_2,v_3]\to E_{e_1,e_2,e_3}$.
The image of $v_i\in\mathbb{C}[v_1,v_2,v_3]$ in~$E_{e_1,e_2,e_3}$ 
is denoted by $\bar v_i\in E_{e_1,e_2,e_3}$ for $i=1,2,3$.

Consider also a basis $x_1$, $x_2$, $x_3$ of the Lie algebra
$\mathfrak{so}_3(\mathbb{C})$ such that 
\begin{equation*}
[x_1,x_2]=x_3,\qquad [x_2,x_3]=x_1,\qquad [x_3,x_1]=x_2. 
\end{equation*}
We endow the space $\mathfrak{so}_3(\mathbb{C})\otimes_\mathbb{C} E_{e_1,e_2,e_3}$ with 
the following Lie algebra structure 
$$
[y_1\otimes h_1,\,y_2\otimes h_2]=[y_1,y_2]\otimes h_1h_2,\qquad\quad 
y_1,y_2\in\mathfrak{so}_3(\mathbb{C}),\qquad\quad h_1,h_2\in E_{e_1,e_2,e_3}.
$$

Denote by $\mR_{e_1,e_2,e_3}$ the Lie subalgebra of $\mathfrak{so}_3(\mathbb{C})\otimes_\mathbb{C} E_{e_1,e_2,e_3}$ generated by the elements
$$
x_i\otimes\bar v_i\,\in\,\mathfrak{so}_3(\mathbb{C})\otimes_\mathbb{C} E_{e_1,e_2,e_3},\qquad\qquad i=1,2,3.
$$
It is easily seen that the Lie algebra $\mR_{e_1,e_2,e_3}$ is infinite-dimensional. 
According to~\cite{ll}, the Wahlquist-Estabrook prolongation algebra of the 
anisotropic Landau-Lifshitz equation is isomorphic to the 
direct sum of $\mR_{e_1,e_2,e_3}$ and a $2$-dimensional abelian Lie algebra.

According to Proposition~\ref{fdockn} below, the algebra $\mR_{e_1,e_2,e_3}$ 
appears also in the structure of the algebras $\fdn^\ocn(\CE,a)$ 
for the Krichever-Novikov equation~\er{kneqprop}.
A proof of Proposition~\ref{fdockn} is sketched in~\cite{cfg}.
\begin{proposition}[\cite{cfg}]
\lb{fdockn}
For any constants $e_1,e_2,e_3\in\Com$, 
consider the Krichever-Novikov equation $\hkneq(e_1,e_2,e_3)$ given by~\er{kneqprop}.
Let $\CE$ be the infinite prolongation of this equation. Let $a\in\CE$.
Then 
\begin{itemize}
 \item the algebra $\fdn^0(\CE,a)$ is zero,
\item for any $\ocn\ge 2$, the kernel of the surjective homomorphism $\fdn^{\ocn}(\CE,a)\to\fdn^{1}(\CE,a)$ 
from~\er{intfdoc1} is nilpotent,
\item if $e_1\neq e_2\neq e_3\neq e_1$, then $\fdn^1(\CE,a)$ is isomorphic to $\mR_{e_1,e_2,e_3}$.
\end{itemize}
\end{proposition}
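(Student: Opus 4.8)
The plan is to read off the presentation of $\fdn^{\ocn}(\CE,a)$ by generators and relations from Section~\ref{csev} and to specialize it to the Krichever--Novikov equation~\er{kneqprop}. Recall that this presentation is built from the gauge normal form of a ZCR: the generators correspond to the jet coefficients of the normal-form $x$-part $\al$ at the point $a$, while the relations encode the compatibility (zero-curvature) condition modulo~\er{kneqprop}. Since~\er{kneqprop} is of third order, the $t$-part of an order-$\ocn$ ZCR depends on jets up to order $\ocn+2$, and reading the compatibility condition order by order in the top jets produces the relations among the generators. I would treat the three items in turn, reusing the general machinery of Section~\ref{csev} as much as possible.

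For the first item I would work in order $0$, where $\al$ depends only on $w$, $x$, $t$, $u$. Matching the coefficient of $u_{xxx}$ in the zero-curvature condition forces the $t$-part to be affine in $u_{xx}$ with leading coefficient $\al_u$; the singular monomial $u_{xx}^2/u_x$ then arises \emph{only} from $\al_u$ times the right-hand side of~\er{kneqprop}, so $\al_u=0$. Hence the normal form of any order-$0$ ZCR has $x$-part independent of the jet variables, which by the construction of Section~\ref{csev} gives $\fdn^0(\CE,a)=0$.

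The heart of the matter, and the main obstacle, is the third item. Here I would first exhibit the explicit order-$1$ ZCR of~\er{kneqprop} with values in $\mathfrak{so}_3(\Com)\otimes_{\Com}E_{e_1,e_2,e_3}$; this is the ZCR underlying the known relation between the Krichever--Novikov equation and the anisotropic Landau--Lifshitz equation, whose Wahlquist--Estabrook algebra is $\mR_{e_1,e_2,e_3}$ plus a $2$-dimensional abelian summand~\cite{ll}. I would check that this ZCR is already in the normal form of Section~\ref{csev}, so that its coefficients are precisely the generators of $\fdn^1(\CE,a)$ and are identified with the elements $x_i\otimes\bar v_i$. The delicate step is to verify that, after eliminating $u_{xx}$ and $u_{xxx}$ via~\er{kneqprop}, the relations imposed by the compatibility condition coincide exactly with the quadratic relations~\er{elc}, $\bar v_i^{\,2}-\bar v_j^{\,2}+e_i-e_j=0$, that cut out the curve $E_{e_1,e_2,e_3}$, with no further relations. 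The hypothesis $e_1\neq e_2\neq e_3\neq e_1$ enters exactly here: it guarantees that these quadrics meet in a nonsingular genus-$1$ curve and that the subalgebra generated by the $x_i\otimes\bar v_i$ is all of $\mR_{e_1,e_2,e_3}$. Proving that no spurious relations occur---equivalently, that the presentation of $\fdn^1(\CE,a)$ is not a proper quotient of $\mR_{e_1,e_2,e_3}$---is the technically heaviest point and should require the uniqueness of the gauge normal form from Section~\ref{csev}.

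Finally, for the second item I would argue as in the first part of Theorem~\ref{kntkdvtypeth}. Passing from $\fdn^{\ocn-1}(\CE,a)$ to $\fdn^{\ocn}(\CE,a)$ amounts to allowing $\al$ to depend on the single extra jet variable $u_\ocn$, and for $\ocn\ge 2$ the new generators obey derivative-type relations that place $\ker\bigl(\fdn^{\ocn}(\CE,a)\to\fdn^{\ocn-1}(\CE,a)\bigr)$ in the center of $\fdn^{\ocn}(\CE,a)$. Writing $N_j=\ker\bigl(\fdn^{\ocn}(\CE,a)\to\fdn^{j}(\CE,a)\bigr)$, this centrality yields $[\fdn^{\ocn}(\CE,a),N_j]\subseteq N_{j+1}$ for every $j\ge 1$, so the lower central series of $K=\ker\bigl(\fdn^{\ocn}(\CE,a)\to\fdn^{1}(\CE,a)\bigr)=N_1$ satisfies $C^r(K)\subseteq N_{r+1}$ and terminates at $N_\ocn=0$; hence $K$ is nilpotent. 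Note that centrality is available only from level $2$ onward, since $\fdn^0(\CE,a)=0$ makes the bottom map $\fdn^1(\CE,a)\to\fdn^0(\CE,a)$ have the non-abelian algebra $\fdn^1(\CE,a)$ as its kernel; this is exactly why the statement is restricted to $\ocn\ge 2$ with base $\fdn^1(\CE,a)$ rather than $\fdn^0(\CE,a)$.
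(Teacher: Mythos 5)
You should first note a point of comparison that affects the whole review: this paper does not prove Proposition~\ref{fdockn} at all. The proposition is imported from the preprint~\cite{cfg} (``A proof of Proposition~\ref{fdockn} is sketched in~\cite{cfg}''), and the only indication given here is the remark that the proof rests on the $\mathfrak{so}_3$-valued zero-curvature representation of~\er{kneqprop} parametrized by the elliptic curve. Measured against that indication and against the paper's own KdV-type analysis in Section~\ref{seckdvtype}, your architecture is the natural one. Your order-$0$ argument (the coefficient of $u_{xx}^2/u_x$ forces $\pd\ga/\pd u_0=0$) is essentially right, but it needs finishing: once $\ga$ is independent of the jet variables you must still use the normal-form relation~\er{aukak} to get $\ga=0$, and then $D_x(\gb)=0$ together with~\er{bxx0} to kill the $\gb$-generators, since those are generators of $\fdn^0(\CE,a)$ as well. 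Your reduction of nilpotency to centrality --- the filtration $N_j=\ker\bigl(\fdn^\ocn(\CE,a)\to\fdn^j(\CE,a)\bigr)$ with $C^r(K)\subseteq N_{r+1}$ and $N_\ocn=0$ --- is correct and is exactly the induction the paper itself runs in the proof of Theorem~\ref{thcenter}.

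There are, however, two genuine gaps. First, for the second item you propose to ``argue as in the first part of Theorem~\ref{kntkdvtypeth}'', but that theorem is proved only for equations $u_t=u_{xxx}+f(u,u_x)$, and the Krichever--Novikov equation is not of this form: its right-hand side depends on $u_{xx}$. The computations of Section~\ref{seckdvtype} use $\pd f/\pd u_2=0$ throughout --- every term $D_x^k\bigl(f(u_0,u_1)\bigr)$ with $k\le\ocn$ is independent of $u_{\ocn+2}$, which is what makes \er{buq2a}--\er{ga210} come out --- whereas for~\er{kneqprop} the same differentiations produce extra terms containing $(\pd f/\pd u_2)\cdot\pd(\ga)/\pd u_\ocn$. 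The difference is not cosmetic: for KdV-type equations centrality of $\ker\vf_\ocn$ holds already at $\ocn=1$, while for~\er{kneqprop} it must fail at $\ocn=1$ (otherwise $\fdn^1(\CE,a)$ would be abelian, contradicting the third item, since $\mR_{e_1,e_2,e_3}$ is not abelian). So centrality of $\ker\bigl(\fdn^{j+1}(\CE,a)\to\fdn^{j}(\CE,a)\bigr)$ for $j\ge 1$ is a new computation specific to~\er{kneqprop}, not an adaptation one may take for granted. Second, in the third item the heavy step is not where you locate it. Uniqueness in Theorem~\ref{evcov} only says that the action attached to a fixed covering is unique up to local isomorphism; it cannot show that the surjection $\fdn^1(\CE,a)\to\mR_{e_1,e_2,e_3}$ induced by the explicit elliptic ZCR has zero kernel. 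What is actually needed is the Krichever--Novikov analogue of the order-$1$ normal-form analysis of Section~\ref{seckdvtype}: solve the formal compatibility equations to express all generators of $\fdn^1(\CE,a)$ through finitely many elements satisfying precisely the commutation relations of the $x_i\otimes\bar v_i$, and then invoke the known structure of $\mR_{e_1,e_2,e_3}$ (results of the kind established in~\cite{ll,mll-2012}) to exclude a proper quotient. Relatedly, saying that the relations ``coincide with the quadratic relations~\er{elc}'' conflates the commutative algebra $E_{e_1,e_2,e_3}$ with the Lie algebra $\mR_{e_1,e_2,e_3}$: the latter is a proper subalgebra of $\mathfrak{so}_3(\Com)\otimes_\Com E_{e_1,e_2,e_3}$, and its presentation as an abstract Lie algebra is not given by~\er{elc}; identifying it is itself a nontrivial theorem.
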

\begin{remark}
The proof of Proposition~\ref{fdockn} uses the well-known fact that 
the Krichever-Novikov equation~\er{kneqprop} possesses an $\mathfrak{so}_3$-valued 
zero-curvature representation parametrized by the above-mentioned curve.
\end{remark}
\begin{remark}
As has been said above, 
for some evolution PDEs the algebra $\fdn^{0}(\CE,a)$ is isomorphic to a subalgebra of 
the Wahlquist-Estabrook prolongation algebra. 

The algebras $\fdn^{\ocn}(\CE,a)$ for $\ocn\ge 1$ cannot be obtained by the classical Wahlquist-Estabrook prolongation method, because the main idea behind 
the definition of $\fdn^{\ocn}(\CE,a)$  is based on the use of gauge transformations,  
while the Wahlquist-Estabrook prolongation method does not consider gauge transformations.  

According to Proposition~\ref{fdockn}, 
for the Krichever-Novikov equation~\er{kneqprop} 
we have $\fdn^{0}(\CE,a)=0$ and $\dim\fdn^{\ocn}(\CE,a)=\infty$ for $\ocn\ge 1$ 
(in the case $e_1\neq e_2\neq e_3\neq e_1$). 
It is easy to show that the classical Wahlquist-Estabrook prolongation algebra is trivial 
for the Krichever-Novikov equation.  
Thus in this example the algebras $\fdn^{\ocn}(\CE,a)$ are much more interesting 
than the Wahlquist-Estabrook prolongation algebra.   
\end{remark}

As another example, 
we consider a multicomponent generalization of the Landau-Lifshitz equation 
from~\cite{mll,skr-jmp}.  
To present this PDE, we need some notation.
Fix an integer $n\ge 3$. 
For any $n$-dimensional vectors 
$V={(v^1,\dots,v^n)}$ and $Y={(y^1,\dots,y^n)}$, set 
$\langle V,Y\rangle=\sum_{i=1}^nv^iy^i$.

Let $r_1,\dots,r_n\in\Com$ be such that $r_i\neq r_j$ for all $i\neq j$. 
Denote by $R=\mathrm{diag}\,(r_1,\dots,r_n)$ the diagonal $(n\times n)$-matrix 
with entries $r_i$.
Consider the PDE
\begin{equation}
\label{mainint}
S_t=\Big(S_{xx}+\frac32\langle S_x,S_x\rangle S\Big)_x+\frac32\langle S,RS\rangle S_x,
\qquad\,\quad \langle S,S\rangle=1,\qquad\quad
R=\mathrm{diag}\,(r_1,\dots,r_n),
\end{equation} 
where $S=\big(s^1(x,t),\dots,s^n(x,t)\big)$ 
is a column-vector of dimension~$n$, and $s^i(x,t)$ take values in $\Com$.

System~\eqref{mainint} was introduced in~\cite{mll}. 
According to~\cite{mll}, 
for $n=3$ it coincides with the higher symmetry (the commuting flow) 
of third order for the Landau-Lifshitz equation. 
Thus~\eqref{mainint} can be regarded as an $n$-component generalization of the Landau-Lifshitz equation. 

The paper~\cite{mll} considers also the following algebraic curve 
\begin{equation}
\label{curve}
\la_i^2-\la_j^2=r_j-r_i,\qquad\qquad i,j=1,\dots,n,
\end{equation}
in the space $\Com^n$ with coordinates $\la_1,\dots,\la_n$.
According to~\cite{mll}, this curve is of genus ${1+(n-3)2^{n-2}}$, 
and system~\eqref{mainint} possesses a zero-curvature representation (Lax pair) 
parametrized by points of this curve. 

System~\eqref{mainint} has an infinite number of symmetries, 
conservation laws~\cite{mll}, and an auto-B\"acklund transformation~\cite{ll-backl}. 
Soliton-like solutions of~\eqref{mainint} are presented in~\cite{ll-backl}. 
In~\cite{skr-jmp} system~\eqref{mainint}
and its symmetries are constructed by means of the Kostant--Adler scheme.



Denote by $\mathfrak{gl}_{n+1}(\Com)$ the space of matrices of 
size $(n+1)\times(n+1)$ with entries from~$\Com$.
Let $E_{i,j}\in\mathfrak{gl}_{n+1}(\Com)$ be the matrix with 
$(i,j)$-th entry equal to 1 and all other entries equal to zero. 

Let $\mathfrak{so}_{n,1}\subset\mathfrak{gl}_{n+1}(\Com)$ be 
the Lie algebra of the matrix Lie group $\mathrm{O}(n,1)$, 
which consists of linear transformations that preserve the standard 
bilinear form of signature~$(n,1)$.
The algebra $\mathfrak{so}_{n,1}$ has the following basis
$$
E_{i,j}-E_{j,i},\qquad i<j\le n,\qquad\qquad 
E_{l,n+1}+E_{n+1,l},\qquad l=1,\dots,n.
$$
We regard $\la_1,\dots,\la_n$ as abstract variables 
and consider the algebra $\Com[\la_1,\dots,\la_n]$ 
of polynomials in $\la_1,\dots,\la_n$. 
Let $\ipol\subset\Com[\la_1,\dots,\la_n]$ be the ideal  
generated by $\la_i^2-\la_j^2+r_i-r_j$ for $i,j=1,\dots,n$. 
 
Consider the quotient algebra $\qalg=\Com[\la_1,\dots,\la_n]/\ipol$, which 
is isomorphic to the 
algebra of polynomial functions on the algebraic curve~\er{curve}. 

The space $\mathfrak{so}_{n,1}\otimes_\Com\qalg$ 
is an infinite-dimensional Lie algebra over $\Com$ 
with the Lie bracket 
$$
[M_1\otimes h_1,\,M_2\otimes h_2]=[M_1,M_2]\otimes h_1h_2,
\qquad\qquad 
M_1,M_2\in \mathfrak{so}_{n,1},\qquad\qquad h_1,h_2\in \qalg.
$$
We have the natural homomorphism 
$\xi\cl\Com[\la_1,\dots,\la_n]\to\Com[\la_1,\dots,\la_n]/\ipol=\qalg$. 
Set $\hat\la_i=\xi(\la_i)\in \qalg$. 

Consider the following elements of ${\mathfrak{so}_{n,1}\otimes \qalg}$
\begin{equation}
\notag
Q_i=(E_{i,n+1}+E_{n+1,i})\otimes\hat\la_i,
\qquad\qquad i=1,\dots,n.
\end{equation}
Denote by $L(n)\subset\mathfrak{so}_{n,1}\otimes \qalg$ the Lie subalgebra  
generated by~$Q_1,\dots,Q_n$. 

Since $\hat{\la}_i^2-\hat\la_j^2+r_i-r_j=0$ in $\qalg$,
the element 
$\hat\la=\hat\la_i^2+r_i\in \qalg$ does not depend on $i$.  

For $i,j\in\{1,\dots,n\}$ and $k\in\zsp$, 
consider the following elements of 
${\mathfrak{so}_{n,1}\otimes_\Com \qalg}$  
$$
Q^{2k-1}_i=(E_{i,n+1}+E_{n+1,i})\otimes\hat\la^{k-1}\hat\la_i,\qquad\qquad
Q^{2k}_{ij}=(E_{i,j}-E_{j,i})\otimes\hat\la^{k-1}\hat\la_i\hat\la_j. 
$$
According to~\cite{mll-2012}, the elements 
\beq
\notag
Q^{2k-1}_l,\quad\qquad Q^{2k}_{ij},\quad\qquad i,j,l\in\{1,\dots,n\},\quad\qquad 
i<j,\quad\qquad k\in\zsp,
\ee
form a basis of $L(n)$. 
Note that the algebra $L(n)$ is very similar to  
Lie algebras that were studied in~\cite{mll,skr,skr-jmp}. 


The following result is presented in~\cite{gll-2011}.
\begin{proposition}[\cite{gll-2011}]
Let $\CE$ be the infinite prolongation of system~\er{mainint} for $n\ge 3$. 
Let $a\in\CE$.

The Lie algebras $\fd^\oc(\CE,a)$ have the following structure.

The algebra $\fd^0(\CE,a)$ is isomorphic to $L(n)$.

There is a solvable ideal $\mathcal{I}$ of $\fd^{1}(\CE,a)$ 
such that $\fd^{1}(\CE,a)/\mathcal{I}\cong L(n)\oplus\mathfrak{so}_{n-1}$, 
where $\mathfrak{so}_{n-1}$ is the Lie algebra 
of skew-symmetric $(n-1)\times(n-1)$ matrices. 
The homomorphism $\fd^{1}(\CE,a)\to\fd^0(\CE,a)$ from~\er{intfdoc1} coincides 
with the composition 
\beq
\lb{fd1lnso}
\fd^{1}(\CE,a)\to\fd^{1}(\CE,a)/\mathcal{I}\cong L(n)\oplus\mathfrak{so}_{n-1}
\to L(n)\cong \fd^0(\CE,a). 
\ee

For any $k\ge 2$, 
the kernel of the homomorphism $\fd^{k}(\CE,a)\to\fd^{k-1}(\CE,a)$ from~\er{intfdoc1} is solvable. 

For each $k\ge 1$ there is a surjective homomorphism  
\beq
\lb{vfkfdk}
\mu_k\colon\fd^k(\CE,a)\,\to\,L(n)\oplus\mathfrak{so}_{n-1} 
\ee
such that the kernel of $\mu_k$ is solvable. 

\end{proposition}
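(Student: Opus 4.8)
The plan is to compute $\fd^\ocn(\CE,a)$ directly from its definition by generators and relations (Section~\ref{csev}), specialized to equation~\er{mainint}, and to match the outcome against the explicitly known algebra $L(n)$ and the $\mathfrak{so}_{n,1}$-valued zero-curvature representation parametrized by the curve~\er{curve}. The essential starting observation is that the $x$-part $A$ of this known ZCR depends only on $S$ (zeroth-order jets): writing $A=\sum_i\hat\la_i\,s^i\,(E_{i,n+1}+E_{n+1,i})$, the coefficient attached to $s^i$ is exactly $Q_i$ after the identification $\hat\la_i\leftrightarrow\la_i$ in $\qalg$. Hence this ZCR induces a homomorphism from $\fd^0(\CE,a)$ onto $L(n)$, and to obtain $\fd^0(\CE,a)\cong L(n)$ I would show this map is an isomorphism by verifying that the relations forced on the generating coefficients by the compatibility (zero-curvature) equation, together with the constraint $\langle S,S\rangle=1$, coincide with the relations among the $Q_i$ in $\mathfrak{so}_{n,1}\otimes\qalg$; the explicit basis of $L(n)$ from~\cite{mll-2012} is what makes this bookkeeping finite and checkable.

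For $\fd^1(\CE,a)$ I would write the most general normal-form ZCR whose $x$-part depends on $S$ and $S_x$, impose the compatibility equation modulo~\er{mainint}, and read off the Lie-algebra relations among the coefficients. The generators produced by the new $S_x$-dependence should organize into two families: one reproducing the image $L(n)$, and one generating the isotropy subalgebra of the constraint sphere $\langle S,S\rangle=1$ at the fixed value $S(a)$, which is $\mathfrak{so}_{n-1}$ (recall $\langle S,S_x\rangle=0$, so $S_x$ lives in the tangent space on which this isotropy acts). I would then collect the remaining coefficients into a solvable ideal $\mathcal{I}$ and verify $\fd^1(\CE,a)/\mathcal{I}\cong L(n)\oplus\mathfrak{so}_{n-1}$. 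The factorization~\er{fd1lnso} is then automatic: the homomorphism $\fd^1(\CE,a)\to\fd^0(\CE,a)$ forgets the order-$1$ data, hence kills both $\mathcal{I}$ and the $\mathfrak{so}_{n-1}$ summand.

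For the tower $\ocn\ge 2$ I would repeat the generators-and-relations analysis at the top jet order: the step from $\fd^{\ocn-1}(\CE,a)$ to $\fd^\ocn(\CE,a)$ adjoins only the coefficients attached to the highest jet $S_\ocn$ in $A$, and the compatibility relations should force these new elements to bracket trivially modulo lower-order terms, so that $\ker\bigl(\fd^\ocn(\CE,a)\to\fd^{\ocn-1}(\CE,a)\bigr)$ is solvable. This is the same phenomenon that, for scalar KdV-type equations, yields the nilpotency in Theorem~\ref{kntkdvtypeth}, but it must be rerun for~\er{mainint}. The surjections $\mu_k$ of~\er{vfkfdk} are then obtained by composing $\fd^k(\CE,a)\to\fd^1(\CE,a)$ with $\fd^1(\CE,a)\to\fd^1(\CE,a)/\mathcal{I}\cong L(n)\oplus\mathfrak{so}_{n-1}$; since an extension of a solvable algebra by a solvable algebra is solvable, the kernel of $\mu_k$, being a successive extension of the solvable kernels together with $\mathcal{I}$, is solvable.

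The hard part will be the $\ocn=1$ step. Isolating the solvable ideal $\mathcal{I}$ and proving that the genuinely non-solvable new structure is exactly $\mathfrak{so}_{n-1}$, rather than something larger, requires a careful treatment of the zero-curvature relations in the presence of the nonholonomic constraint $\langle S,S\rangle=1$ and a concrete identification of the isotropy action on the tangent directions $S_x$. Controlling the interplay of this constraint with the gauge freedom is where the real work lies; by comparison, the zeroth-order identification and the higher-jet solvability are more routine once the order-$1$ picture has been pinned down.
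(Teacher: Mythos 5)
A caveat before the comparison: the paper does not actually prove this proposition --- it is quoted from the preprint~\cite{gll-2011}, so there is no internal proof to measure your attempt against, only the surrounding remarks and the analogous computation for KdV-type equations in Section~\ref{seckdvtype}. Judged on its own terms, your overall strategy --- compute $\fd^\ocn(\CE,a)$ from the generators-and-relations definition of Section~\ref{csev}, anchor $\fd^0(\CE,a)$ by the known $\mathfrak{so}_{n,1}$-valued Lax pair and the explicit basis of $L(n)$ from~\cite{mll-2012}, then climb the tower order by order --- is the natural one and is almost certainly the skeleton of what~\cite{gll-2011} does. Two pieces of your argument are genuinely correct as stated: the identification of the $x$-part of the known ZCR with $\sum_i s^iQ_i$, which yields surjectivity onto $L(n)$ once the ZCR is gauged into normal form; and the soft argument for~\er{vfkfdk}, since $\ker\mu_k$ is a finite tower of extensions whose successive quotients embed into the solvable algebras $\ker\bigl(\fd^{j}(\CE,a)\to\fd^{j-1}(\CE,a)\bigr)$ and $\mathcal{I}$, and solvability is closed under such finite extensions.

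The genuine gaps sit exactly where the content of the proposition lies. First, the appearance of $\mathfrak{so}_{n-1}$ at order $1$ is not derived but guessed, via the isotropy algebra of $S(a)$ on the sphere; note that the paper explicitly warns that ``the algebra $\mathfrak{so}_{n-1}$ \dots does not come from this zero-curvature representation,'' so no soft argument from the known Lax pair or from the geometry of the constraint alone can be expected to produce it --- one must write out the order-$1$ normal form, extract the bracket relations among the new generators from~\er{gc}, and exhibit simultaneously the quotient $L(n)\oplus\mathfrak{so}_{n-1}$ and the solvability of $\mathcal{I}$, none of which is done (your own text concedes this is ``where the real work lies''). Second, the claim that for $k\ge 2$ the new top-jet generators ``bracket trivially modulo lower-order terms'' is asserted by analogy with Theorem~\ref{kntkdvtypeth}, but the proof of that theorem relies on structural identities specific to the scalar equation $u_t=u_{xxx}+f(u,u_x)$ --- for instance~\er{buq2a} and~\er{ga210} --- which must be re-derived for~\er{mainint}; analogy is not transfer. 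Third, there is a framework mismatch you never address: because of the constraint $\langle S,S\rangle=1$, system~\er{mainint} is not literally of the form~\er{sys_intr} to which the definition of $\fd^\ocn(\CE,a)$ applies, so before any computation one must either solve the constraint by local coordinates on the sphere (working with $\nv=n-1$ unconstrained components) or extend the definitions to constrained evolution systems; treating $\langle S,S\rangle=1$ as one more relation to ``impose'' alongside the zero-curvature equation is not how the formalism of Section~\ref{csev} is set up, and the shape of the relations changes under this reduction. As it stands, your proposal is a sound research plan that identifies the right objects and the right difficulties, but it is not a proof.
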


The algebra $L(n)$ in~\er{fd1lnso},~\er{vfkfdk} essentially comes from 
the zero-curvature representation (Lax pair) for system~\er{mainint} constructed 
in~\cite{mll,skr-jmp}. The algebra $\mathfrak{so}_{n-1}$ in~\er{fd1lnso},~\er{vfkfdk} 
does not come from this zero-curvature representation.

Several more results on the structure of the algebras $\fdn^{\ocn}(\CE,a)$ for scalar 
evolution equations are described in~\cite{lie-scal}.

\begin{remark}
\lb{nsideals}
As has been said above, for equations~\er{intkdveq}, \er{intkdv}, \er{kneqprop}, \er{mainint} 
the algebras $\fdn^{\ocn}(\CE,a)$ contain some nilpotent or solvable ideals. 
The explicit structure of these ideals is not completely clear.

For the main applications to B\"acklund transformations, it is sufficient to know that these ideals are solvable.
For example, in Subsection~\ref{subsecbt} we discuss 
Proposition~\ref{knprop} about B\"acklund transformations, which is proved in~\cite{mpiprep10}.
The proof of this result in~\cite{mpiprep10} uses the quotient algebras $\fdn^{\ocn}(\CE,a)/\mathcal{S}$, 
where $\mathcal{S}$ is the sum of all solvable ideals of $\fdn^{\ocn}(\CE,a)$. 
So the explicit structure of solvable ideals of $\fdn^{\ocn}(\CE,a)$ is not needed for such results.
 
\end{remark}

\begin{remark}
\lb{compatcond}
Combining~\er{intcov} with~\er{sys_intr}, we get 
\beq
\lb{forwxt}
\frac{\pd^2 w^j}{\pd x\pd t}=\sum_{l=1}^\dc\frac{\pd \al^j}{\pd w^l} w^l_t+
D_t(\al^j),\qquad\qquad
\frac{\pd^2 w^j}{\pd t\pd x}=\sum_{l=1}^\dc\frac{\pd \beta^j}{\pd w^l} w^l_x+
D_x(\beta^j),
\ee
where 
\beq
\lb{evdxdt}
D_x=\frac{\pd}{\pd x}+\sum_{\substack{i=1,\dots,\nv,\\ k\ge 0}} u^i_{k+1}\frac{\pd}{\pd u^i_k},\qquad\qquad
D_t=\frac{\pd}{\pd t}+\sum_{\substack{i=1,\dots,\nv,\\ k\ge 0}} D_x^k(F^i)\frac{\pd}{\pd u^i_k}
\ee
are the total derivative operators corresponding to~\er{sys_intr}. 
Using~\er{intcov} and \er{forwxt}, one obtains that the identity 
$\dfrac{\pd^2 w^j}{\pd x\pd t}=\dfrac{\pd^2 w^j}{\pd t\pd x}$ is equivalent to 
\beq
\lb{compat}
\sum_{l=1}^\dc\frac{\pd \al^j}{\pd w^l}\beta^l+D_t(\al^j)=
\sum_{l=1}^\dc\frac{\pd \beta^j}{\pd w^l}\al^l+D_x(\beta^j),\quad\qquad j=1,\dots,\dc.
\ee
System~\er{intcov} is called \emph{compatible modulo}~\eqref{sys_intr} if equations~\er{compat} 
hold for all values of the variables $x$, $t$, $u^i_k$, $w^l$.

Let $\ocn\in\zp$ be such that 
\beq
\notag
\frac{\pd \al^j}{\pd u^i_s}=0\qquad\forall\,s>\ocn,\qquad\forall\,i=1,\dots,\nv,\qquad\forall\,j=1,\dots,\dc.
\ee
Then equations~\er{compat} imply 
\beq
\notag
\frac{\pd \beta^j}{\pd u^i_r}=0\qquad\forall\,r>\ocn+\eo-1,\qquad\forall\,i=1,\dots,\nv,\qquad\forall\,j=1,\dots,\dc.
\ee

\end{remark}

\begin{remark}
In the case when $\nv=1$ and the functions $F^i$, $\al^j$, $\beta^j$ do not depend on $x$, $t$, 
the problem to describe compatible systems of the form 
\begin{gather}
\lb{covscal}
\begin{aligned}
w^j_x&=
\al^j(w^1,\dots,w^\dc,u,u_x,u_{xx},\dots),\\
w^j_t&=
\beta^j(w^1,\dots,w^\dc,u,u_x,u_{xx},\dots),
\end{aligned}\\
\notag
w^j=w^j(x,t),\qquad\qquad j=1,\dots,\dc, \qquad\qquad u=u^1.
\end{gather}
was studied in~\cite{cfa}.

In the case when~\er{sys_intr} is either the Burgers or the KdV equation, 
the problem to describe compatible systems of the form~\er{covscal} was also studied in~\cite{finley93}. 
However, gauge transformations were not considered in~\cite{finley93}. 
Because of this, the paper~\cite{finley93} had
to impose some additional constraints on the functions $\al^j$, $\beta^j$ in~\er{covscal}. 

\end{remark}

\begin{remark}
As has been said above, any compatible system~\er{intcov} 
is locally gauge equivalent to the system arising from a vector field representation of the algebra $\fdn^{\ocn}(\CE,a)$. (See Section~\ref{csev} for details.)

If the functions $\al^j$, $\be^j$ in~\er{intcov} are linear with respect to $w^1,\dots,w^\dc$, 
then \er{intcov} corresponds to a zero-curvature representation (ZCR) for system~\er{sys_intr}. 
So linear representations of the algebras $\fdn^{\ocn}(\CE,a)$ classify ZCRs up 
to local gauge transformations. 
In the case of scalar evolution equations, 
relations between $\fdn^{\ocn}(\CE,a)$ and ZCRs are studied in~\cite{lie-scal}. 

Some other approaches to the action of gauge transformations on ZCRs are described  
in~\cite{marv93,marv97,marv2010,sak95,sak2004,sebest2008} 
and references therein. 
Let $\mg$ be a finite-dimensional matrix Lie algebra. 
For a given $\mg$-valued ZCR, the papers~\cite{marv93,marv97,sak95} define 
certain $\mg$-valued functions that transform by conjugation when the ZCR transforms by gauge. 
Applications of these functions to construction and classification of 
some types of ZCRs are presented   
in~\cite{marv93,marv97,marv2010,sak95,sak2004,sebest2008} 
and references therein. 

To our knowledge, 
the theory of~\cite{marv93,marv97,marv2010,sak95,sak2004,sebest2008} 
does not produce any infinite-dimensional Lie algebras responsible for ZCRs. 
So this theory does not contain the algebras $\fdn^\ocn(\CE,a)$.
\end{remark}

\subsection{Necessary conditions for existence of B\"acklund transformations}
\lb{subsecbt}

It is well known that B\"acklund transformations 
are one of the main tools of soliton theory 
(see, e.g.,~\cite{backl82} and references therein). 
In this subsection we briefly discuss 
some applications of the algebras $\fdn^{\ocn}(\CE,a)$ to B\"acklund transformations. 


Let $\fdn(\CE,a)$ be the inverse (projective) 
limit of the sequence~\er{intfdoc1}. 
Since $\fdn^{\ocn}(\CE,a)$ in~\er{intfdoc1} are Lie algebras, 
the space $\fdn(\CE,a)$ is a Lie algebra as well.

\begin{remark}
Recall that $\CE$ is the infinite prolongation of system~\eqref{sys_intr}.
In the preprint~\cite{cfg} the algebra~$\fdn(\CE,a)$ is called the 
\emph{fundamental Lie algebra of $\CE$ at the point $a\in\CE$}. 

Let $M$ be a connected finite-dimensional manifold and $b\in M$. 
Consider the fundamental group $\pi_1(M,b)$.
The Lie algebra $\fdn(\CE,a)$ is called fundamental, 
because it is analogous to such fundamental groups 
in the following sense. 

According to~\cite{cfg,nonl}, 
there is a notion of coverings of PDEs such that 
compatible systems~\eqref{intcov} are coverings of~\eqref{sys_intr}.  
It is shown in~\cite{cfg,nonl} that 
this notion is somewhat similar to the classical concept of coverings from topology. 
Recall that
the fundamental group $\pi_1(M,b)$ is responsible for topological coverings of~$M$. 
According to~\cite{cfg}, 
the fundamental Lie algebra~$\fdn(\CE,a)$ plays a similar role  
for coverings~\eqref{intcov} of~\eqref{sys_intr}.  

It is shown in~\cite{cfg} that the algebra $\fdn(\CE,a)$ has some 
coordinate-independent geometric meaning. 

\end{remark}


Since $\fdn(\CE,a)$ is the inverse limit of~\er{intfdoc1}, 
for each $k\in\zp$ we have the natural surjective homomorphism 
$\rho_{k}\colon\fdn(\CE,a)\to\fdn^{k}(\CE,a)$. 
We define a topology on $\fdn(\CE,a)$ as follows.
For each $k\in\zp$ and each $v\in\fdn^k(\CE,a)$,  
the preimage~$\rho_k^{-1}(v)\subset\fdn(\CE,a)$ is, by definition, an open subset 
of $\fdn(\CE,a)$.
Such subsets form a base of the topology on $\fdn(\CE,a)$. 

A Lie subalgebra $H\subset\fdn(\CE,a)$ is said to be \emph{tame} 
if there are $k\in\zp$ and a subalgebra $\mathfrak{h}\subset\fdn^{k}(\CE,a)$ 
such that $H=\rho_{k}^{-1}(\mathfrak{h})$. 
Note that the codimension 
of $H$ in $\fdn(\CE,a)$ is equal to the codimension of $\mathfrak{h}$ 
in $\fdn^{k}(\CE,a)$.

\begin{remark}
It is easily seen that a subalgebra $H\subset\fdn(\CE,a)$ is tame iff 
$H$ is open and closed in $\fdn(\CE,a)$ with respect to the topology on $\fdn(\CE,a)$.
\end{remark}

A proof of the following proposition is sketched in~\cite{cfg}.
\begin{proposition}[\cite{cfg}]
\label{faexbt} 
Let $\CE_1$ and $\CE_2$ be evolution PDEs. 
Suppose that $\CE_1$ and $\CE_2$ are connected by a B\"acklund transformation. 
Then for each $i=1,2$ 
there are a point $a_i\in\CE_i$ and a tame subalgebra $H_i\subset\fdn(\CE_i,a_i)$ 
such that 
\begin{itemize}
\item the subalgebra $H_i$ is of finite codimension in $\fdn(\CE_i,a_i)$,
\item $H_1$ is isomorphic to $H_2$, and this isomorphism is a homeomorphism 
with respect to the topology induced by the embedding $H_i\subset\fdn(\CE_i,a_i)$. 
\end{itemize} 
\end{proposition}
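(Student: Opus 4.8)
The plan is to realize the B\"acklund transformation as a common covering and then to exploit the analogy, emphasized in the Remark above, between $\fdn(\CE,a)$ and the topological fundamental group: a finite-sheeted covering $p\colon\tilde M\to M$ induces an injection $p_*\colon\pi_1(\tilde M,\tilde b)\hookrightarrow\pi_1(M,b)$ onto a subgroup of finite index. First I would recall from \cite{cfg,nonl} that a B\"acklund transformation between $\CE_1$ and $\CE_2$ is given by a PDE $\tilde\CE$, the total space of the transformation, together with two coverings $\tau_i\colon\tilde\CE\to\CE_i$, $i=1,2$, having finite-dimensional fibres. Fixing a point $\tilde a\in\tilde\CE$ and putting $a_i=\tau_i(\tilde a)\in\CE_i$ supplies the required base points.

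The core of the argument is a lemma applied to each leg $\tau_i$ separately: a covering $\tau\colon\tilde\CE\to\CE$ with finite-dimensional fibre $\tilde W$ and $a=\tau(\tilde a)$ gives rise to a tame subalgebra $H\subset\fdn(\CE,a)$ of finite codimension together with an isomorphism $\fdn(\tilde\CE,\tilde a)\cong H$ that is a homeomorphism. To build $H$ I would use the classifying property of $\fdn$: the covering $\tau$, involving jets up to some order $\ocn$, is itself a vector field representation of $\fdn^{\ocn}(\CE,a)$ on $\tilde W$. Let $\mathfrak h\subset\fdn^{\ocn}(\CE,a)$ be the isotropy subalgebra at the point of $\tilde W$ corresponding to $\tilde a$, namely the set of elements whose associated vector field vanishes there; since the bracket of two vector fields vanishing at a point again vanishes there, $\mathfrak h$ is indeed a subalgebra. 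Then $H:=\rho_{\ocn}^{-1}(\mathfrak h)$ is tame by definition, and $\codim H=\codim\mathfrak h\le\dim\tilde W<\infty$ because the evaluation $\fdn^{\ocn}(\CE,a)\to T_{\tilde a}\tilde W$ has image of dimension at most $\dim\tilde W$. The identification $\fdn(\tilde\CE,\tilde a)\cong H$ I would obtain by comparing what the two algebras classify: coverings of $\tilde\CE$ correspond, after composition with $\tau$, to coverings of $\CE$ dominating $\tilde\CE$, exactly as intermediate covers of $M$ through $\tilde M$ correspond to subgroups contained in $p_*\pi_1(\tilde M,\tilde b)$. Tracking the jet orders makes this correspondence compatible with the projections $\rho_k$, so the isomorphism is a homeomorphism onto $H$.

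Granting the lemma, the proposition follows at once. Applying it to $\tau_1$ and to $\tau_2$ yields tame subalgebras $H_i\subset\fdn(\CE_i,a_i)$ of finite codimension and homeomorphic isomorphisms $\sigma_i\colon\fdn(\tilde\CE,\tilde a)\xrightarrow{\sim}H_i$. The composite $\sigma_2\circ\sigma_1^{-1}\colon H_1\to H_2$ is then a Lie algebra isomorphism that is simultaneously a homeomorphism for the topologies induced by the embeddings $H_i\subset\fdn(\CE_i,a_i)$, which is precisely the assertion.

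The main obstacle is the lemma, and within it the identification of the isotropy subalgebra $H$ with $\fdn(\tilde\CE,\tilde a)$ together with the homeomorphism claim. In the topological case injectivity of $p_*$ comes almost for free from homotopy lifting; here one must instead establish the functoriality of $\fdn$ with respect to coverings at each finite level, producing compatible maps $\fdn^{k}(\tilde\CE,\tilde a)\to\fdn^{k'}(\CE,a)$ that intertwine the surjections in \er{intfdoc1}. Only such level-wise compatibility yields simultaneously the tameness (the subalgebra being cut out at the single level $\ocn$), the finite-codimension bound coming from $\dim\tilde W<\infty$, and the continuity of the inverse isomorphism. Making this functoriality precise is the technical heart that \cite{cfg} only sketches.
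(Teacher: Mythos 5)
First, a caveat on the comparison itself: this paper does not actually prove Proposition~\ref{faexbt}. The statement is imported from the preprint~\cite{cfg}, and the paper says only that a proof is \emph{sketched} there; so there is no proof in the present text to measure your argument against. Judged on its own terms, your architecture is the natural reading of the fundamental-group analogy the paper describes, and very plausibly the intended one: realize the B\"acklund transformation as a pair of finite-rank coverings $\CE_1\xla{\tau_1}\tilde{\CE}\xra{\tau_2}\CE_2$ of some orders $\ocn_i$, take $a_i=\tau_i(\tilde a)$, and let $H_i=\rho_{\ocn_i}^{-1}(\mathfrak{h}_i)$, where $\mathfrak{h}_i\subset\fdn^{\ocn_i}(\CE_i,a_i)$ is the isotropy subalgebra at the point $\tilde b_i$ of the fibre $\tilde W_i$ corresponding to $\tilde a$. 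The parts of this that live entirely inside the paper's formalism are correct: vector fields vanishing at a point form a subalgebra, so $\mathfrak{h}_i$ is a subalgebra; $H_i$ is tame by definition; and $\codim H_i=\codim\mathfrak{h}_i\le\dim\tilde W_i<\infty$, since $\mathfrak{h}_i$ is the kernel of the linear evaluation map into $T_{\tilde b_i}\tilde W_i$.

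The genuine gap is the lemma you yourself flag: the claim that $\fdn(\tilde{\CE},\tilde a)\cong H_i$ as topological Lie algebras. This is not a technicality to be ``made precise'' later --- it is essentially the whole content of the proposition, and the justification you offer (``comparing what the two algebras classify'') is not yet an argument even in outline. The obstruction is that $\fdn^{k}$ is defined by generators and relations, not by a universal property in a category of coverings, so ``classifying the same objects'' does not formally produce an isomorphism; one needs an actual homomorphism plus a prorepresentability-type statement pinning the topological Lie algebra down up to homeomorphic isomorphism. One direction can be made concrete: a covering of $\tilde{\CE}$ with fibre $V$ composes with $\tau_i$ to a covering of $\CE_i$ with fibre $\tilde W_i\times V$, the resulting action of $\fdn^{k'}(\CE_i,a_i)$ projects onto the action on $\tilde W_i$, so every element of $H_i$ maps to a vector field tangent to $\{\tilde b_i\}\times V$ and restricts to a vector field on $V$, giving an action of $H_i$ on $V$. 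But the converse direction (that every action of $H_i$ comes from a covering of $\tilde{\CE}$), the bookkeeping of jet orders under composition of coverings (needed both for tameness on the $\tilde{\CE}$ side and for continuity of the isomorphism \emph{and} of its inverse), the fact that the classification theorem of Section~\ref{csev} is local and only up to gauge (so independence of the choices of normalization and of $\tilde a$ must be checked), and the connectedness/irreducibility hypotheses on the fibres of a B\"acklund transformation are all substantive issues your sketch leaves untouched. Acknowledging that this is ``the technical heart that \cite{cfg} only sketches'' does not close it: as written, the proposal reduces the proposition to an unproved statement of essentially equal strength.
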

The preprint~\cite{cfg} contains also a more general result 
about PDEs that are not necessarily evolution. 

Proposition~\ref{faexbt} provides 
a necessary condition for two given evolution PDEs 
to be connected by a B\"acklund transformation (BT). 
Using Proposition~\ref{faexbt}, one can prove non-existence of BTs for some PDEs. 

For example, the following result is obtained in~\cite{mpiprep10} by means of this theory. 

For any $e_1,e_2,e_3\in\Com$, 
consider the Krichever-Novikov equation $\hkneq(e_1,e_2,e_3)$ given by~\er{kneqprop} 
and the algebraic curve 
\beq
\lb{ceee}
C(e_1,e_2,e_3)=\Big\{(z,y)\in\Com^2\,\,\Big|\,\,
y^2=(z-e_1)(z-e_2)(z-e_3)\Big\}.
\ee
\begin{proposition}[\cite{mpiprep10}]\label{knprop}
Let $e_1,e_2,e_3,e'_1,e'_2,e'_3\in\Com$ be such that 
$e_i\neq e_j$ and $e'_i\neq e'_j$ for all $i\neq j$. 

If the curve $C(e_1,e_2,e_3)$ is not birationally equivalent to 
the curve $C(e'_1,e'_2,e'_3)$, 
then the equation $\hkneq(e_1,e_2,e_3)$ is not connected with the equation $\hkneq(e'_1,e'_2,e'_3)$ by any BT. 

Also, if $e_1\neq e_2\neq e_3\neq e_1$, then $\hkneq(e_1,e_2,e_3)$ is not 
connected with the KdV equation by any BT. 
\end{proposition}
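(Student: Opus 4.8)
The plan is to turn the birational class of the spectral curve into a B\"acklund-transformation invariant, combining the necessary condition of Proposition~\ref{faexbt} with the computation of $\fdn^\ocn(\CE,a)$ in Proposition~\ref{fdockn}. Suppose, aiming at the contrapositive of the first assertion, that $\hkneq(e_1,e_2,e_3)$ and $\hkneq(e'_1,e'_2,e'_3)$, with infinite prolongations $\CE_1$ and $\CE_2$, are connected by a B\"acklund transformation. By Proposition~\ref{faexbt} there are points $a_i\in\CE_i$ and tame subalgebras $H_i\subset\fdn(\CE_i,a_i)$ of finite codimension, together with an isomorphism $H_1\cong H_2$ that is a homeomorphism. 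I would show that $\fdn(\CE_i,a_i)$ carries a canonical invariant, namely the function field of the genus-one curve attached to $(e_1,e_2,e_3)$, that survives both passage to finite-codimension subalgebras and the isomorphism $H_1\cong H_2$; comparing the two sides then forces a birational equivalence $C(e_1,e_2,e_3)\sim C(e'_1,e'_2,e'_3)$.

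Following Remark~\ref{nsideals}, I extract the invariant at a finite level. Write $H_i=\rho_{k}^{-1}(\mathfrak{h}_i)$ for a common $k\in\zp$ and finite-codimension subalgebras $\mathfrak{h}_i\subset\fdn^{k}(\CE_i,a_i)$, and let $\mathcal{S}^{k}_i$ be the sum of all solvable ideals of $\fdn^{k}(\CE_i,a_i)$. By Proposition~\ref{fdockn} the kernel of $\fdn^{k}(\CE_1,a_1)\to\fdn^1(\CE_1,a_1)$ is nilpotent and $\fdn^1(\CE_1,a_1)\cong\mR=\mR_{e_1,e_2,e_3}$; since $\mR\subset\mathfrak{so}_3(\Com)\otimes_\Com E_{e_1,e_2,e_3}$ is a current algebra over the simple algebra $\mathfrak{so}_3(\Com)\cong\sl_2(\Com)$ with $E_{e_1,e_2,e_3}$ an integral domain, it has trivial solvable radical, whence $\fdn^{k}(\CE_1,a_1)/\mathcal{S}^{k}_1\cong\mR$ and likewise $\fdn^{k}(\CE_2,a_2)/\mathcal{S}^{k}_2\cong\mR'=\mR_{e'_1,e'_2,e'_3}$. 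Because an abstract Lie algebra isomorphism carries the solvable radical onto the solvable radical, the isomorphism $H_1\cong H_2$ should descend to an isomorphism between the images $\bar{\mathfrak{h}}_1\subset\mR$ and $\bar{\mathfrak{h}}_2\subset\mR'$, each of finite codimension; the legitimacy of this descent rests on the fact, established in the next step, that such subalgebras of current algebras again have trivial radical.

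The heart of the argument, and the step I expect to be the main obstacle, is to recover the base curve from a finite-codimension subalgebra of a current algebra $\mathfrak{so}_3(\Com)\otimes E$. Since $\mathfrak{so}_3(\Com)$ is simple, the ideals of $\mathfrak{so}_3(\Com)\otimes E$ are exactly $\mathfrak{so}_3(\Com)\otimes I$ for ideals $I\subset E$; the real work is to prove that every finite-codimension subalgebra contains $\mathfrak{so}_3(\Com)\otimes I$ for some cofinite ideal $I$, has trivial radical, and induces only \emph{standard} isomorphisms, i.e. ones compatible with the tensor decomposition. Granting this, a cofinite ideal in the one-dimensional domain $E_{e_1,e_2,e_3}$ cuts out only finitely many points, so the fraction field of $E_{e_1,e_2,e_3}$ is recovered intact, and $\bar{\mathfrak{h}}_1\cong\bar{\mathfrak{h}}_2$ forces $\mathrm{Frac}(E_{e_1,e_2,e_3})\cong\mathrm{Frac}(E_{e'_1,e'_2,e'_3})$. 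Finally, a pencil-of-quadrics computation (the four degenerate members of the pencil through the curve~\er{elc} occur at branch parameters with the same cross-ratio as $\{e_1,e_2,e_3,\infty\}$) shows that the genus-one curve~\er{elc} is birationally equivalent to $C(e_1,e_2,e_3)$, so $\mathrm{Frac}(E_{e_1,e_2,e_3})$ is the function field of $C(e_1,e_2,e_3)$; hence $C(e_1,e_2,e_3)\sim C(e'_1,e'_2,e'_3)$, which is the first assertion.

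For the second assertion I would run the same construction with $\CE_2$ the KdV equation. By Theorem~\ref{intthkdv} the kernel of $\fdn^{k}(\CE_2,a_2)\to\fdn^0(\CE_2,a_2)$ is nilpotent and $\fdn^0(\CE_2,a_2)$ is $\sl_2(\Com[\la])$ plus a $3$-dimensional abelian summand, so $\fdn^{k}(\CE_2,a_2)/\mathcal{S}^{k}_2\cong\sl_2(\Com[\la])=\sl_2(\Com)\otimes_\Com\Com[\la]$, a current algebra over the affine line $\mathbb{A}^1=\mathrm{Spec}\,\Com[\la]$, whose function field $\Com(\la)$ has genus~$0$. Since $\mathfrak{so}_3(\Com)\cong\sl_2(\Com)$, the distinguishing invariant is again the base curve; as $C(e_1,e_2,e_3)$ has genus~$1$, its function field is not isomorphic to $\Com(\la)$, so by the reconstruction step no finite-codimension subalgebras of the two fundamental algebras can be isomorphic. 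By Proposition~\ref{faexbt} this rules out any B\"acklund transformation between $\hkneq(e_1,e_2,e_3)$ and the KdV equation.
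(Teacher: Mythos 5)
Your overall architecture is the one the paper itself points to: the paper gives no proof of Proposition~\ref{knprop} (it is cited from~\cite{mpiprep10}), but Remark~\ref{nsideals} says that the proof there combines Proposition~\ref{faexbt} with the quotients $\fdn^{\ocn}(\CE,a)/\mathcal{S}$ by the sum of all solvable ideals, together with the structure results of Proposition~\ref{fdockn} and Theorem~\ref{intthkdv} --- exactly your plan. However, two steps that carry essentially all the weight are asserted rather than proved, and one of them is stated for the wrong algebra. First, the reconstruction step, which you yourself call the main obstacle and then invoke with ``granting this,'' \emph{is} the content of the proposition: that finite-codimension subalgebras of the relevant algebra contain a cofinite current ideal, have trivial radical, and admit only isomorphisms compatible with the tensor decomposition is precisely what makes the curve a BT-invariant, and nothing in your text proves it. Worse, the ideal classification you quote (``ideals of $\mathfrak{so}_3(\Com)\otimes E$ are exactly $\mathfrak{so}_3(\Com)\otimes I$'') concerns the full current algebra, whereas Proposition~\ref{fdockn} identifies $\fdn^1(\CE,a)$ with $\mR_{e_1,e_2,e_3}$, the \emph{proper} subalgebra generated by $x_i\otimes\bar v_i$ --- an equivariant (twisted) current algebra whose elements satisfy parity conditions in the $\bar v_i$. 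Its ideals are not of the form $\mathfrak{so}_3(\Com)\otimes I$, and even the triviality of its solvable radical, which you already need in order to identify $\fdn^{k}(\CE_1,a_1)/\mathcal{S}^{k}_1$ with $\mR$, is part of what has to be established. (Your side claim that the curve~\er{elc} is birationally equivalent to $C(e_1,e_2,e_3)$ is true --- it is the classical unramified $(\mathbb{Z}/2)^2$-cover of an elliptic curve, hence the multiplication-by-$2$ model --- so the final comparison of curves is fine once the reconstruction is supplied.)

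Second, the descent from $H_1\cong H_2$ to $\bar{\mathfrak{h}}_1\cong\bar{\mathfrak{h}}_2$ has a genuine hole. The isomorphism provided by Proposition~\ref{faexbt} lives between the preimages $H_i=\rho_k^{-1}(\mathfrak{h}_i)$ inside the projective limits $\fdn(\CE_i,a_i)$, and each $H_i$ contains $\ker\rho_k$, which is only \emph{pro}nilpotent: it is the inverse limit of the nilpotent kernels $\ker\big(\fdn^m(\CE_i,a_i)\to\fdn^k(\CE_i,a_i)\big)$, whose nilpotency degrees grow with $m$ (compare the bound~\er{hhhhk} in Theorem~\ref{thcenter}). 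Hence $\ker\rho_k$ is in general not solvable, so it is \emph{not} contained in the sum of all solvable ideals of $H_i$, and an abstract Lie algebra isomorphism $H_1\to H_2$ has no a priori reason to carry $\ker\rho_k$ onto $\ker\rho_k$; your phrase ``carries the solvable radical onto the solvable radical'' does not apply to these kernels. This is exactly where the homeomorphism clause of Proposition~\ref{faexbt} must enter --- the kernels are closed ideals, distinguishable topologically, and only a topology-preserving isomorphism is forced to respect them (equivalently, one must quotient by closures of solvable ideals, or work with pro-solvable ideals) --- yet the topology is never used in your argument after being mentioned. Without this, the reduction from the fundamental Lie algebras to the finite-level quotients $\mR$, respectively $\sl_2(\Com[\la])$, does not go through, and both assertions of the proposition remain unproved.
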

Similar results are obtained in~\cite{mpiprep10} for the Landau-Lifshitz and 
nonlinear Schr\"odinger equations as well.

BTs of Miura type (differential substitutions) for
the Krichever-Novikov equation $\hkneq(e_1,e_2,e_3)$ are studied in~\cite{sok2013,svin-sok83}. 
According to~\cite{sok2013,svin-sok83}, the equation $\hkneq(e_1,e_2,e_3)$
is connected with the KdV equation by a BT of Miura type 
iff $e_i=e_j$ for some $i\neq j$.

The preprints~\cite{mpiprep10,cfg} and 
Propositions~\ref{faexbt},~\ref{knprop} consider the most general class of BTs, 
which is much larger than the class of 
BTs of Miura type studied in~\cite{sok2013,svin-sok83}.

Let $\CE_1$, $\CE_2$ be PDEs.
We say that $\CE_1$ and $\CE_2$ are \emph{BT-equivalent} 
if $\CE_1$ and $\CE_2$ are connected by a BT. 
It is known that this is a natural equivalence relation on the set of PDEs.

It is also known that, 
if $\CE_1$ and $\CE_2$ are connected by a BT,  
then these PDEs have similar properties. 
Therefore, it makes sense to try to classify PDEs up to BT-equivalence.
Let us consider some examples.

The paper~\cite{sok2013} presents a list of all (up to point transformations) 
scalar evolution equations of the form 
\beq
\lb{utu3}
u_t=u_{xxx}+f(x,u,u_x,u_{xx})
\ee
satisfying certain integrability conditions related 
to generalized symmetries and conservation laws. 
This result was announced in earlier papers of S.~I.~Svinolupov and V.~V.~Sokolov  
(see~\cite{sok2013} for references), but the proof is published in~\cite{sok2013}. 

This list of integrable PDEs~\er{utu3} in~\cite{sok2013} consists of 
so-called $S$-integrable and $C$-integrable equations.
(See~\cite{sok2013} for the definitions of $S$-integrable and $C$-integrable PDEs  
in the context of the above-mentioned integrability conditions. 
For example, the KdV and Krichever-Novikov equations are $S$-integrable.)

The list of $S$-integrable equations of the form~\er{utu3} is relatively long, 
but it can be simplified if one considers classification up to BT-equivalence as follows.


Let $\CE_1$, $\CE_2$ be PDEs of the form~\er{utu3}. 
We say that $\CE_1$ and $\CE_2$ are \emph{isomorphic} if these PDEs  
are connected by an invertible point transformation.
More precisely, we consider analytic point transformations
that are invertible on a nonempty open subset of the space of the variables $x$, $t$, $u$. 

Recall that an invertible point transformation can be regarded as a BT. 
(Although the class of BTs is much larger than the class of point transformations.)
Therefore, if $\CE_1$ and $\CE_2$ are isomorphic, then these PDEs are BT-equivalent. 


Let $\CE$ be an equation of the form~\er{utu3}.
We say that $\CE$ is \emph{of nonsingular Krichever-Novikov type} if there are 
$e_1,e_2,e_3\in\Com$, $e_1\neq e_2\neq e_3\neq e_1$, 
such that $\CE$ is isomorphic to the equation $\hkneq(e_1,e_2,e_3)$.

The following result is presented in~\cite{sok2013} 
(see also~\cite{svin-sok83} for similar results). 
\begin{proposition}[\cite{sok2013}]
\label{propmiura}
Let $\CE$ be an $S$-integrable equation of the form~\er{utu3}.

If $\CE$ is of nonsingular Krichever-Novikov type, then there is $\knpr\in\Com$, 
$\knpr\notin\{0,1\}$, such that $\CE$ is isomorphic to the equation $\hkneq(0,1,\knpr)$.

If $\CE$ is not of nonsingular Krichever-Novikov type, then $\CE$  
is connected with the KdV equation by a BT of Miura type 
\textup{(}a differential substitution\textup{)}. 

For $\knpr\notin\{0,1\}$, the equation $\hkneq(0,1,\knpr)$ is not 
connected with the KdV equation by any BT of Miura type. 
\end{proposition}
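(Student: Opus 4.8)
The plan is to handle the three assertions separately, with the Svinolupov--Sokolov classification of $S$-integrable equations~\er{utu3} as the only heavy input. For the first assertion I would argue by a scaling computation: under the point transformation $x=\lambda\bar x$, $t=\lambda^{3}\bar t$, $u=\lambda^{-4}\bar u+b$, with $\lambda\in\Com\setminus\{0\}$ and $b\in\Com$, the equation $\hkneq(e_1,e_2,e_3)$ is carried into $\hkneq(\tilde e_1,\tilde e_2,\tilde e_3)$ with $\tilde e_i=\lambda^{4}(e_i-b)$; the exponents $3$ and $-4$ are precisely the ones that keep the coefficients of $\bar u_{\bar x\bar x\bar x}$ and of the cubic term equal to~$1$. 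Choosing $b=e_1$ and $\lambda^{4}=(e_2-e_1)^{-1}$, which is possible because $e_1\neq e_2$ and $\Com$ is algebraically closed, normalizes the triple to $(0,1,\knpr)$ with $\knpr=(e_3-e_1)/(e_2-e_1)$. Since $e_3\neq e_1$ and $e_3\neq e_2$, this forces $\knpr\neq 0$ and $\knpr\neq 1$.

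For the second assertion I would invoke the explicit list of~\cite{sok2013}, which exhausts all $S$-integrable equations~\er{utu3} up to point transformations. The idea is to split this list into the nonsingular Krichever--Novikov family and everything else, and then to exhibit for each equation of the second kind an explicit differential substitution relating it to the KdV equation. The delicate members are the \emph{singular} Krichever--Novikov equations, those with two coinciding $e_i$: after the normalization above their cubic degenerates into a square times a linear factor, and for each such degeneration one writes down a Miura-type map onto KdV by hand. This is where the main obstacle lies: everything rests on the completeness of the Svinolupov--Sokolov list, which is itself a substantial formal-symmetry and conservation-law computation performed in~\cite{sok2013,svin-sok83}; I would take it as given, after which the finitely many explicit substitutions and the bookkeeping of the singular cases are routine.

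For the third assertion I would not reprove it directly, since it is immediate from Proposition~\ref{knprop}. For $\knpr\notin\{0,1\}$ the roots $0$, $1$, $\knpr$ of $\hkneq(0,1,\knpr)$ are pairwise distinct, so Proposition~\ref{knprop} already rules out \emph{any} B\"acklund transformation between $\hkneq(0,1,\knpr)$ and the KdV equation, and a Miura-type transformation is a special case. Classically, before such general obstructions were available, one reaches the same conclusion through an invariant of differential substitutions, namely the birational type of the spectral curve attached to the equation, which is an elliptic curve in the nonsingular Krichever--Novikov case and rational for the KdV equation.
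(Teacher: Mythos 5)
First, a structural point: the paper contains no proof of this proposition at all --- it is imported verbatim from \cite{sok2013} (with \cite{svin-sok83} as a related reference) and used as an external input to Theorem~\ref{btsint} --- so there is no internal argument to compare yours against line by line. On its own terms your reconstruction is correct, but its three parts have very different status. The scaling computation for the first assertion is right (the exponents do work out: with $x=\lambda\bar x$, $t=\lambda^{3}\bar t$, $u=\lambda^{-4}\bar u+b$ the coefficients of $\bar u_{\bar x\bar x\bar x}$, of $\bar u_{\bar x\bar x}^{2}/\bar u_{\bar x}$, and of the cubic term all remain $1$, and $\tilde e_i=\lambda^{4}(e_i-b)$), and it is precisely the content of the paper's Lemma~\ref{lemeee}: affine equivalence of $\{e_1,e_2,e_3\}$ yields an isomorphism via $x\mapsto c_1x$, $t\mapsto c_2t$, $u\mapsto c_3u+c_4$. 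For the second assertion you do not give a proof but defer to the completeness of the Svinolupov--Sokolov list and to explicit substitutions in the degenerate cases; that is exactly the citation the paper itself makes, and it is essentially unavoidable, since this assertion \emph{is} the classification theorem of \cite{sok2013}. The genuine divergence is in the third assertion: \cite{sok2013} establishes it by classical means (invariants of differential substitutions, the route you sketch in your closing sentence), whereas you derive it from Proposition~\ref{knprop}. Your deduction is logically sound: the paper states explicitly that the class of BTs covered by Proposition~\ref{knprop} strictly contains the Miura-type ones, and Proposition~\ref{knprop} is proved in \cite{mpiprep10} by the fundamental-Lie-algebra machinery with no appeal to Proposition~\ref{propmiura}, so no circularity enters Theorem~\ref{btsint}. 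But note what this inverts: Remark~\ref{remmiura} stresses that Proposition~\ref{propmiura} is the weak, classical input (Miura-type BTs only) while Proposition~\ref{knprop} is the new, strictly stronger result; deducing the last clause of the former from the latter is correct but overkill, and it makes a statement attributed to \cite{sok2013} depend on machinery that postdates it and is the point of the present paper. Your classical alternative --- the birational type of the associated elliptic versus rational spectral curve as an invariant of differential substitutions --- is the faithful reconstruction of the proof in the cited source.
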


\begin{remark}
\lb{remmiura}
Let $\CE$ be an $S$-integrable PDE of the form~\er{utu3}.
Proposition~\ref{propmiura} implies that $\CE$ is BT-equivalent to either the KdV equation 
or the equation $\hkneq(0,1,\knpr)$ for some $\knpr\notin\{0,1\}$. 
However, Proposition~\ref{propmiura} is not sufficient to conclude 
that some of these PDEs are not BT-equivalent, 
because Proposition~\ref{propmiura} studies only a very particular class of BTs 
(BTs of Miura type). 

To determine which equations are not BT-equivalent, 
we need to use Proposition~\ref{knprop}, which considers the most general class of BTs. 
\end{remark}

Let $e_1,e_2,e_3,e'_1,e'_2,e'_3\in\Com$ be such that $e_1\neq e_2\neq e_3\neq e_1$ and  
$e'_1\neq e'_2\neq e'_3\neq e'_1$.

The set $\{e_1,e_2,e_3\}$ is \emph{affine-equivalent} 
to the set $\{e'_1,e'_2,e'_3\}$ if there are $b_1,b_2\in\Com$, $b_1\neq 0$, so that 
$b_1e_i+b_2\in\{e'_1,e'_2,e'_3\}$ for all $i=1,2,3$. 
In other words, the map $g\cl\Com\to\Com$ given by $g(z)=b_1z+b_2$ satisfies 
$\{g(e_1),g(e_2),g(e_3)\}=\{e'_1,e'_2,e'_3\}$.
Here $\{g(e_1),g(e_2),g(e_3)\}$ and $\{e'_1,e'_2,e'_3\}$ are unordered sets.

The next lemma is easy to check. It follows also from the results of~\cite{sok2013}.
\begin{lemma}
\lb{lemeee}
Let $e_1,e_2,e_3,e'_1,e'_2,e'_3\in\Com$ be such that 
$e_i\neq e_j$ and $e'_i\neq e'_j$ for all $i\neq j$. 

If $\{e_1,e_2,e_3\}$ is affine-equivalent to $\{e'_1,e'_2,e'_3\}$, 
then the equation $\hkneq(e_1,e_2,e_3)$ is isomorphic to $\hkneq(e'_1,e'_2,e'_3)$, 
and this isomorphism is given by a point transformation of the form
$$
x\mapsto c_1 x,\qquad t\mapsto c_2 t,\qquad u\mapsto c_3u+c_4,\qquad
c_1,c_2,c_3,c_4\in\Com,\qquad c_1c_2c_3\neq 0.
$$
\end{lemma}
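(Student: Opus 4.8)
The plan is to verify by a direct substitution that the prescribed scaling-and-shift is the required isomorphism. Given the affine map $g(z)=b_1z+b_2$ with $b_1\neq0$ and $\{g(e_1),g(e_2),g(e_3)\}=\{e_1',e_2',e_3'\}$ coming from the affine equivalence, I would look for constants $c_1,c_2,c_3,c_4\in\Com$ with $c_1c_2c_3\neq0$ so that the point transformation $\bar x=c_1x$, $\bar t=c_2t$, $\bar u=c_3u+c_4$ turns $\hkneq(e_1,e_2,e_3)$ into $\hkneq(e_1',e_2',e_3')$. Viewing $u$ as a function of $x,t$ and $\bar u$ as a function of $\bar x,\bar t$, each $x$-derivative produces a factor $c_1$ and the $t$-derivative a factor $c_2$, so that
\[
u_x=\frac{c_1}{c_3}\bar u_{\bar x},\quad u_{xx}=\frac{c_1^2}{c_3}\bar u_{\bar x\bar x},\quad u_{xxx}=\frac{c_1^3}{c_3}\bar u_{\bar x\bar x\bar x},\quad u_t=\frac{c_2}{c_3}\bar u_{\bar t},
\]
while $u-e_i=\big(\bar u-(c_4+c_3e_i)\big)/c_3$.

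Substituting these into~\er{kneqprop} for $\hkneq(e_1,e_2,e_3)$ and multiplying through by $c_3/c_2$, I would collect the equation in the form
\[
\bar u_{\bar t}=\frac{c_1^3}{c_2}\Big(\bar u_{\bar x\bar x\bar x}-\frac32\frac{\bar u_{\bar x\bar x}^2}{\bar u_{\bar x}}\Big)+\frac{1}{c_1c_2c_3}\frac{\prod_{i=1}^3\big(\bar u-(c_4+c_3e_i)\big)}{\bar u_{\bar x}}.
\]
Here I would stress that the dispersive term $u_{xxx}$ and the Schwarzian-type term $\frac32 u_{xx}^2/u_x$ both acquire the same factor $c_1^3/c_3$, so their ratio $-\tfrac32$ is preserved automatically and imposes no separate condition. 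Comparing with~\er{kneqprop} for $\hkneq(e_1',e_2',e_3')$, the two equations agree iff $c_1^3/c_2=1$ (normalizing the coefficient of $\bar u_{\bar x\bar x\bar x}$) and the monic cubic $\frac{1}{c_1c_2c_3}\prod_i(\bar u-(c_4+c_3e_i))$ equals $\prod_i(\bar u-e_i')$; since both cubics are monic this splits into $c_1c_2c_3=1$ together with $\{c_3e_1+c_4,c_3e_2+c_4,c_3e_3+c_4\}=\{e_1',e_2',e_3'\}$.

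It then remains to solve this overdetermined system, and this is the only point needing care. I would set $c_3=b_1$ and $c_4=b_2$, so the last condition becomes exactly $\{g(e_1),g(e_2),g(e_3)\}=\{e_1',e_2',e_3'\}$, which holds by hypothesis. The relations $c_2=c_1^3$ and $c_1c_2c_3=1$ then force $c_1^4b_1=1$, i.e.\ $c_1$ is any fourth root of $b_1^{-1}$, with $c_2=c_1^3$. Because $b_1\neq0$ such a root exists in $\Com$ and is nonzero, so $c_1c_2c_3=c_1^4b_1=1\neq0$ and the transformation is invertible. The simultaneous normalization of the third-derivative term ($c_2=c_1^3$) and of the nonlinear term ($c_1c_2c_3=1$) is compatible precisely because, over $\Com$, the required fourth root can be extracted. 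Note finally that the genericity hypotheses $e_i\neq e_j$ and $e_i'\neq e_j'$ are not used in this direction of the argument—they only guarantee that the spectral curves~\er{ceee} are nonsingular—so the substitution and the matching of coefficients go through as soon as the constants $c_1,\dots,c_4$ are chosen as above.
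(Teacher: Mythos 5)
Your proof is correct, and it supplies precisely the verification the paper omits: the paper states Lemma~\ref{lemeee} with only the remark that it is ``easy to check,'' and your direct substitution — both third-order terms acquiring the common factor $c_1^3/c_3$, the two normalizations $c_2=c_1^3$ and $c_1c_2c_3=1$, and their simultaneous solvability via $c_1^4=b_1^{-1}$ over $\Com$ — is exactly that check.

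One correction to your closing remark, however: the hypothesis $e_i\neq e_j$ \emph{is} used in this direction of the argument, not merely for nonsingularity of the curves~\er{ceee}. The affine equivalence gives only an equality of unordered \emph{sets} $\{g(e_1),g(e_2),g(e_3)\}=\{e'_1,e'_2,e'_3\}$, whereas matching the two monic cubics requires equality of the \emph{multisets} of roots, i.e.\ $\prod_{i}\bigl(\bar u-g(e_i)\bigr)=\prod_{i}\bigl(\bar u-e'_i\bigr)$. It is the distinctness of $e_1,e_2,e_3$ (hence, by injectivity of $g$, of their images) that promotes the set equality to this polynomial identity; if repetitions among the $e_i$ were allowed, set equality would no longer force the cubics to agree, and the last step of your matching would fail.
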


The next proposition follows from the well-known classification of elliptic curves 
(see, e.g.,~\cite{hartshorne}).
\begin{proposition}[\cite{hartshorne}]
\lb{propla}
Recall that, for any $e_1,e_2,e_3\in\Com$, the algebraic curve $C(e_1,e_2,e_3)$ 
is given by~\er{ceee}.
Let $\knpr_1,\knpr_2\in\Com$ be such that $\knpr_i\notin\{0,1\}$ for $i=1,2$.

The curves $C(0,1,\knpr_1)$ and $C(0,1,\knpr_2)$ are birationally equivalent iff one has 
\beq
\lb{jla12}
\frac{\big((\knpr_1)^2-\knpr_1+1\big)^3}{(\knpr_1)^2(\knpr_1-1)^2}=
\frac{\big((\knpr_2)^2-\knpr_2+1\big)^3}{(\knpr_2)^2(\knpr_2-1)^2}.
\ee
The numbers $\knpr_1$, $\knpr_2$ satisfy~\er{jla12} iff 
the set $\{0,1,\knpr_1\}$ is affine-equivalent to the set $\{0,1,\knpr_2\}$.
\end{proposition}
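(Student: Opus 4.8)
The plan is to recognize this as the classical statement that $C(0,1,\knpr)$ is the Legendre normal form of an elliptic curve, whose $j$-invariant is $256$ times the expression in~\er{jla12}, combined with the description of the fibers of the $j$-invariant via the anharmonic group. Throughout write
\[
J(\knpr)=\frac{(\knpr^2-\knpr+1)^3}{\knpr^2(\knpr-1)^2},
\]
so that~\er{jla12} reads $J(\knpr_1)=J(\knpr_2)$.

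First I would prove the second assertion, the equivalence between~\er{jla12} and affine-equivalence of the triples $\{0,1,\knpr_1\}$ and $\{0,1,\knpr_2\}$. An affine map $g(z)=b_1z+b_2$ with $b_1\neq 0$ is determined by the images of two points, so mapping $\{0,1,\knpr_1\}$ onto a set of the form $\{0,1,\mu\}$ amounts to choosing one of the six bijections of the three points; carrying out the six elementary computations shows that the resulting values of $\mu$ are exactly
\[
\knpr_1,\qquad \frac{1}{\knpr_1},\qquad 1-\knpr_1,\qquad \frac{1}{1-\knpr_1},\qquad \frac{\knpr_1-1}{\knpr_1},\qquad \frac{\knpr_1}{\knpr_1-1},
\]
that is, the orbit of $\knpr_1$ under the anharmonic group (the order-$6$ group generated by $\knpr\mapsto 1/\knpr$ and $\knpr\mapsto 1-\knpr$, which is isomorphic to $S_3$). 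Hence $\{0,1,\knpr_1\}$ is affine-equivalent to $\{0,1,\knpr_2\}$ iff $\knpr_2$ lies in this orbit. A direct check that $J(1/\knpr)=J(\knpr)$ and $J(1-\knpr)=J(\knpr)$ then shows that $J$ is invariant under the whole group, giving one implication. For the converse I would note that $\knpr\mapsto J(\knpr)$ is a degree-$6$ rational map of $\mathbb{P}^1$; being invariant under the order-$6$ anharmonic group, it factors through the quotient $\mathbb{P}^1/S_3\cong\mathbb{P}^1$ as a degree-$1$, hence birational, map, so its fibers coincide with the group orbits. Therefore $J(\knpr_1)=J(\knpr_2)$ forces $\knpr_2$ into the orbit of $\knpr_1$, which is affine-equivalence.

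For the first assertion I would recall that for $\knpr\notin\{0,1\}$ the three roots $0,1,\knpr$ are distinct, so $C(0,1,\knpr)=\{y^2=z(z-1)(z-\knpr)\}$ is nonsingular and its smooth projective model is an elliptic curve in Legendre normal form, with $j$-invariant $j=256\,J(\knpr)$ by the standard formula (see~\cite{hartshorne}). Since birational equivalence of smooth projective curves coincides with isomorphism, and two elliptic curves over $\Com$ are isomorphic iff they share the same $j$-invariant, the curves $C(0,1,\knpr_1)$ and $C(0,1,\knpr_2)$ are birationally equivalent iff $256\,J(\knpr_1)=256\,J(\knpr_2)$, i.e.\ iff~\er{jla12} holds. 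Combined with the previous paragraph this yields all the stated equivalences.

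The main obstacle is the converse direction of the second assertion: one must know that the fibers of $J$ are exactly the anharmonic orbits, including at the ramification points $\knpr\in\{-1,\tfrac12,2\}$ (where $j=1728$) and at $\knpr^2-\knpr+1=0$ (where $j=0$), where the orbit has fewer than six elements. This is handled cleanly by the degree-and-quotient argument above rather than by special-casing. Everything else is either an elementary affine computation or a direct invocation of the classical elliptic-curve theory cited in~\cite{hartshorne}.
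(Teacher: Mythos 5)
Your proof is correct, and it is essentially the argument the paper relies on: the paper gives no proof of Proposition~\ref{propla}, citing instead the classical classification of elliptic curves in~\cite{hartshorne}, and your writeup is precisely that classical treatment (Legendre form with $j=2^8\,J(\knpr)$, isomorphism of genus-one curves iff equal $j$-invariants, and the identification of the fibers of $J$ with the orbits of the anharmonic $S_3$-action via the degree-counting quotient argument). The only added value of your version is that it makes the citation self-contained, including the correct handling of the ramified orbits, which the degree argument covers without case analysis.
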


Combining Propositions~\ref{knprop},~\ref{propmiura},~\ref{propla}, 
Remark~\ref{remmiura}, and Lemma~\ref{lemeee}, 
one obtains the following classification of 
$S$-integrable PDEs of the form~\er{utu3} up to BT-equivalence.
\begin{theorem}
\lb{btsint}
Let $\CE$ be an $S$-integrable equation of the form~\er{utu3}.
Then $\CE$ is BT-equivalent to either the KdV equation 
or the Krichever-Novikov equation $\hkneq(0,1,\knpr)$ for some $\knpr\notin\{0,1\}$. 

For any $\knpr\notin\{0,1\}$, 
the equation $\hkneq(0,1,\knpr)$ is not BT-equivalent to the KdV equation. 

Let $\knpr_1,\knpr_2\in\Com$ be such that $\knpr_i\notin\{0,1\}$ for $i=1,2$.
The equations $\hkneq(0,1,\knpr_1)$ and $\hkneq(0,1,\knpr_2)$ are BT-equivalent iff 
$\knpr_1$, $\knpr_2$ satisfy~\er{jla12}. 
Moreover, if $\knpr_1$, $\knpr_2$ satisfy~\er{jla12} then $\hkneq(0,1,\knpr_1)$ is 
isomorphic to $\hkneq(0,1,\knpr_2)$.
\end{theorem}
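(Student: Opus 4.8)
The plan is to assemble the statement from the cited results in three logically separate claims matching the three assertions of the theorem; every step is a direct invocation of a previously established proposition or lemma, so the proof is essentially a bookkeeping argument. First I would establish that any $S$-integrable $\CE$ of the form~\er{utu3} is BT-equivalent to one of the listed PDEs. By Proposition~\ref{propmiura} there are two mutually exclusive cases. If $\CE$ is of nonsingular Krichever--Novikov type, then Proposition~\ref{propmiura} produces $\knpr\notin\{0,1\}$ with $\CE$ isomorphic to $\hkneq(0,1,\knpr)$; since an invertible point transformation is a special kind of BT, $\CE$ is BT-equivalent to $\hkneq(0,1,\knpr)$. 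If $\CE$ is not of nonsingular Krichever--Novikov type, then Proposition~\ref{propmiura} connects $\CE$ with the KdV equation by a BT of Miura type, so $\CE$ is BT-equivalent to KdV. This is exactly the content of Remark~\ref{remmiura}.

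For the second assertion, fix $\knpr\notin\{0,1\}$ and observe that the parameters $e_1=0$, $e_2=1$, $e_3=\knpr$ of $\hkneq(0,1,\knpr)$ satisfy $e_1\neq e_2\neq e_3\neq e_1$. The second part of Proposition~\ref{knprop} then directly yields that $\hkneq(0,1,\knpr)$ is not connected with KdV by any BT, i.e., the two are not BT-equivalent. Here Proposition~\ref{propmiura} alone would not suffice, as it rules out only BTs of Miura type; the general non-existence result of Proposition~\ref{knprop}, resting on the fundamental Lie algebra machinery, is what is needed, as emphasized in Remark~\ref{remmiura}.

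The third assertion is an equivalence, which I would prove in both directions. For the ``if'' direction, assume $\knpr_1$, $\knpr_2$ satisfy~\er{jla12}. The last part of Proposition~\ref{propla} tells us this is equivalent to $\{0,1,\knpr_1\}$ being affine-equivalent to $\{0,1,\knpr_2\}$, and Lemma~\ref{lemeee} then produces a point-transformation isomorphism between $\hkneq(0,1,\knpr_1)$ and $\hkneq(0,1,\knpr_2)$; in particular they are BT-equivalent, which simultaneously establishes the final ``Moreover'' clause. For the ``only if'' direction I would argue by contraposition: if~\er{jla12} fails, then by Proposition~\ref{propla} the curves $C(0,1,\knpr_1)$ and $C(0,1,\knpr_2)$ are not birationally equivalent, so the first part of Proposition~\ref{knprop} shows $\hkneq(0,1,\knpr_1)$ and $\hkneq(0,1,\knpr_2)$ are not connected by any BT, hence not BT-equivalent.

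Since every step is a citation, there is no genuine computational obstacle. The only points requiring care are verifying that the distinctness hypothesis $e_1\neq e_2\neq e_3\neq e_1$ holds for the normalized parameters $(0,1,\knpr)$ with $\knpr\notin\{0,1\}$, and tracking the logical direction of each citation so that Proposition~\ref{knprop} (a non-existence statement about \emph{all} BTs) and Proposition~\ref{propmiura} (an existence statement about BTs of Miura type) are each invoked correctly. The conceptual content of the theorem lives entirely in those cited results.
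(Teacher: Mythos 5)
Your proposal is correct and follows essentially the same route as the paper, which proves Theorem~\ref{btsint} precisely by combining Propositions~\ref{knprop}, \ref{propmiura}, \ref{propla}, Remark~\ref{remmiura}, and Lemma~\ref{lemeee} in exactly the way you lay out. Your bookkeeping of which cited result carries each of the three assertions (including using Proposition~\ref{knprop} rather than Proposition~\ref{propmiura} for the non-existence statements, and checking the distinctness hypotheses) matches the paper's intended argument.
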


\subsection{Conventions and notation}
\lb{subs-conv}

The following conventions and notation are used in the paper.
All manifolds, functions, vector fields, and maps of manifolds are supposed to be complex-analytic.
The symbols $\zsp$ and $\zp$ denote the sets of positive and nonnegative 
integers respectively.

\section{Coverings of $(1+1)$-dimensional evolution PDEs}
\label{csev}

\subsection{Coverings and gauge transformations}

Consider an evolution PDE 
\begin{gather}
\label{gev}
\frac{\pd u^i}{\pd t}
=F^i(x,t,u^1,\dots,u^\nv,\,u^1_1,\dots,u^\nv_1,\dots,u^1_{\eo},\dots,u^\nv_{\eo}),\\
\notag
u^i=u^i(x,t),\quad\qquad u^i_k=\frac{\pd^k u^i}{\pd x^k},\quad\qquad  
i=1,\dots,\nv. 
\end{gather}
Recall that the infinite prolongation $\CE$ of~\er{gev} is the  
infinite-dimensional manifold 
with the coordinates $x$, $t$, $u^i_k$ for $i=1,\dots,\nv$ and $k\in\zp$. Here $u^i_0=u^i$. 


In what follows, when we consider a function of the variables $u^i_k$, we always assume 
that the function may depend only on a finite number of these variables.
The total derivative operators $D_x$, $D_t$ given by formulas~\er{evdxdt} 
are viewed as vector fields on the manifold $\CE$. 

Suppose that a system 
\begin{gather}
\lb{naiv}
\begin{aligned}
w^j_x&=
\al^j(w^1,\dots,w^\dc,x,t,u^i_k,\dots),\\
w^j_t&=
\beta^j(w^1,\dots,w^\dc,x,t,u^i_k,\dots),
\end{aligned}\\
\notag
w^j=w^j(x,t),\qquad\qquad j=1,\dots,\dc, 
\end{gather}
is compatible modulo~\eqref{gev}.

Let $W$ be the manifold with coordinates $w^1,\dots,w^\dc$. 
Then the expressions 
\begin{gather}
\lb{vfaw}
A=\sum_{j=1}^\dc \al^j(w^1,\dots,w^\dc,x,t,u^i_k,\dots)\frac{\partial}{\pd w^j},\\
\lb{vfbw}
B=\sum_{j=1}^\dc \beta^j(w^1,\dots,w^\dc,x,t,u^i_k,\dots)\frac{\partial}{\pd w^j}
\end{gather}
can be regarded as vector fields on the manifold $\CE\times W$.

The compatibility condition~\er{compat} of system~\er{naiv} is equivalent to the equation 
\beq
\label{gc}
D_x(B)-D_t(A)+[A,B]=0,
\ee
where $D_x(B)=\sum_{j=1}^\dc D_x(\beta^j)\dfrac{\partial}{\pd w^j}$ and 
$D_t(A)=\sum_{j=1}^\dc D_t(\al^j)\dfrac{\partial}{\pd w^j}$.

If system~\er{naiv} is compatible modulo~\eqref{gev}, then~\er{naiv} is called a \emph{covering} of~\eqref{gev}. 
Covering~\er{naiv} is uniquely determined by the vector fields $A$, $B$ given by formulas~\er{vfaw}, \er{vfbw}.

\begin{remark}
This definition of coverings is a particular case of a more general concept of coverings of PDEs from~\cite{nonl}.  
\end{remark}


A covering~\er{naiv} is said to be \emph{of order not greater than $\ocn\in\zp$} if
the functions $\al^j$ may depend only on the variables $w^l$, $x$, $t$, $u^i_{k}$ for $k\le\ocn$.
In other words, covering~\er{naiv} is of order $\le\ocn$ iff the vector field~\er{vfaw} satisfies 
\beq
\lb{orderna}
\frac{\pd A}{\pd u^i_s}=0\qquad\forall\,s>\ocn,\qquad\forall\,i=1,\dots,\nv.
\ee
If~\er{orderna} holds, then equation~\er{gc} implies 
\beq
\lb{ordernb}
\frac{\pd B}{\pd u^i_r}=0
\qquad\quad\forall\,r>\ocn+\eo-1,\qquad\forall\,i=1,\dots,\nv.
\ee


As has already been said in Section~\ref{main_int},
a gauge transformation is given by an invertible change of variables
\begin{equation}
\label{gtgg}
  x\mapsto x,\qquad t\mapsto t,\qquad u^i_k\mapsto u^i_k,\qquad
  w^j\mapsto g^j(\tilde w^1,\dots,\tilde w^\dc,x,t,u^i_l,\dots),\qquad j=1,\dots,\dc.
\end{equation}

Substituting~\er{gtgg} to~\er{naiv}, we obtain a system of the form
\begin{gather}
\lb{tcovmain}
\begin{aligned}
\tilde w^j_x&=
\tilde\al^j(\tilde w^1,\dots,\tilde w^\dc,x,t,u^i_k,\dots),\\
\tilde w^j_t&=
\tilde\beta^j(\tilde w^1,\dots,\tilde w^\dc,x,t,u^i_{k},\dots),
\end{aligned}\\
\notag
\tilde w^j=\tilde w^j(x,t),\qquad\qquad j=1,\dots,\dc. 
\end{gather}  
Covering~\er{tcovmain} is said to be \emph{gauge equivalent} to covering~\er{naiv} if 
\er{tcovmain} and \er{naiv} are connected by a gauge transformation~\er{gtgg}.

\begin{example}
\label{ex_gt}
Consider a scalar evolution equation  
\beq
\label{utf1}
u_t=F(x,t,u,u_1,\dots,u_\eo),\qquad\quad u=u(x,t),\qquad\quad
u_k=\frac{\pd^k u}{\pd x^k}. 
\ee
Then $D_x$ is given by the formula 
$D_x=\dfrac{\pd}{\pd x}+\sum_{k\ge 0} u_{k+1}\dfrac{\pd}{\pd u_k}$, where $u_0=u$.

Let $\dc=1$ and $w=w^1$. Consider a covering 
\begin{equation}
\label{ex_naiv}
w_x=\al(w,x,t,u_0,u_1,\dots),\qquad\qquad
w_t=\beta(w,x,t,u_0,u_1,\dots).
\end{equation}
We want to determine how covering~\er{ex_naiv} 
changes after a gauge transformation of the form 
\begin{equation}
\label{gt1}
x\mapsto x,\qquad t\mapsto t,\qquad u_k\mapsto u_k,\qquad 
w\mapsto g(\tilde w,x,t,u_0,u_1),\qquad
\frac{\pd g}{\pd \tilde w}\neq 0.
\end{equation}
We need to substitute $g(\tilde w,x,t,u_0,u_1)$ in place of~$w$ in equations~\er{ex_naiv}. 
The result is  
\begin{gather}
\label{covg1}
\begin{aligned}
\frac{\pd g}{\pd \tilde w}\cdot \tilde w_x+\frac{\pd g}{\pd x}
+\frac{\pd g}{\pd u_0}\cdot u_1+\frac{\pd g}{\pd u_1}\cdot u_2&=\al(g,x,t,u_0,u_1,\dots),\\
\frac{\pd g}{\pd \tilde w}\cdot \tilde w_t+\frac{\pd g}{\pd t}
+\frac{\pd g}{\pd u_0}\cdot u_t+\frac{\pd g}{\pd u_1}\cdot u_{xt}&=\beta(g,x,t,u_0,u_1,\dots),
\end{aligned}\\
\notag
g=g(\tilde w,x,t,u_0,u_1).
\end{gather}

Since $u_t=F$ and $u_{xt}=D_x(F)$ due to equation~\er{utf1}, 
system~\er{covg1} can be written as 
\begin{gather}
\label{covtr1}
\begin{aligned}
\tilde w_x&=
\frac{1}{g_{\tilde w}}\Bigl(\al(g,x,t,u_0,u_1,\dots)-\frac{\pd g}{\pd x}
-u_1\frac{\pd g}{\pd u_0}-u_2\frac{\pd g}{\pd u_1}\Bigl),\\
\tilde w_t&=
\frac{1}{g_{\tilde w}}\Bigl(\beta(g,x,t,u_0,u_1,\dots)-\frac{\pd g}{\pd t}
-F\frac{\pd g}{\pd u_0}-D_x(F)\frac{\pd g}{\pd u_1}\Bigl),
\end{aligned}\\
\notag
g=g(\tilde w,x,t,u_0,u_1),\qquad\qquad g_{\tilde w}=\frac{\pd g}{\pd \tilde w}.
\end{gather}
Thus, applying the gauge transformation~\er{gt1} to covering~\er{ex_naiv}, 
one obtains covering~\er{covtr1}.
\end{example}

Return to the general case of system~\er{gev} and covering~\er{naiv}.
Suppose that \er{tcovmain} is obtained from~\er{naiv} by means of a gauge transformation~\er{gtgg}. 
Let us present explicit formulas for the functions $\tilde\al^j$, $\tilde\beta^j$ from \er{tcovmain}.

In order to apply the gauge transformation~\er{gtgg} to covering~\er{naiv}, 
we need to substitute $g^j(\tilde w^1,\dots,\tilde w^\dc,x,t,u^i_l,\dots)$ in place of~$w^j$ in equations~\er{naiv}. 
The result is  
\begin{gather*}
\sum_{r=1}^{\dc}\frac{\pd g^j}{\pd \tilde w^r}\cdot \tilde w^r_x+D_x(g^j)
=\al^j(g^1,\dots,g^\dc,x,t,u^i_k,\dots),\\
\sum_{r=1}^{\dc}\frac{\pd g^j}{\pd \tilde w^r}\cdot \tilde w^r_t+D_t(g^j)
=\beta^j(g^1,\dots,g^\dc,x,t,u^i_k,\dots),\\
g^j=g^j(\tilde w^1,\dots,\tilde w^\dc,x,t,u^i_l,\dots),\qquad\qquad j=1,\dots,\dc.
\end{gather*}
Note that the matrix 
$
\begin{pmatrix}
\frac{\pd g^1}{\pd \tilde w^1} & \ldots & \frac{\pd g^1}{\pd \tilde w^\dc}\\
\vdots & \ddots & \vdots \\
\frac{\pd g^\dc}{\pd \tilde w^1} & \ldots & \frac{\pd g^\dc}{\pd \tilde w^\dc}
\end{pmatrix}
$
is invertible, because the transformation~\er{gtgg} is supposed to be invertible.
Therefore, applying the gauge transformation~\er{gtgg} to system~\er{naiv}, 
we obtain the system
\begin{gather*}
\label{wxmatr}
\begin{pmatrix}
\tilde w^1_x\\
\vdots\\
\tilde w^\dc_x 
\end{pmatrix}=
\begin{pmatrix}
\frac{\pd g^1}{\pd \tilde w^1} & \ldots & \frac{\pd g^1}{\pd \tilde w^\dc}\\
\vdots & \ddots & \vdots \\
\frac{\pd g^\dc}{\pd \tilde w^1} & \ldots & \frac{\pd g^\dc}{\pd \tilde w^\dc}
\end{pmatrix}^{-1}\cdot
\begin{pmatrix}
\al^1(g^1,\dots,g^\dc,x,t,u^i_k,\dots)-D_x(g^1)\\
\vdots\\ 
\al^\dc(g^1,\dots,g^\dc,x,t,u^i_k,\dots)-D_x(g^\dc)
\end{pmatrix},\\
\label{wtmatr}
\begin{pmatrix}
\tilde w^1_t\\
\vdots\\
\tilde w^\dc_t 
\end{pmatrix}=
\begin{pmatrix}
\frac{\pd g^1}{\pd \tilde w^1} & \ldots & \frac{\pd g^1}{\pd \tilde w^\dc}\\
\vdots & \ddots & \vdots \\
\frac{\pd g^\dc}{\pd \tilde w^1} & \ldots & \frac{\pd g^\dc}{\pd \tilde w^\dc}
\end{pmatrix}^{-1}\cdot
\begin{pmatrix}
\beta^1(g^1,\dots,g^\dc,x,t,u^i_k,\dots)-D_t(g^1)\\
\vdots\\ 
\beta^\dc(g^1,\dots,g^\dc,x,t,u^i_k,\dots)-D_t(g^\dc)
\end{pmatrix}.
\end{gather*}
Hence the functions $\tilde\al^j$, $\tilde\beta^j$ from \er{tcovmain} are given by the formulas 
\begin{gather}
\lb{alform}
\begin{pmatrix}
{\tilde{\al}}^1\\
\vdots\\
{\tilde{\al}}^\dc 
\end{pmatrix}=
\begin{pmatrix}
\frac{\pd g^1}{\pd \tilde w^1} & \ldots & \frac{\pd g^1}{\pd \tilde w^\dc}\\
\vdots & \ddots & \vdots \\
\frac{\pd g^\dc}{\pd \tilde w^1} & \ldots & \frac{\pd g^\dc}{\pd \tilde w^\dc}
\end{pmatrix}^{-1}\cdot
\begin{pmatrix}
\al^1(g^1,\dots,g^\dc,x,t,u^i_k,\dots)-D_x(g^1)\\
\vdots\\ 
\al^\dc(g^1,\dots,g^\dc,x,t,u^i_k,\dots)-D_x(g^\dc)
\end{pmatrix},\\
\lb{betform}
\begin{pmatrix}
{\tilde{\beta}}^1\\
\vdots\\
{\tilde{\beta}}^\dc 
\end{pmatrix}=
\begin{pmatrix}
\frac{\pd g^1}{\pd \tilde w^1} & \ldots & \frac{\pd g^1}{\pd \tilde w^\dc}\\
\vdots & \ddots & \vdots \\
\frac{\pd g^\dc}{\pd \tilde w^1} & \ldots & \frac{\pd g^\dc}{\pd \tilde w^\dc}
\end{pmatrix}^{-1}\cdot
\begin{pmatrix}
\beta^1(g^1,\dots,g^\dc,x,t,u^i_k,\dots)-D_t(g^1)\\
\vdots\\ 
\beta^\dc(g^1,\dots,g^\dc,x,t,u^i_k,\dots)-D_t(g^\dc)
\end{pmatrix},\\
\notag
g^j=g^j(\tilde w^1,\dots,\tilde w^\dc,x,t,u^i_l,\dots),\qquad\qquad j=1,\dots,\dc.
\end{gather}

Let $\tilde W$ be the manifold with coordinates $\tilde w^1,\dots,\tilde w^\dc$. 
Formulas~\er{gtgg} determine the diffeomorphism
\begin{gather}
\lb{gdef1}
G\cl\CE\times \tilde W\to\CE\times W,\qquad\qquad 
G^*(x)=x,\qquad G^*(t)=t,\qquad G^*(u^i_k)=u^i_k,\\
\lb{gdef2}
G^*(w^j)=g^j(\tilde w^1,\dots,\tilde w^\dc,x,t,u^i_l,\dots),
\end{gather}
where $G^*$ is the pull-back map corresponding to the diffeomorphism $G$.

According to~\er{alform},~\er{betform},~\er{gdef1},~\er{gdef2}, for the vector fields 
$\tilde A=\sum_{j=1}^\dc {\tilde{\al}}^j\dfrac{\partial}{\pd \tilde w^j}$ and 
$\tilde B=\sum_{j=1}^\dc {\tilde{\beta}}^j\dfrac{\partial}{\pd \tilde w^j}$ we have 
$$
G_*\big(D_x+\tilde A\big)=D_x+A,\qquad\qquad 
G_*\big(D_t+\tilde B\big)=D_t+B,
$$
where $G_*$ is the differential of the diffeomorphism~$G$, 
and the vector fields $A$, $B$ are given by \er{vfaw}, \er{vfbw}.

To simplify notation, we identify $\tilde w^j$ with $w^j$. 
A gauge transformation given by~\er{gtgg} will be written simply as 
$$
w^j\mapsto g^j(w^1,\dots,w^\dc,x,t,u^i_l,\dots),\qquad\qquad 
j=1,\dots,\dc.
$$




\subsection{Normal forms of coverings with respect to the action of gauge transformations}
\lb{sec_normform}

Recall that $W$ is the manifold with coordinates $w^1,\dots,w^\dc$. 

It is convenient to say that a covering is given by vector fields $D_x+A$, $D_t+B$ on the manifold $\CE\times W$, 
where $A$, $B$ are of the form~\er{vfaw},~\er{vfbw} for some functions $\al^j$, $\beta^j$ and satisfy~\er{gc}. Note that equation~\er{gc} is equivalent to 
$[D_x+A,\,D_t+B]=0$.
 
Recall that a covering is of order $\le\ocn$ iff $A$, $B$ 
satisfy~\er{orderna},~\er{ordernb}.

We want to find a normal form for coverings 
with respect to the action of the group of gauge transformations.
Consider first the case $\nv=1$, and set $u=u^1$.
Then the coordinates on $\CE$ are $x$, $t$, $u_k$, $k\in\zp$.

A point $a\in\CE$ is determined by the values of the coordinates $x$, $t$, $u_k$ at $a$. 
Let
\begin{equation}
\notag
a=(x=x_0,\,t=t_0,\,u_k=a_k)\,\in\,\CE,\qquad x_0,\,t_0,\,a_k\in\Com,\qquad k\in\zp,
\end{equation}
be a point of $\CE$.


\begin{remark}
\lb{subs_not}
Let $F$ be a function of the variables $x$, $t$, $u_k$. 
Let $s\in\zp$. 
Then the notation 
$$
F\,\Big|_{u_k=a_k,\ k\ge s}
$$
means that we substitute $u_k=a_k$ for all $k\ge s$ in the function $F$. 

Also, sometimes we need to substitute $x=x_0$ or $t=t_0$. 
For example, if $F=F(x,t,u_0,u_1,u_2,u_3)$, then 
$$
F\,\Big|_{x=x_0,\ u_k=a_k,\ k\ge 2}=F(x_0,t,u_0,u_1,a_2,a_3).
$$
\end{remark}

\begin{theorem}
\lb{d1evcov}
Fix a covering of order $\le\ocn$. 
For any $b\in W$, on a neighborhood of $(a,b)\in\CE\times W$
there is a unique gauge transformation
\beq
\lb{wimfi}
w^j\mapsto g^j(w^1,\dots,w^\dc,x,t,u_0,u_1,\dots),
\qquad\qquad j=1,\dots,\dc,
\ee
such that
\begin{itemize}
\item the transformed vector fields $D_x+A$, $D_t+B$ satisfy for all $s\ge 1$
\begin{gather}
\label{d=0}
\frac{\pd A}{\pd u_s}\,\,\bigg|_{u_k=a_k,\ k\ge s}=0,\\
\lb{aukak}
A\,\Big|_{u_k=a_k,\ k\ge 0}=0,\\
\lb{bxx0}
B\,\Big|_{x=x_0,\ u_k=a_k,\ k\ge 0}=0,
\end{gather}
\item one has
\beq
\lb{f=wi}
g^j\,\Big|_{x=x_0,\ t=t_0,\ u_k=a_k,\ k\ge 0}=w^j,\qquad j=1,\dots,\dc.
\ee
\end{itemize}

Moreover, this gauge transformation obeys
\beq
\lb{pdgjuk}
\frac{\pd g^j}{\pd u_k}=0\quad\qquad\forall\,k\ge \ocn,\quad\qquad j=1,\dots,\dc,
\ee
and the transformed covering is also of order $\le p$.
\end{theorem}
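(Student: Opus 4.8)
The plan is to construct the required gauge transformation directly from the explicit transformation formulas~\er{alform},~\er{betform}, organizing the unknown functions $g^j$ according to their dependence on the separate variables and using each such dependence to enforce one group of the normalization conditions. Existence of each piece will come from the holomorphic existence--uniqueness theorem for ODEs on a neighborhood of $(a,b)$; uniqueness will be forced by the initial-value normalization~\er{f=wi}, which in particular makes $g$ equal to the identity at the base point, so that $\pd g^j/\pd w^r$ is invertible nearby and the constructed map is a genuine gauge transformation; and the consistency of the resulting overdetermined system for $g^j$ will rest on the zero-curvature equation~\er{gc}.

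First I would treat the normalization~\er{d=0} of the $u$-jet of $A$ by a downward recursion on $s$ from $s=\ocn$ to $s=1$. Writing out $\tilde A$ from~\er{alform} and using $D_x g^j=\pd g^j/\pd x+\sum_k u_{k+1}\,\pd g^j/\pd u_k$ as in Example~\ref{ex_gt}, one sees that $\pd\tilde A/\pd u_s$, evaluated at the partial tail $u_k=a_k$, $k\ge s$, picks up the term $-\,\pd g^j/\pd u_{s-1}$ coming from $-u_s\,\pd g^j/\pd u_{s-1}$ in $D_x g^j$, while all remaining contributions involve only $g^j$ and its derivatives in the directions $u_0,\dots,u_{s-1}$ and in the already-treated higher directions. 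Hence~\er{d=0} at level $s$ can be solved for $\pd g^j/\pd u_{s-1}$ at the partial tail. Carrying this out for $s=\ocn,\ocn-1,\dots,1$ determines how $g^j$ depends on $u_0,\dots,u_{\ocn-1}$; choosing $g^j$ independent of the remaining variables $u_k$, $k\ge\ocn$, is consistent with every condition and yields~\er{pdgjuk}. Substituting~\er{pdgjuk} back into~\er{alform} and counting orders shows at once that the transformed covering is again of order $\le\ocn$, as in the passage from~\er{orderna} to~\er{ordernb}; the uniqueness argument below will then show that this independence is in fact forced.

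Next I would fix the values of $g^j$ along the full tail $T=\{u_k=a_k,\ k\ge0\}$, that is, the function $\gamma^j(x,t,w)=g^j|_{T}$. Condition~\er{aukak} reads $\tilde A|_{T}=0$; by~\er{alform}, and since the normal $u$-derivatives of $g^j$ along $T$ are already known from the previous step, this becomes a first-order ODE in $x$ for the $\gamma^j$, to be integrated starting from $x=x_0$. Condition~\er{bxx0} reads $\tilde B|_{x=x_0,\,T}=0$; by~\er{betform} this is a first-order ODE in $t$ for $\gamma^j$ restricted to $x=x_0$, to be integrated from $t=t_0$. The value at $(x_0,t_0)$ is prescribed to be the identity by~\er{f=wi}. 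Running these in order --- first the $t$-integration along $x=x_0$ with initial value the identity, then the $x$-integration --- produces, by the holomorphic ODE theorem, a unique local $\gamma^j$ on a neighborhood of $(a,b)$, which together with the normal jet of the first step assembles the desired $g^j$.

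The main obstacle I expect is consistency, rather than any single integration. The first step prescribes the derivatives $\pd g^j/\pd u_{s-1}$ (and, through~\er{aukak}, the behaviour in $x$) using only the $x$-part $A$, whereas~\er{bxx0} prescribes the $t$-derivative of $g^j$ at $x=x_0$ using the $t$-part $B$. For these prescriptions to come from one honest function $g^j$ the corresponding mixed partials must agree, i.e. the $A$-data and the $B$-data must commute. This is precisely where the zero-curvature condition~\er{gc}, equivalently $[D_x+A,\,D_t+B]=0$, enters: it furnishes the Frobenius-type integrability of the overdetermined system for $g^j$, so that imposing the $A$-conditions~\er{d=0},~\er{aukak} for all $(x,t)$ is compatible with imposing~\er{bxx0} at $x=x_0$ and does not over-constrain the $t$-evolution. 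Uniqueness of $g^j$ then follows because each step is the unique solution of an ODE with initial data dictated by~\er{f=wi}, and the recursion of the first step leaves no further freedom, forcing in particular the independence asserted in~\er{pdgjuk}.
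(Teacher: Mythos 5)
Your overall strategy --- normalizing one variable at a time via holomorphic ODEs whose initial data are dictated by \er{f=wi} --- is the same as the paper's, but two essential steps of your execution contain genuine gaps, and in both places the paper does something you are missing. The first gap is that the reduction to ODEs is not valid for a \emph{single} function $g^j$ depending simultaneously on $w^1,\dots,w^\dc,x,t,u_0,\dots,u_{\ocn-1}$. At the top level $s=\ocn$ your computation is correct, but at every level $s\le\ocn-1$ the differentiation of $\tilde\al^j$ from \er{alform} in $u_s$, restricted to the partial tail, produces --- besides $-\pd g^j/\pd u_{s-1}$ --- terms from the inverse Jacobian $\bigl(\pd g/\pd w\bigr)^{-1}$, from the chain rule applied to $\al^r(g^1,\dots,g^\dc,x,t,u_0,\dots)$, and from $D_x\bigl(\pd g^r/\pd u_s\bigr)$; these contain $w$-, $x$- and lower $u$-derivatives (and second derivatives) of the unknown, so each ``level-$s$ equation'' is a first-order PDE for the restricted unknown, not an ODE, and the holomorphic ODE theorem does not apply to it. Your two steps are moreover circularly coupled: the $u$-recursion needs $\gamma^j=g^j|_{T}$, $T=\{u_k=a_k,\ k\ge 0\}$, as initial data, while the $t$-equation for $\gamma^j|_{x=x_0}$ coming from \er{bxx0} involves, through $D_t(g^j)=\pd g^j/\pd t+\sum_k D_x^k(F)\,\pd g^j/\pd u_k$, exactly the $u$-derivatives produced by that recursion. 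The paper's device that eliminates all of this is to \emph{compose a finite sequence of gauge transformations} instead of solving for one $g$: the transformation treated at level $n$ is taken of the restricted form \er{gtn}, depending only on $w,x,t,u_0,\dots,u_{n-1}$, and is applied to the covering already normalized at levels $>n$; for such a transformation every problematic term vanishes identically, each step is an honest parameter-dependent ODE, and one checks directly from \er{alform}, \er{betform} that transformations of the forms \er{gtn}, \er{gtxt}, \er{gtt} preserve the normalizations already achieved. This also shows that the ``main obstacle'' you anticipate is illusory: the paper's existence proof never uses the zero-curvature equation \er{gc}. Conditions \er{d=0}, \er{aukak} involve only $A$, condition \er{bxx0} involves only $B$ and is imposed only at $x=x_0$ on the tail, and the successive prescriptions live on nested sets, so no mixed-partial or Frobenius-type compatibility between the $A$-data and the $B$-data ever has to be verified; existence holds for any pair $A$, $B$, compatible or not, and invoking \er{gc} here is a misdiagnosis rather than a usable proof step.

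The second gap is uniqueness. You impose \er{pdgjuk} at the outset (``choosing $g^j$ independent of the remaining variables $u_k$, $k\ge\ocn$'') and then obtain uniqueness only within that class by ODE uniqueness; this cannot ``force'' \er{pdgjuk}, because the theorem asserts uniqueness among \emph{all} gauge transformations \er{wimfi}, which are allowed to depend on arbitrarily high $u_k$. The missing argument is the one the paper gives: if $\bar g$ transforms one normalized covering into another, satisfies the normalization \er{f=wi}, and depends on some $u_l$, take $l$ maximal with $\pd\bar g^j/\pd u_l\neq 0$; by maximality $\bar g$ does not depend on $u_k$ for $k>l$, hence $\pd\bar g^j/\pd u_l$ is unchanged by the substitution $u_k=a_k$, $k\ge l+1$, and the transformed field $A'$ then satisfies $\pd A'/\pd u_{l+1}\big|_{u_k=a_k,\ k\ge l+1}\neq 0$, contradicting \er{d=0} for $s=l+1$; analogous arguments exclude $x$- and $t$-dependence, and \er{f=wi} then forces $\bar g^j=w^j$. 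Without this maximality argument, both the uniqueness claim and property \er{pdgjuk} remain unproved in your proposal.
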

\begin{proof}  
Suppose that the initial covering is given by vector fields $D_x+A$, $D_t+B$, 
where $A$, $B$ do not necessarily satisfy~\er{d=0}, \er{aukak}, \er{bxx0}. 

We are going to construct a gauge transformation of the form~\er{wimfi},~\er{f=wi},~\er{pdgjuk}
such that the transformed vector fields $D_x+A$, $D_t+B$ will satisfy~\er{aukak},~\er{bxx0}, 
and~\er{d=0} for all $s\ge 1$.

We are going to construct the required gauge transformation in several steps.
First, we will construct a transformation to achieve property~\er{d=0},
then another transformation to get properties~\er{d=0},~\er{aukak},
and finally another transformation to obtain all properties~\er{d=0},~\er{aukak},~\er{bxx0}.

Let us first prove that  
after a suitable gauge transformation
one gets~\er{d=0} for all $s\ge 1$. 

Since the covering is of order $\le\ocn$, equation~\er{d=0} is valid for all $s>\ocn$. 
Let $n\in\{1,\dots,\ocn\}$ be such that~\er{d=0} holds for all $s\ge n+1$. 
It is easily seen that this property is preserved by 
any gauge transformation of the form 
\begin{equation}
\label{gtn}
w^j\mapsto \tilde g^j(w^1,\dots,w^\dc,x,t,u_0,\dots,u_{n-1}),\qquad\qquad j=1,\dots,\dc.
\end{equation}
Therefore, if we find a gauge transformation~\er{gtn} 
such that after this transformation we get~\er{d=0} for $s=n$, 
then we will get~\er{d=0} for all $s\ge n$.
 
One has 
\begin{equation*}
\frac{\pd A}{\pd u_n}\,\,\bigg|_{u_k=a_k,\ k\ge n}
=\sum_{j=1}^\dc c^j(w^1,\dots,w^\dc,x,t,u_0,\dots,u_{n-1})\frac{\partial}{\pd w^j} 
\end{equation*}
for some functions $c^j(w^1,\dots,w^\dc,x,t,u_0,\dots,u_{n-1})$.
Consider the system of ordinary differential equations (ODE) 
with respect to the variable $u_{n-1}$  
\begin{gather*}
\frac{\pd}{\pd u_{n-1}}\tilde g^j(w^1,\dots,w^\dc,x,t,u_0,\dots,u_{n-1})=
c^j(\tilde g^1,\dots,\tilde g^\dc,x,t,u_0,\dots,u_{n-1}),\\
j=1,\dots,\dc, 
\end{gather*} 
for unknown functions $\tilde g^j$. 
Here $w^1,\dots,w^\dc$, $x$, $t$, $u_0,\dots,u_{n-2}$ are regarded as parameters. 
A local solution of this ODE with the initial condition 
$$
\tilde g^j(w^1,\dots,w^\dc,x,t,u_0,\dots,u_{n-2},a_{n-1})=w^j,\quad\qquad j=1,\dots,\dc,
$$
determines transformation~\er{gtn} such that after this transformation we get~\er{d=0} for all $s\ge n$. 
Using induction, we obtain that, after a suitable gauge transformation, property~\er{d=0} is valid for all $s\ge 1$.

Clearly, property~\er{d=0} is preserved by any gauge transformation of the form 
\begin{equation}
\label{gtxt}
w^j\mapsto \hat g^j(w^1,\dots,w^\dc,x,t),\qquad\qquad j=1,\dots,\dc. 
\end{equation} 
Let us find a gauge transformation of the form~\er{gtxt}  
such that after this transformation we get~\er{aukak}. 
We have 
\begin{equation*}
A\,\Big|_{u_k=a_k,\ k\ge 0}
=\sum_{j=1}^\dc h^j(w^1,\dots,w^\dc,x,t)\frac{\partial}{\pd w^j} 
\end{equation*} 
for some functions $h^j(w^1,\dots,w^\dc,x,t)$. 
Consider the ODE with respect to the variable $x$ 
$$
\frac{\pd}{\pd x}\hat g^j(w^1,\dots,w^\dc,x,t)=h^j(\hat g^1,\dots,\hat g^\dc,x,t),
\qquad j=1,\dots,\dc, 
$$
where $w^1,\dots,w^\dc,t$ are treated as parameters.  
Its local solution with the initial condition 
$$
\hat g^j(w^1,\dots,w^\dc,x_0,t)=w^j,\qquad\quad j=1,\dots,\dc,
$$
determines the required transformation~\er{gtxt}.

Properties~\er{d=0},~\er{aukak} are preserved by any gauge transformation of the form 
\begin{equation}
\label{gtt}
w^j\mapsto \check{g}^j(w^1,\dots,w^\dc,t),\qquad\quad j=1,\dots,\dc. 
\end{equation} 
One has 
\begin{equation*}
B\,\Big|_{x=x_0,\ u_k=a_k,\ k\ge 0}
=\sum_{j=1}^\dc f^j(w^1,\dots,w^\dc,t)\frac{\partial}{\pd w^j} 
\end{equation*} 
for some functions $f^j(w^1,\dots,w^\dc,t)$. 
Consider the ODE with respect to $t$
$$
\frac{\pd}{\pd t}\check{g}^j(w^1,\dots,w^\dc,t)=f^j(\check{g}^1,\dots,\check{g}^\dc,t),
\qquad\quad j=1,\dots,\dc,
$$
where $w^1,\dots,w^\dc$ are viewed as parameters.
Its local solution with the initial condition 
$$
\check{g}^j(w^1,\dots,w^\dc,t_0)=w^j,\qquad\quad j=1,\dots,\dc,
$$
determines a gauge transformation of the form~\er{gtt} 
such that the transformed vector field $D_x+B$ satisfies~\er{bxx0}. 

Thus we have found a gauge transformation of the form~\er{wimfi},~\er{f=wi},~\er{pdgjuk} 
such that the transformed vector fields $D_x+A,\ D_t+B$ obey~\er{aukak},~\er{bxx0}, 
and~\er{d=0} for all $s\ge 1$. 
Since we have applied this transformation to a covering of order $\le p$, 
equation~\er{pdgjuk} implies that the transformed covering is also of order $\le p$. 

It remains to prove uniqueness of such a gauge transformation. 

Consider a covering given by $D_x+A$, $D_t+B$ such that
$A$, $B$ satisfy~\er{aukak},~\er{bxx0}, and~\er{d=0} for all $s\ge 1$. 
Consider a gauge transformation of the form 
\begin{gather*}
w^j\mapsto\bar g^j(w^1,\dots,w^\dc,x,t,u_0,u_1,\dots),\\
\bar g^j\,\Big|_{x=x_0,\ t=t_0,\ u_k=a_k,\ k\ge 0}=w^j,\quad\qquad j=1,\dots,\dc,
\end{gather*}
such that, applying this transformation to $D_x+A$, $D_t+B$, we
get vector fields $D_x+A'$, $D_t+B'$, where $A'$, $B'$ 
obey properties~\er{d=0},~\er{aukak},~\er{bxx0} as well. 

We need to show that 
\begin{gather}
\lb{jbargj}
\forall\,j\qquad\frac{\pd\bar g^j}{\pd u_k}=0\qquad\forall\,k\in\zp,\\ 
\lb{bargjx}
\forall\,j\qquad\frac{\pd\bar g^j}{\pd x}=0,\\
\lb{bargjt}
\forall\,j\qquad\frac{\pd\bar g^j}{\pd t}=0.
\end{gather} 
Suppose that~\er{jbargj} does not hold. Let $l$ be the maximal integer such that 
$\dfrac{\pd\bar g^j}{\pd u_l}\neq 0$ for some $j$. Then it is easily seen that $A'$ 
does not satisfy~\er{d=0} for $s=l+1$. 

If~\er{jbargj} is valid and~\er{bargjx} is not, then $A'$ does not obey~\er{aukak}. 
Finally, if~\er{jbargj},~\er{bargjx} hold and \er{bargjt} does not, then $B'$ does not satisfy~\er{bxx0}.
\end{proof}

Return to the case of arbitrary $\nv$ and the coordinate system $x$, $t$, $u^i_k$ for $\CE$.
Let
\begin{equation}
\label{pointev}
a=(x=x_0,\,t=t_0,\,u^i_k=a^i_k)\,\in\,\CE,\qquad x_0,\,t_0,\,a^i_k\in\Com,\qquad
i=1,\dots,\nv,\qquad k\in\zp,
\end{equation}
be a point of $\CE$.
We want to obtain an analog of Theorem~\ref{d1evcov} for arbitrary $\nv$.


Consider the following ordering $\preceq$ of the set 
$\{1,\dots,\nv\}\times\zp$ 
\begin{gather}
\notag
i,i'\in\{1,\dots,\nv\},\qquad\quad k,k'\in\zp,\qquad\quad k\neq k',\\
\lb{ev_ord}
(i,k)\prec(i',k')\ \text{ iff }\ k<k',\qquad\qquad 
(i,k)\prec(i',k)\ \text{ iff }\ i<i'.
\end{gather}
That is, $(1,0)\prec(2,0)\prec\dots\prec(\nv,0)\prec(1,1)\prec(2,1)\prec\dots$.

As usual, the notation $(i_1,k_1)\succeq(i_2,k_2)$ means 
that either $(i_1,k_1)\succ(i_2,k_2)$ or $(i_1,k_1)=(i_2,k_2)$. 

\begin{remark}
Let $F$ be a function of the variables $x$, $t$, $u^i_k$. 
Let $i'\in\{1,\dots,\nv\}$ and $k'\in\zp$. 
Then the notation 
$$
F\,\Big|_{u^i_k=a^i_k\ \forall\,(i,k)\succ(i',k')}
$$
says that 
we substitute $u^i_k=a^i_k$ for all $(i,k)\succ(i',k')$ in the function $F$. 

Similarly, the notation 
$$
F\,\Big|_{x=x_0,\ u^i_k=a^i_k\ \forall\,(i,k)\succeq(i',k')}
$$
means that we substitute $x=x_0$ and $u^i_k=a^i_k$ for all $(i,k)\succeq(i',k')$ in $F$. 
\end{remark}

\begin{theorem}
\lb{evcov}
Fix a covering of order $\le\ocn$.
For any $b\in W$, on a neighborhood of $(a,b)\in\CE\times W$ there is a unique gauge transformation 
\beq
\lb{gwimfi}
w^j\mapsto g^j(w^1,\dots,w^\dc,x,t,u^i_k,\dots),
\qquad\qquad j=1,\dots,\dc,
\ee
such that
\begin{itemize}
\item the transformed vector fields $D_x+A$, $D_t+B$ satisfy
for all $i_0=1,\dots,\nv$ and $k_0\in\zsp$
\begin{gather}
\label{gd=0}
\frac{\pd  A}{\pd u^{i_0}_{k_0}}\,\,\bigg|_{u^i_k=a^i_k\ \forall\,(i,k)\succ(i_0,k_0-1)}=0,\\
\lb{gaukak}
 A\,\Big|_{u^i_k=a^i_k\ \forall\,(i,k)\succeq(1,0)}=0,\\
\lb{gbxx0}
{B}\,\Big|_{x=x_0,\ u^i_k=a^i_k\ \forall\,(i,k)\succeq(1,0)}=0,
\end{gather}
\item one has
\beq
\lb{gf=wi}
g^j\,\Big|_{x=x_0,\ t=t_0,\ u^i_k=a^i_k\ \forall\,(i,k)\succeq(1,0)}=w^j,\qquad j=1,\dots,\dc.
\ee
\end{itemize}

Moreover, this gauge transformation obeys
\beq
\lb{gpdgjuk}
\frac{\pd g^j}{\pd u^i_k}=0\qquad\forall\,k\ge \ocn,\qquad i=1,\dots,\nv,\qquad j=1,\dots,\dc,
\ee
and the transformed covering is also of order $\le p$.

\end{theorem}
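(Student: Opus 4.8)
The plan is to reproduce, in the multicomponent setting, the three-stage gauge-fixing construction from the proof of Theorem~\ref{d1evcov}, replacing the natural order on the jet levels $u_k$ by the total order $\preceq$ of~\er{ev_ord}. The key structural observation is that $\preceq$ is order-isomorphic to $\zp$ and restricts to a finite order on $\{1,\dots,\nv\}\times\{0,\dots,\ocn\}$; hence a descending induction ``from the top down to $(1,0)$'' is legitimate for a covering of order $\le\ocn$, since then $A$ depends only on the $u^i_k$ with $k\le\ocn$. As before, I would first build a gauge transformation achieving~\er{gd=0} for all $(i_0,k_0)$, then one achieving~\er{gaukak}, and finally one achieving~\er{gbxx0}, and compose the three.

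For the first stage I would fix a pair $(i_0,k_0)$ with $k_0\in\zsp$ and assume, as inductive hypothesis, that~\er{gd=0} already holds for every $(i',k')\succ(i_0,k_0)$ with $k'\ge 1$. Differentiating the current $A$ by $u^{i_0}_{k_0}$ and substituting $u^i_k=a^i_k$ for all $(i,k)\succ(i_0,k_0-1)$ produces a vector field $\sum_{j=1}^\dc c^j\,\pd/\pd w^j$ whose coefficients $c^j$ depend only on $w^1,\dots,w^\dc$, $x$, $t$ and on the variables $u^i_k$ with $(i,k)\preceq(i_0,k_0-1)$; that this restriction on the dependence holds is guaranteed by the order-$\le\ocn$ property together with the shape of the substitution. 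I would then solve, along the single variable $u^{i_0}_{k_0-1}$, the system $\pd \tilde g^j/\pd u^{i_0}_{k_0-1}=c^j(\tilde g^1,\dots,\tilde g^\dc,\dots)$ with initial condition $\tilde g^j=w^j$ at $u^{i_0}_{k_0-1}=a^{i_0}_{k_0-1}$, obtaining a transformation depending only on the variables $(i,k)\preceq(i_0,k_0-1)$. Exactly as in the scalar case, this achieves~\er{gd=0} at $(i_0,k_0)$ while preserving it at all $\preceq$-larger pairs, and the descending induction then yields~\er{gd=0} everywhere.

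The remaining two stages transcribe the scalar argument directly: a transformation $w^j\mapsto g^j(w^1,\dots,w^\dc,x,t)$ obtained by integrating an ODE in $x$ with data at $x=x_0$ enforces~\er{gaukak} without disturbing~\er{gd=0}; then $w^j\mapsto g^j(w^1,\dots,w^\dc,t)$ obtained by integrating an ODE in $t$ with data at $t=t_0$ enforces~\er{gbxx0} without disturbing~\er{gd=0} or~\er{gaukak}. Since every transformation used in the first stage at a level $k_0\le\ocn$ depends only on $u^i_k$ with $k\le k_0-1\le\ocn-1$, the composite $g^j$ satisfies~\er{gpdgjuk}, and~\er{ordernb} (via~\er{gc}) shows the transformed covering is again of order $\le\ocn$. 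Uniqueness follows the scalar pattern verbatim: for a normalized transformation preserving all three properties, letting $(i_0,k_0)$ be the $\preceq$-maximal pair with $\pd \bar g^j/\pd u^{i_0}_{k_0}\neq 0$ for some $j$ contradicts~\er{gd=0}, after which residual $x$- or $t$-dependence would contradict~\er{gaukak} or~\er{gbxx0}.

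I expect the main obstacle to be the two \emph{preservation} claims — that each transformation leaves intact the normalizations already achieved at earlier stages and at $\preceq$-larger pairs. In the scalar proof these are dispatched as ``easily seen,'' but with the two-index order $\preceq$ one must check that, after the transformation introduced at $(i_0,k_0-1)$ is fed into the transformation law~\er{alform}, neither the extra terms $D_x(g^j)$ nor the composition with $\al^j$ reintroduces dependence on $u^{i_0}_{k_0}$ or higher variables once the substitution $u^i_k=a^i_k$, $(i,k)\succ(i',k'-1)$, attached to a larger pair $(i',k')$ is applied. Verifying that the coefficient $c^j$ genuinely involves only variables $(i,k)\preceq(i_0,k_0-1)$ is the crux, and it is precisely here that the specific shape of~\er{ev_ord} — first by level, then by component index — is essential.
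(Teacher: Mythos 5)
Your proposal is correct and follows essentially the same route as the paper's proof: an induction along the ordering $\preceq$ in which, at each pair $(i',k')$ with $k'>0$, one restricts $\pd A/\pd u^{i'}_{k'}$ by the substitution $u^i_k=a^i_k$ for $(i,k)\succ(i',k'-1)$ and integrates the resulting parameter-dependent ODE in the single variable $u^{i'}_{k'-1}$ with initial condition $\tilde g^j=w^j$ at $u^{i'}_{k'-1}=a^{i'}_{k'-1}$, followed by the two scalar-type ODE integrations in $x$ and $t$ for \eqref{gaukak}, \eqref{gbxx0}, and the same maximal-pair argument for uniqueness. The preservation claims and the dependence of the coefficients $c^j$ only on variables $\preceq(i',k'-1)$, which you flag as the crux, are exactly the points the paper disposes of with ``it is easy to check,'' so your proposal matches the paper's argument in both structure and substance.
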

\begin{proof} 
Note that for $\nv=1$ this theorem is equivalent to Theorem~\ref{d1evcov}. 

Suppose that the initial covering is given by vector fields $D_x+A$, $D_t+B$, 
where $A$, $B$ do not necessarily satisfy~\er{gd=0}, \er{gaukak}, \er{gbxx0}. 
Since the covering is of order $\le\ocn$, we have~\er{orderna}.

Similarly to the proof of Theorem~\ref{d1evcov}, 
we are going to construct a gauge transformation of the form~\er{gwimfi},~\er{gf=wi},~\er{gpdgjuk}
such that the transformed vector fields $D_x+A$, $D_t+B$ will satisfy~\er{gaukak},~\er{gbxx0}, 
and~\er{gd=0} for all $i_0=1,\dots,\nv$ and $k_0\in\zsp$.

Let us first prove that after a suitable gauge transformation 
one gets property~\er{gd=0} for all $i_0=1,\dots,\nv$ and $k_0\in\zsp$. 

Let $(i',k')$ be the minimal element with respect to the ordering~\er{ev_ord} 
such that property~\er{gd=0} holds for all $(i_0,k_0)\succ(i',k')$. 
The minimal element exists, because $A$ obeys~\er{orderna}.

If $k'=0$, then~\er{gd=0} is valid for all $i_0=1,\dots,\nv$ and $k_0\in\zsp$.


Consider the case~$k'>0$. 
We have 
\beq
\label{dauik'}
\frac{\pd A}{\pd u^{i'}_{k'}}\,\,\bigg|_{u^i_k=a^i_k\ \forall\,(i,k)\succ(i',k'-1)}
=\sum_{j=1}^\dc c^j(w^1,\dots,w^\dc,x,t,u^{i_1}_{k_1},\dots)\frac{\pd}{\pd w^j} 
\ee 
for some functions $c^j$, which may depend on the following variables 
\beq
\lb{wxtui1}
w^1,\dots,w^\dc,\quad x,\quad t,\quad u^{i_1}_{k_1},\quad (i_1,k_1)\preceq(i',k'-1).
\ee

Let us find a gauge transformation of the form 
\beq
\lb{gtgjui1}
w^j\mapsto {\tilde{g}}^j(w^1,\dots,w^\dc,x,t,u^{i_1}_{k_1},\dots),
\quad\qquad j=1,\dots,\dc,
\ee
such that after this transformation we get property~\er{gd=0} for all  $(i_0,k_0)\succeq(i',k')$. 
We assume that functions ${\tilde{g}}^j$ in~\er{gtgjui1} 
may depend only on the variables~\er{wxtui1}.

It is easy to check that such a transformation must satisfy the equations
\beq
\label{dtguk1}
\frac{\pd}{\pd u^{i'}_{k'-1}}{\tilde{g}}^j(w^1,\dots,w^\dc,x,t,u^{i_1}_{k_1},\dots)=
c^j({\tilde{g}}^1,\dots,{\tilde{g}}^\dc,x,t,u^{i_1}_{k_1},\dots),
\qquad j=1,\dots,\dc. 
\ee
We regard~\er{dtguk1} as a parameter-dependent system of ordinary differential equations (ODE)
with respect to the variable~$u^{i'}_{k'-1}$ and unknown functions~${\tilde{g}}^j$, 
where $w^1,\dots,w^\dc$, $x$, $t$, $u^{i_2}_{k_2}$ for $(i_2,k_2)\prec(i',k'-1)$ are viewed as parameters. 

Since we are interested in gauge transformations satisfying~\er{gf=wi}, 
we choose the following initial condition for this ODE
\beq
\label{gcincond}
{\tilde{g}}^j\,\Big|_{u^{i'}_{k'-1}=a^{i'}_{k'-1}}=w^j,\qquad\qquad j=1,\dots,\dc.
\ee
Then ${\tilde{g}}^1,\dots,{\tilde{g}}^\dc$ are defined as a solution of the ODE~\er{dtguk1} with the initial condition~\er{gcincond}.

By induction with respect to the ordering~\er{ev_ord}, 
we obtain that, after a suitable gauge transformation, 
property~\er{gd=0} is valid for all $i_0=1,\dots,\nv$ and $k_0\in\zsp$.  

The other properties are proved similarly to the proof of Theorem~\ref{d1evcov}.
\end{proof}

For each $n\in\zp$, let $\mat_n$ be the set of matrices of 
size~$m\times (n+1)$ with nonnegative integer entries. 
For a matrix $\gamma\in\mat_n$, 
its entries are denoted by $\gamma_{ik}\in\zp$, 
where $i=1,\dots,m$ and $k=0,\dots,n$. 
Let $U^\gamma$ be the following product 
\begin{equation}
\label{ugamma}
U^\gamma=\prod_{\substack{i=1,\dots,m,\\k=0,\dots,n}}
\big(u^i_k-a^i_k\big)^{\gamma_{ik}}.
\end{equation}

For each $n\in\zsp$, $i_0\in\{1,\dots,\nv\}$, and $k_0\in\{1,\dots,n\}$, 
denote by $M_{i_0,k_0}^n\subset\mat_n$
the subset of matrices $\al$ satisfying the following conditions
\begin{gather}
\lb{ali0k0}
\al_{i_0k_0}=1,\quad\forall\,k>k_0\quad\forall\,i\quad\al_{ik}=0,
\quad\forall\,i_1\neq i_0\quad\al_{i_1k_0}=0,\quad
\forall\,i_2>i_0\quad\al_{i_2,k_0-1}=0.
\end{gather}
In other words, for each $k>k_0$ the $k$-th column of any matrix $\al\in M_{i_0,k_0}^n$ is zero,
the $k_0$-th column contains only one nonzero entry $\al_{i_0k_0}=1$, and in the $(k_0-1)$-th column
one has $\al_{i_2,k_0-1}=0$ for all $i_2>i_0$.

Set also $M_{i_0,k_0}^0=\varnothing$ for all $i_0$, $k_0$. 

Consider a covering of order $\le\ocn$ given by vector fields $D_x+A$, $D_t+B$.
Let $b\in W$. 
We are going to study the structure of this covering  
on a neighborhood of the point $(a,b)\in\CE\times W$. 

Recall that the vector fields $A$, $B$ satisfy~\er{orderna},~\er{ordernb} 
and are analytic, according to the convention from Section~\ref{subs-conv}.
Therefore, taking a sufficiently small neighborhood of $(a,b)$,
we can assume that $A$ and $B$ are represented as absolutely convergent series
\begin{gather}
\label{aser}
A=\sum_{\al\in \mat_\ocn,\ l_1,l_2\in\zp}(x-x_0)^{l_1}(t-t_0)^{l_2}\cdot U^\al\cdot A^{l_1,l_2}_\al,\\
\lb{bser}
B=\sum_{\beta\in \mat_{\ocn+\eo-1},\ l_1,l_2\in\zp}(x-x_0)^{l_1}(t-t_0)^{l_2}\cdot U^\beta\cdot B^{l_1,l_2}_\beta,
\end{gather}
where $A^{l_1,l_2}_\al$, $B^{l_1,l_2}_\beta$ are vector fields on an open subset of~$W$.

\begin{remark}
\lb{abcoef0}
According to Theorem~\ref{evcov}, after a suitable gauge transformation we get
properties~\er{gaukak},~\er{gbxx0}, and~\er{gd=0} for all $i_0=1,\dots,\nv$ and $k_0\in\zsp$.
Using formulas~\er{aser},~\er{bser}, one obtains that these properties are equivalent to 
\beq
\lb{ab000}
A^{l_1,l_2}_0=B^{0,l_2}_0=0,\quad
A^{l_1,l_2}_{\tilde\al}=0,\quad\tilde\al\in M^\ocn_{i_0,k_0},\quad i_0=1,\dots,\nv,\quad k_0=1,\dots,\ocn,
\quad l_1,l_2\in\zp.
\ee
 
\end{remark}

\subsection{The algebras $\fdn^\ocn(\CE,a)$}
\lb{deffdoc}

Let $\ocn\in\zp$. Consider a point $a\in\CE$ given by~\er{pointev}.

\begin{remark}  
\label{inform}
The main idea of the definition of the Lie algebra $\fdn^\ocn(\CE,a)$  
can be informally outlined as follows. 
According to Theorem~\ref{evcov} and Remark~\ref{abcoef0}, 
any covering of order $\le\ocn$ is locally gauge equivalent 
to a covering given by vector fields $A$, $B$ that are of the form~\er{aser},~\er{bser} 
and satisfy~\er{gc}, \er{ab000}.

To define $\fdn^\ocn(\CE,a)$, we regard $A^{l_1,l_2}_\al$, $B^{l_1,l_2}_\beta$ from~\er{aser},~\er{bser} 
as abstract symbols. 
By definition, the algebra $\fdn^\ocn(\CE,a)$ is generated by the symbols $A^{l_1,l_2}_\al$, $B^{l_1,l_2}_\beta$ 
for ${\al\in\mat_\ocn}$, ${\be\in\mat_{\ocn+\eo-1}}$, $l_1,l_2\in\zp$.
Relations for these generators are provided by equations~\er{gc}, \er{ab000}.
The details of this construction are presented below.


\end{remark}

Let $\frl$ be the free Lie algebra generated 
by the symbols $\fla^{l_1,l_2}_\al$, $\flb^{l_1,l_2}_\beta$ for 
${\al\in\mat_\ocn}$, ${\be\in\mat_{\ocn+\eo-1}}$, $l_1,l_2\in\zp$.
In particular, we have
$$
\fla^{l_1,l_2}_\al\in\frl,\quad\ 
\flb^{l_1,l_2}_\beta\in\frl,\quad\ 
\big[\fla^{l_1,l_2}_\al,\flb^{l_1,l_2}_\beta\big]\in\frl\qquad 
\forall\,\al\in\mat_\ocn,\qquad \forall\,\be\in\mat_{\ocn+\eo-1},\qquad 
\forall\,l_1,l_2\in\zp.
$$
Consider the following formal power series with coefficients in~$\frl$
\begin{gather*}
\notag
\fla=\sum_{\al\in \mat_\ocn,\ l_1,l_2\in\zp}(x-x_0)^{l_1}(t-t_0)^{l_2}\cdot U^\al
\cdot \fla^{l_1,l_2}_\al,\\
\flb=\sum_{\beta\in \mat_{\ocn+\eo-1},\ l_1,l_2\in\zp}
(x-x_0)^{l_1}(t-t_0)^{l_2}\cdot U^\beta\cdot\flb^{l_1,l_2}_\beta.
\end{gather*}

Set 
\begin{gather}
\lb{dxflb}
D_x(\flb)=\sum_{\beta\in \mat_{\ocn+\eo-1},\ l_1,l_2\in\zp}
D_x\big((x-x_0)^{l_1}(t-t_0)^{l_2}U^\beta\big)\cdot\flb^{l_1,l_2}_\beta,\\
\lb{dtfla}
D_t(\fla)=\sum_{\al\in \mat_\ocn,\ l_1,l_2\in\zp}D_t\big((x-x_0)^{l_1}(t-t_0)^{l_2}U^\al\big)
\cdot \fla^{l_1,l_2}_\al,\\
\label{lieab}
[\fla,\flb]=\sum_{\substack{\al\in\mat_\ocn,\ \beta\in \mat_{\ocn+\eo-1},\\
l_1,l_2,l'_1,l'_2\in\zp}} 
(x-x_0)^{l_1+l'_1}(t-t_0)^{l_2+l'_2}\cdot 
U^\al\cdot U^{\beta}\cdot\big[\fla^{l_1,l_2}_\al,\flb^{l'_1,l'_2}_\beta\big].
\end{gather}
For any $\al\in\mat_\ocn$, $\beta\in \mat_{\ocn+\eo-1}$, $l_1,l_2\in\zp$, 
the expressions $D_x\big((x-x_0)^{l_1}(t-t_0)^{l_2}U^\beta\big)$ 
and $D_t\big((x-x_0)^{l_1}(t-t_0)^{l_2}U^\al\big)$ 
are functions of the variables $x$, $t$, $u^i_k$. 
Taking the corresponding Taylor series at the point~\eqref{pointev}, 
we regard these expressions as power series. 

Then~\er{dxflb},~\er{dtfla},~\er{lieab} are formal power series with coefficients in~$\frl$, 
and we have 
\begin{equation*}
D_x(\flb)-D_t(\fla)+[\fla,\flb]=
\sum_{\gamma\in \mat_{\ocn+\eo},\ l_1,l_2\in\zp}
(x-x_0)^{l_1}(t-t_0)^{l_2}\cdot U^\gamma\cdot\flz^{l_1,l_2}_\gamma
\end{equation*}
for some elements $\flz^{l_1,l_2}_\gamma\in\frl$. 

Let $\frid\subset\frl$ be the ideal generated by the elements
\begin{gather*}
\flz^{l_1,l_2}_\gamma,\qquad\fla^{l_1,l_2}_0,\qquad 
\flb^{0,l_2}_0,\qquad\gamma\in\mat_{\ocn+\eo},\qquad l_1,l_2\in\zp,\\
\fla^{l_1,l_2}_{\tilde\al},\qquad{\tilde\al}\in M^\ocn_{i_0,k_0},\qquad i_0=1,\dots,\nv,\qquad k_0=1,\dots,\ocn,
\qquad l_1,l_2\in\zp.
\end{gather*}
Set $\fdn^\ocn(\CE,a)=\frl/\frid$. 
Consider the natural homomorphism  
$\rho\cl\frl\to\frl/\frid=\fdn^\ocn(\CE,a)$ and set 
$$
\ga^{l_1,l_2}_\al=\rho\big(\fla^{l_1,l_2}_\al\big),\qquad\qquad 
\gb^{l_1,l_2}_\beta=\rho\big(\flb^{l_1,l_2}_\beta\big).
$$
The definition of~$\frid$ implies that the power series 
\begin{gather}
\label{gasumxt}
\ga=\sum_{\al\in \mat_\ocn,\ l_1,l_2\in\zp}(x-x_0)^{l_1}(t-t_0)^{l_2}\cdot U^\al
\cdot \ga^{l_1,l_2}_\al,\\
\label{gbsumxt}
\gb=\sum_{\beta\in \mat_{\ocn+\eo-1},\ l_1,l_2\in\zp}
(x-x_0)^{l_1}(t-t_0)^{l_2}\cdot U^\beta\cdot\gb^{l_1,l_2}_\beta.
\end{gather}
satisfy 
\beq
\lb{xgbtga}
D_x(\gb)-D_t(\ga)+[\ga,\gb]=0.
\ee

\begin{remark}
\lb{rem_fdpgen}
The Lie algebra $\fdn^\ocn(\CE,a)$ can be described in terms 
of generators and relations as follows. 

Equation~\er{xgbtga} is equivalent to some Lie algebraic relations for $\ga^{l_1,l_2}_\al$, $\gb^{l_1,l_2}_\beta$.

The algebra $\fdn^\ocn(\CE,a)$ is given by the generators $\ga^{l_1,l_2}_\al$, $\gb^{l_1,l_2}_\beta$,  
the relations arising from~\er{xgbtga}, and the following relations 
\beq
\lb{gagb00}
\ga^{l_1,l_2}_0=\gb^{0,l_2}_0=0,\quad
\ga^{l_1,l_2}_{\tilde\al}=0,\quad\tilde\al\in M^\ocn_{i_0,k_0},\quad i_0=1,\dots,\nv,\quad k_0=1,\dots,\ocn,
\quad l_1,l_2\in\zp.
\ee
\end{remark}


Recall that an \emph{action} of a Lie algebra $\bl$ on a manifold $W$
is a homomorphism from $\bl$ to the Lie algebra $\ve(W)$ of vector fields on $W$.

Let $W_1$, $W_2$ be manifolds, and $\rho_i\cl\bl\to\ve(W_i)$ be an action of $\bl$ on $W_i$ for $i=1,2$.
A \emph{morphism} connecting the actions 
$\rho_i\cl\bl\to\ve(W_i)$, $i=1,2$, is a map
$\vf\cl W_1\to W_2$ such that for any $Y\in\bl$ 
one has $\vf_*(\rho_1(Y))=\rho_2(Y)$, where $\vf_*$ is the differential of $\vf$.

Suppose that we have an action of $\fdn^\ocn(\CE,a)$ on a manifold $W$ given by
$$
\ga^{l_1,l_2}_{\al}\mapsto A^{l_1,l_2}_{\al}\in\ve(W),\qquad\qquad
\gb^{l_1,l_2}_{\beta}\mapsto B^{l_1,l_2}_{\beta}\in\ve(W)
$$
such that the corresponding 
power series~\er{aser},~\er{bser} are absolutely convergent
on a neighborhood of~$a$.  Then from~\er{xgbtga} it follows that~\er{aser},~\er{bser} satisfy~\er{gc}
and, therefore, determine a covering.

Combining this construction with Theorem~\ref{evcov} and Remark~\ref{abcoef0}, we obtain the following result.
\begin{theorem}
Any covering of order $\le\ocn$ on a neighborhood of $a\in\CE$
is locally gauge equivalent to the covering arising from
an action of the Lie algebra $\fdn^\ocn(\CE,a)$.

For a fixed covering of order $\le\ocn$, the corresponding action of~$\fdn^\ocn(\CE,a)$
is defined uniquely up to a local isomorphism.
\end{theorem}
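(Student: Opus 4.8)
The plan is to read the required action off the normal form supplied by Theorem~\ref{evcov}, and then to verify that the defining relations of $\fdn^\ocn(\CE,a)$ are exactly the constraints satisfied by the coefficient vector fields. I would start with an arbitrary covering of order $\le\ocn$ on a neighborhood of $a\in\CE$. By Theorem~\ref{evcov}, after a (unique) gauge transformation this covering is brought to normal form, so that $D_x+A$, $D_t+B$ satisfy~\er{gaukak},~\er{gbxx0}, and~\er{gd=0}; by Remark~\ref{abcoef0} these are equivalent to~\er{ab000}. Shrinking the neighborhood, I would expand $A$, $B$ as the absolutely convergent series~\er{aser},~\er{bser} with vector-field coefficients $A^{l_1,l_2}_\al,\,B^{l_1,l_2}_\beta\in\ve(W)$, and define a map by $\fla^{l_1,l_2}_\al\mapsto A^{l_1,l_2}_\al$, $\flb^{l_1,l_2}_\beta\mapsto B^{l_1,l_2}_\beta$. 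Since $\frl$ is free, this extends to a Lie algebra homomorphism $\frl\to\ve(W)$.

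The heart of the existence proof is that this homomorphism annihilates the ideal $\frid$, hence descends to $\fdn^\ocn(\CE,a)=\frl/\frid$. The generators $\fla^{l_1,l_2}_0$, $\flb^{0,l_2}_0$, and $\fla^{l_1,l_2}_{\tilde\al}$ with $\tilde\al\in M^\ocn_{i_0,k_0}$ map to zero precisely because the normal-form conditions~\er{ab000} hold. For the generators $\flz^{l_1,l_2}_\gamma$, I would observe that the scalar factors $(x-x_0)^{l_1}(t-t_0)^{l_2}U^\gamma$ do not involve the coefficient symbols, so substituting $\fla\mapsto A$, $\flb\mapsto B$ into the symbolic identity defining the $\flz^{l_1,l_2}_\gamma$ reproduces, coefficient by coefficient, the expansion of $D_x(B)-D_t(A)+[A,B]$. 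The zero-curvature equation~\er{gc} makes every such coefficient vanish, so each $\flz^{l_1,l_2}_\gamma$ maps to $0$. Thus the homomorphism factors through $\fdn^\ocn(\CE,a)$, yielding an action whose associated covering (via the construction preceding the theorem) is the normal-form covering; the original covering, being gauge equivalent to the latter, is therefore locally gauge equivalent to the covering arising from an action of $\fdn^\ocn(\CE,a)$.

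For the uniqueness statement I would invoke the uniqueness clause of Theorem~\ref{evcov}: for a fixed covering and a fixed $b\in W$, the normalizing gauge transformation is unique near $(a,b)$, so the coefficient vector fields, and hence the action, are completely determined by the covering. Two actions obtained from gauge-equivalent coverings, or from different choices of $b$ or of local coordinates on $W$, then differ only by a local diffeomorphism $\vf\cl W_1\to W_2$ intertwining the corresponding vector fields; this is exactly a morphism of actions in the sense defined above, and being invertible it is a local isomorphism.

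The step I expect to be the main obstacle is the bookkeeping in the crux above: checking that the bracket and the total-derivative operators act on the series coefficients in $\frl$ in the same way as on the genuine vector fields under $\fla\mapsto A$, $\flb\mapsto B$, so that the symbolic and the actual expansions of $D_x(\cdot)-D_t(\cdot)+[\cdot,\cdot]$ agree term by term, together with the justification of termwise differentiation and bracketing of the convergent series. Once this compatibility is in place, both halves of the theorem follow from Theorem~\ref{evcov}.
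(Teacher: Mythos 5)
Your proposal is correct and follows essentially the same route as the paper: the paper's proof is precisely the one-line combination of the construction "action $\leftrightarrow$ coefficient series" with Theorem~\ref{evcov} and Remark~\ref{abcoef0}, and you have simply spelled out the details (the homomorphism $\frl\to\ve(W)$ killing $\frid$, and uniqueness of the normalizing gauge transformation giving uniqueness of the action up to local isomorphism). The bookkeeping step you flag as the main obstacle is indeed the only content beyond Theorem~\ref{evcov}, and it goes through because the formal operations~\er{dxflb},~\er{dtfla},~\er{lieab} are defined termwise exactly as for the convergent analytic series, so the images of the $\flz^{l_1,l_2}_\gamma$ are the Taylor coefficients of $D_x(B)-D_t(A)+[A,B]$, which vanish by~\er{gc}.
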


Suppose that $\ocn\ge 1$.
Since any covering of order $\le\ocn-1$ is at the same time of 
order $\le\ocn$,
we have the surjective homomorphism $\fdn^{\ocn}(\CE,a)\to\fdn^{\ocn-1}(\CE,a)$ that maps the generators
\begin{gather*}
\ga^{l_1,l_2}_{\al},\qquad\exists\,i\qquad\al_{i,p}\neq 0,\\
\gb^{l_1,l_2}_{\beta},\qquad\exists\,i'\qquad{\beta}_{i',\ocn+\eo-1}\neq 0,
\end{gather*}
to zero and maps 
the other generators of $\fdn^{\ocn}(\CE,a)$ to the corresponding 
generators of $\fdn^{\ocn-1}(\CE,a)$.

Thus we obtain the following sequence of surjective homomorphisms of Lie algebras
\beq
\lb{fdnn-1}
\dots\to\fdn^{\ocn}(\CE,a)\to\fdn^{\ocn-1}(\CE,a)\to\dots\to\fdn^1(\CE,a)\to\fdn^0(\CE,a).
\ee



\section{The homomorphisms $\fdn^\ocn(\CE,a)\to\fdn^{\ocn-1}(\CE,a)$ and $\fdn^\ocn(\CE,a)\to\fdn^{0}(\CE,a)$ 
for KdV type equations}
\lb{seckdvtype}

In this section we study the algebras~\er{fdnn-1} for equations of the form 
\beq 
\lb{kdveq}
u_t=u_{xxx}+f(u,u_x),
\ee 
where $f$ is an arbitrary function. 

Set $u_0=u$ and $u_k=\dfrac{\pd^k u}{\pd x^k}$ for $k\in\zsp$. 
Let $\CE$ be the infinite prolongation of equation~\er{kdveq}. 
Then $\CE$ is the infinite-dimensional manifold with the coordinates $x$, $t$, $u_k$, $k\in\zp$. 

For equation~\er{kdveq}, the total derivative operators~\er{evdxdt} are 
\beq
\label{kdvdxdt}
  D_x=\frac{\pd}{\pd x}+\sum_{k\ge 0} u_{k+1}\frac{\pd}{\pd u_k},\qquad\qquad
  D_t=\frac{\pd}{\pd t}+\sum_{k\ge 0} 
  D_x^k\big(u_3+f(u_0,u_1)\big)\frac{\pd}{\pd u_k}.
\ee
Consider an arbitrary point $a\in\CE$ given by  
\begin{equation}
\lb{pointkdv}
a=(x=x_0,\,t=t_0,\,u_k=a_k)\,\in\,\CE,\qquad\qquad x_0,\,t_0,\,a_k\in\Com,\qquad k\in\zp.
\end{equation}
Since equation~\er{kdveq} is invariant with respect to the change of variables
$x\mapsto x-x_0$, $t\mapsto t-t_0$, we can assume $x_0=t_0=0$.


Let $\oc\in\zsp$.
According to Section~\ref{deffdoc}, 
the algebra $\fdn^\ocn(\CE,a)$ is described as follows.
Consider formal power series 
\begin{gather}
\label{kdvgaser}
\ga=\sum_{l_1,l_2,i_0,\dots,i_\ocn\ge 0} 
x^{l_1} t^{l_2}(u_0-a_0)^{i_0}\dots(u_\ocn-a_\ocn)^{i_\ocn}\cdot
\ga^{l_1,l_2}_{i_0\dots i_\ocn},\\
\lb{kdvgbser}
\gb=\sum_{l_1,l_2,j_0,\dots,j_{\ocn+2}\ge 0} 
x^{l_1} t^{l_2}(u_0-a_0)^{j_0}\dots(u_{\ocn+2}-a_{\ocn+2})^{j_{\ocn+2}}\cdot
\gb^{l_1,l_2}_{j_0\dots j_{\ocn+2}}
\end{gather}
satisfying 
\begin{gather}
\label{relgakdv}
\ga^{l_1,l_2}_{i_0\dots i_\ocn}=0\quad \text{if}\quad \exists\,r\in\{1,\dots,\ocn\}\quad \text{such that}\quad 
i_r=1,\quad i_n=0\quad \forall\,n>r,\\
\lb{kdvaukak}
\ga^{l_1,l_2}_{0\dots 0}=0\qquad\qquad\forall\,l_1,l_2\in\zp,\\
\lb{kdvbxx0}
\gb^{0,l_2}_{0\dots 0}=0\qquad\qquad\forall\,l_2\in\zp.
\end{gather}
Then $\ga^{l_1,l_2}_{i_0\dots i_\ocn}$, $\gb^{l_1,l_2}_{j_0\dots j_{\ocn+2}}$ 
are generators of the algebra $\fdn^\ocn(\CE,a)$, and the equation 
\beq
\label{gckdv}
D_x(\gb)-D_t(\ga)+[\ga,\gb]=0
\ee
provides relations for these generators (in addition to relations~\er{relgakdv},~\er{kdvaukak},~\er{kdvbxx0}). 

Note that condition~\er{relgakdv} is equivalent to 
\beq
\label{kdvd=0}
\frac{\pd}{\pd u_s}(\ga)\,\,\bigg|_{u_k=a_k,\ k\ge s}=0\qquad\qquad
\forall\,s\in\zsp.
\ee

Using~\er{kdvdxdt}, we can rewrite equation~\er{gckdv} as
\beq
\lb{kdvqzcr}
\frac{\pd}{\pd x}(\gb)+\sum_{k=0}^{\ocn+2} u_{k+1}\frac{\pd}{\pd u_k}(\gb)
-\frac{\pd}{\pd t}(\ga)-\sum_{k=0}^\ocn\Big(u_{k+3}+D_x^k\big(f(u_0,u_1)\big)\Big)\frac{\pd}{\pd u_k}(\ga)
+[\ga,\gb]=0.
\ee

\begin{proposition}
\lb{lemgenfdq}
The elements 
\beq
\lb{gal1alprop}
\ga^{l_1,0}_{i_0\dots i_\ocn},\qquad\qquad l_1,i_0,\dots,i_\ocn\in\zp,
\ee
generate the algebra $\fdn^\ocn(\ce,a)$.
\end{proposition}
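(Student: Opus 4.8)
The plan is to let $\mg\subseteq\fdn^\ocn(\CE,a)$ denote the Lie subalgebra generated by the elements~\er{gal1alprop}, and to prove $\mg=\fdn^\ocn(\CE,a)$. Since $\fdn^\ocn(\CE,a)$ is generated by all the coefficients $\ga^{l_1,l_2}_{i_0\dots i_\ocn}$ together with all the $\gb^{l_1,l_2}_{j_0\dots j_{\ocn+2}}$, it suffices to show that each of these lies in $\mg$. The only input is equation~\er{kdvqzcr}: extracting from it the coefficient of an arbitrary monomial $x^{l_1}t^{l_2}\prod_s(u_s-a_s)^{m_s}$ yields the defining relations of $\fdn^\ocn(\CE,a)$, and the whole argument consists in choosing, for each unknown generator, a monomial whose coefficient relation expresses that generator through elements already known to lie in $\mg$.

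The outer induction is on the $t$-degree $l_2$. The term $\pd(\ga)/\pd t$ in~\er{kdvqzcr} produces $l_2\,\ga^{l_1,l_2}_{i_0\dots i_\ocn}$ as the coefficient of $x^{l_1}t^{l_2-1}\prod_s(u_s-a_s)^{i_s}$, whereas every other term contributes either a $\gb$-coefficient of $t$-degree $l_2-1$, an $\ga$-coefficient of $t$-degree $l_2-1$ (through the factors $u_{k+3}+D_x^k(f)$, which do not involve $t$), or, from $[\ga,\gb]$, a bracket $[\ga^{\cdot,p},\gb^{\cdot,q}]$ with $p+q=l_2-1$. Thus this relation expresses each $\ga^{l_1,l_2}_{i_0\dots i_\ocn}$ with $l_2\ge1$ through generators of strictly smaller $t$-degree, and under the inductive hypothesis that all $\ga^{\cdot,l_2'}$ and $\gb^{\cdot,l_2'}$ with $l_2'<l_2$ already lie in $\mg$ we conclude $\ga^{l_1,l_2}_{i_0\dots i_\ocn}\in\mg$ for all $l_1,i_0,\dots,i_\ocn$.

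It then remains, at each fixed $t$-degree $l_2$, to recover the $\gb$-coefficients once all $\ga^{\cdot,l_2}$ are available; this is the reconstruction of the $t$-part of a zero-curvature representation from its $x$-part, carried out by a descending induction on the jet level. The coefficient of $u_{\ocn+3}$ in~\er{kdvqzcr} is bracket-free and reads $\pd(\gb)/\pd u_{\ocn+2}=\pd(\ga)/\pd u_\ocn$; as the right-hand side is independent of $u_{\ocn+1}$ and $u_{\ocn+2}$, this forces $\gb$ to be affine in $u_{\ocn+2}$ and identifies its $u_{\ocn+2}$-linear part with $\pd(\ga)/\pd u_\ocn\in\mg$. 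Peeling off one jet variable at a time, the coefficient of $u_{s+1}$ (after the dependence of $\gb$ on $u_{s+1},\dots,u_{\ocn+2}$ has been determined) isolates $\pd(\gb)/\pd u_s$ and expresses it through $\pd(\ga)/\pd u_{s-2}$ together with the $f$-corrections and brackets of $\ga$'s with the already-found higher-jet part of $\gb$. The key point is that $\ga$ does not depend on $u_{\ocn+1},u_{\ocn+2}$ and, by~\er{kdvaukak}, has no term of $u$-degree $0$, so these brackets never reintroduce the currently unknown piece of $\gb$. Integrating in $u_s$ and fixing the remaining constant of integration by~\er{kdvbxx0} places each $\gb^{l_1,l_2}_{j_0\dots j_{\ocn+2}}$ in $\mg$, the final $(x,u_0)$-dependence being handled by a secondary induction on $u$-degree, once more using~\er{kdvaukak} so that $[\ga,\gb]$ strictly raises $u$-degree.

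I expect the main obstacle to be exactly this last bookkeeping: ensuring that at every stage of the jet descent the bracket $[\ga,\gb]$ contributes only terms already assigned to $\mg$, so that the induction genuinely closes. The $t$-degree and jet-level filtrations must be combined correctly: the $t$-degree induction disposes of the brackets that lower $t$-degree, while the relation~\er{kdvaukak} (no constant term in $\ga$) and the restricted jet dependence of $\ga$ decouple the remaining same-$t$-degree brackets from the unknown $\gb$-data at each descent step.
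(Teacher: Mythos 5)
Your overall strategy is in fact the same as the paper's (an outer induction on the $t$-degree that recovers the $\ga$-coefficients, combined with a descent over jet variables that recovers the $\gb$-coefficients), but your second step has a genuine gap: you never deal with the term $-\frac{\pd}{\pd t}(\ga)$ in~\er{kdvqzcr}. When you extract, at fixed $t$-degree $l_2$, the relations that are supposed to determine the coefficients $\gb^{l_1,l_2}_{j_0\dots j_{\ocn+2}}$, this term contributes the elements $(l_2+1)\,\ga^{l_1,l_2+1}_{i_0\dots i_\ocn}$, i.e.\ $\ga$-coefficients of $t$-degree $l_2+1$, which your outer induction has not yet placed in $\mg$. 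This is not an empty worry: for $s+1\le\ocn$ the series $\frac{\pd}{\pd t}(\ga)$ genuinely depends on $u_{s+1}$, so ``the coefficient of $u_{s+1}$'' picks up exactly these unknown elements at every level $s\le\ocn-1$ of your descent (and such levels exist, since $\ocn\ge 1$). The result is a circle: determining $\gb^{\cdot,l_2}$ requires $\ga^{\cdot,l_2+1}$ (through $\pd_t(\ga)$), while determining $\ga^{\cdot,l_2+1}$ by your step 1 requires $\gb^{\cdot,l_2}$. The devices you invoke --- the restricted jet dependence of $\ga$ and the absence of a constant term~\er{kdvaukak} --- control the bracket $[\ga,\gb]$ and the constants of integration, but they do nothing against $\pd_t(\ga)$, which is the term you yourself should have flagged as the obstacle.

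The missing idea is the normal-form relation~\er{relgakdv}, equivalently~\er{kdvd=0}, used together with evaluation at the point~$a$. The paper's proof of Lemma~\ref{gbllalgl} introduces the operator $\ds(C)=\Big(\frac{\pd}{\pd u_{n+1}}(C)\Big)\Big|_{u_k=a_k,\ k\ge n+1}$: differentiating with respect to $u_{n+1}$ alone is not enough; one must also substitute $u_k=a_k$ for $k\ge n+1$, after which \er{kdvd=0} gives $\ds\big(\tfrac{\pd}{\pd t}(\ga)\big)=0$ and $\ds(\ga)=0$ (equations~\er{dsga0},~\er{dsgagb}). This kills the problematic $\pd_t(\ga)$ term and simultaneously reduces the bracket to $\big[\ga\big|_{u_k=a_k,\ k\ge n+1},\,\ds(\gb)\big]$, whose $\gb$-entries have already been handled at higher jet levels. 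With this device the coefficients $\gb^{l_1,l_2}_{j}$ with $j\neq 0$ are obtained for \emph{all} $t$-degrees at once, with no induction on $l_2$ (Lemma~\ref{gbllalgl}); the coefficients $\gb^{l_1,l_2}_{0\dots 0}$ are then recovered by substituting $u_k=a_k$ for all $k$ and using~\er{kdvaukak},~\er{kdvbxx0} (Lemma~\ref{gb00mgl}); and only after that does the $t$-degree step --- your step 1, the paper's Lemma~\ref{gallmg} --- close the argument, since by then every $\gb$-coefficient of $t$-degree $\le l$ is already expressed through $\ga$-coefficients of $t$-degree $\le l$. Without~\er{relgakdv} your descent cannot be completed as described.
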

\begin{proof}
For each $l\in\zp$, 
denote by $\mg_l\subset \fdn^{\ocn}(\CE,a)$ the subalgebra generated by 
the elements $\ga^{l_1,l_2}_{i_0\dots i_\ocn}$ with $l_2\le l$.
To prove Proposition~\ref{lemgenfdq}, we need some lemmas.

\begin{lemma}
\label{gbllalgl}
Let $l_1,l_2,j_0,\dots,j_{\ocn+2}\in\zp$ be such that $j_0+\dots+j_{\ocn+2}>0$. 
Then $\gb^{l_1,l_2}_{j_0\dots j_{\ocn+2}}\in\mg_{l_2}$.
\end{lemma}
\begin{proof}
For any $j_0,\dots,j_{\ocn+2}\in\zp$ satisfying $j_0+\dots+j_{\ocn+2}>0$, 
denote by $\rho(j_0,\dots,j_{\ocn+2})$ the maximal integer $r\in\{0,1,\dots,\ocn+2\}$ 
such that $j_r\neq 0$. 

Differentiating~\eqref{kdvqzcr} with respect to $u_{\ocn+3}$, we obtain 
\beq
\lb{uq2buqa}
\dfrac{\pd}{\pd u_{\ocn+2}}(\gb)=\dfrac{\pd}{\pd u_\ocn}(\ga),
\ee
which implies $\gb^{l_1,l_2}_{j_0\dots j_{\ocn+2}}\in\mg_{l_2}$ 
for all $l_1,l_2,j_0,\dots,j_{\ocn+2}\in\zp$ obeying $\rho(j_0,\dots,j_{\ocn+2})=\ocn+2$.

Let $n\in\{0,1,\dots,\ocn+1\}$ be such that  
\beq
\lb{assumgbn}
\gb^{l_1,l_2}_{j'_0\dots j'_{\ocn+2}}\in\mg_{l_2}\quad
\text{for all $l_1,l_2,j'_0,\dots,j'_{\ocn+2}\in\zp$ satisfying $\rho(j'_0,\dots,j'_{\ocn+2})>n$}.
\ee  
We are going to show that 
$\gb^{l_1,l_2}_{\tilde\jmath_0\dots\tilde\jmath_{\ocn+2}}\in\mg_{l_2}$ 
for all $l_1,l_2,\tilde\jmath_0,\dots,\tilde\jmath_{\ocn+2}\in\zp$ 
satisfying 
$\rho(\tilde\jmath_0,\dots,\tilde\jmath_{\ocn+2})=n$.  

For any power series $C$ of the form 
$$
C=\sum_{l_1,l_2,d_0,\dots,d_k\ge 0} 
x^{l_1} t^{l_2}(u_0-a_0)^{d_0}\dots(u_k-a_k)^{d_k}\cdot
C^{l_1,l_2}_{d_0\dots d_k},\qquad\quad C^{l_1,l_2}_{d_0\dots d_k}\in\fdn^\ocn(\ce,a),
$$
set 
$$
\ds(C)=\Big(\frac{\pd}{\pd u_{n+1}}(C)\Big)\,\Bigg|_{u_k=a_k,\ k\ge n+1}.
$$
That is, in order to obtain $\ds(C)$, we differentiate $C$ with respect to $u_{n+1}$ 
and then substitute $u_k=a_k$ for all $k\ge n+1$.

Equation~\er{kdvd=0} implies
\beq
\lb{dsga0}
\ds\Big(\frac{\pd}{\pd t}(\ga)\Big)=0.
\ee
Combining~\er{kdvqzcr} with~\er{dsga0}, we get 
\beq
\lb{dsdxgbsum}
\ds\big(D_x(\gb)\big)=
\ds\bigg(\sum_{k=0}^\ocn\Big(u_{k+3}+D_x^k\big(f(u_0,u_1)\big)\Big)\frac{\pd}{\pd u_k}(\ga)\bigg)
-\ds\big([\ga,\gb]\big).
\ee
In equation~\er{dsdxgbsum}, we regard $f(u_0,u_1)$ as a power series, 
using the Taylor series of the function $f(u_0,u_1)$ at the point~\er{pointkdv}.

Using~\er{kdvgbser}, one obtains
\begin{multline}
\label{dsdxgb}
\ds\big(D_x(\gb)\big)=
\sum_{\substack{l_1,l_2,j_0,\dots,j_{\ocn+2}\ge 0,\\
\rho(j_0,\dots,j_{\ocn+2})=n}}
j_{n}x^{l_1}t^{l_2}
(u_0-a_0)^{j_0}\dots(u_{n-1}-a_{n-1})^{j_{n-1}}(u_{n}-a_{n})^{j_{n}-1}
\gb^{l_1,l_2}_{j_0\dots j_{\ocn+2}}+\\
+\ds\Bigg(\sum_{\substack{l_1,l_2,j_0,\dots,j_{\ocn+2}\ge 0,\\
\rho(j_0,\dots,j_{\ocn+2})>n}}
t^{l_2} 
D_x\Big(x^{l_1}(u_0-a_0)^{j_0}\dots(u_{\ocn+2}-a_{\ocn+2})^{j_{\ocn+2}}\Big)
\cdot
\gb^{l_1,l_2}_{j_0\dots j_{\ocn+2}}\Bigg).
\end{multline}
From~\er{kdvd=0} it follows that $\ds(\ga)=0$, which yields 
\begin{multline}
\label{dsgagb}
\ds\big([\ga,\gb]\big)=\Big[\ds(\ga),\,\gb\,\Big|_{u_k=a_k,\ k\ge n+1}\Big]
+\Big[\ga\,\Big|_{u_k=a_k,\ k\ge n+1},\,\ds(\gb)\Big]=\\
=\Bigg[\ga\,\Big|_{u_k=a_k,\ k\ge n+1},\,
\ds\Bigg(\sum_{\substack{l_1,l_2,j_0,\dots,j_{\ocn+2}\ge 0,\\
\rho(j_0,\dots,j_{\ocn+2})>n}}
x^{l_1} t^{l_2}(u_0-a_0)^{j_0}\dots(u_{\ocn+2}-a_{\ocn+2})^{j_{\ocn+2}}\cdot
\gb^{l_1,l_2}_{j_0\dots j_{\ocn+2}}\Bigg)\Bigg].
\end{multline}
In view of~\er{dsdxgb},~\er{dsgagb},
for any $l_1,l_2,\tilde\jmath_0,\dots,\tilde\jmath_{\ocn+2}\in\zp$ satisfying  
$\rho(\tilde\jmath_0,\dots,\tilde\jmath_{\ocn+2})=n$ 
the element $\gb^{l_1,l_2}_{\tilde\jmath_0\dots\tilde\jmath_{\ocn+2}}$ 
appears only once on the left-hand side of~\er{dsdxgbsum} 
and does not appear on the right-hand side of~\er{dsdxgbsum}.

Combining~\er{dsdxgbsum},~\er{dsdxgb},~\er{dsgagb},
we obtain that the element $\gb^{l_1,l_2}_{\tilde\jmath_0\dots\tilde\jmath_{\ocn+2}}$ 
is equal to a linear combination of elements of the form 
\beq
\label{elemgagb}
\ga^{l_1',l_2'}_{i_0\dots i_\ocn},\quad
\gb^{\hat{l}_1,\hat{l}_2}_{\hat{\jmath}_0\dots\hat{\jmath}_{\ocn+2}},\quad
\Big[\ga^{l_1',l_2'}_{i_0\dots i_\ocn},
\gb^{\hat{l}_1,\hat{l}_2}_{\hat{\jmath}_0\dots\hat{\jmath}_{\ocn+2}}\Big],\quad
l_2'\le l_2,\quad \hat{l}_2\le l_2,\quad 
\rho(\hat{\jmath}_0,\dots,\hat{\jmath}_{\ocn+2})>n.
\ee
Obviously, for any $\hat{l}_2\le l_2$ one has $\mg_{\hat{l}_2}\subset\mg_{l_2}$.
Taking into account assumption~\er{assumgbn}, 
we obtain that the elements~\er{elemgagb} belong to $\mg_{{l}_2}$. 
Hence $\gb^{l_1,l_2}_{\tilde\jmath_0\dots\tilde\jmath_{\ocn+2}}\in\mg_{{l}_2}$.

The proof is completed by induction.
\end{proof}

\begin{lemma}
\label{gb00mgl}
For all $l_1,l_2\in\zp$, one has $\gb^{l_1,l_2}_{0\dots 0}\in\mg_{l_2}$.
\end{lemma}
\begin{proof}
According to~\er{kdvbxx0}, we have $\gb^{0,l_2}_{0\dots 0}=0$. 
Therefore, it is sufficient 
to prove $\gb^{l_1,l_2}_{0\dots 0}\in\mg_{l_2}$ for $l_1>0$.


Note that condition~\er{kdvaukak} implies 
\beq
\lb{gatga0}
\ga\Big|_{u_k=a_k,\ k\ge 0}=0,\qquad\qquad
\frac{\pd}{\pd t}(\ga)\,\bigg|_{u_k=a_k,\ k\ge 0}=0.
\ee
In view of~\er{kdvgbser}, one has  
\beq
\lb{pdxgb0}
\frac{\pd}{\pd x}(\gb)\,\bigg|_{u_k=a_k,\ k\ge 0}=
\sum_{l_1>0,\ l_2\ge 0}l_1x^{l_1-1} t^{l_2}\cdot\gb^{l_1,l_2}_{0\dots 0}.
\ee
Substituting $u_k=a_k$ for all $k\in\zp$ in~\er{kdvqzcr} and 
using~\er{gatga0},~\er{pdxgb0}, we get 
\begin{multline}
\lb{suml1gb0}
\sum_{l_1>0,\ l_2\ge 0} 
l_1x^{l_1-1} t^{l_2}\cdot\gb^{l_1,l_2}_{0\dots 0}=\\
=-\bigg(\sum_{k=0}^{\ocn+2} u_{k+1}\frac{\pd}{\pd u_k}(\gb)\bigg)
\,\bigg|_{u_k=a_k,\ k\ge 0}
+\bigg(\sum_{k=0}^\ocn\Big(u_{k+3}+D_x^k\big(f(u_0,u_1)\big)\Big)
\frac{\pd}{\pd u_k}(\ga)\bigg)\,\bigg|_{u_k=a_k,\ k\ge 0}.
\end{multline}
Combining~\er{kdvgaser},~\er{kdvgbser},~\er{suml1gb0}, 
we see that for any $l_1>0$ and $l_2\ge 0$ the element $\gb^{l_1,l_2}_{0\dots 0}$
is equal to a linear combination of elements of the form 
\beq
\label{gagabj1}
\ga^{l'_1,l_2}_{i_0\dots i_\ocn},\qquad
\gb^{l'_1,l_2}_{j_0\dots j_{\ocn+2}},\qquad\quad 
j_0+\dots+j_{\ocn+2}=1.
\ee
According to Lemma~\ref{gbllalgl} and the definition of $\mg_{l_2}$, 
the elements~\er{gagabj1} belong to $\mg_{l_2}$. 
Thus $\gb^{l_1,l_2}_{0\dots 0}\in\mg_{l_2}$.
\end{proof}

\begin{lemma}
\label{gallmg}
For all $l_1,l,i_0,\dots,i_\ocn\in\zp$, we have 
$\ga^{l_1,l+1}_{i_0\dots i_\ocn}\in\mg_l$.
\end{lemma}
\begin{proof}
Using~\er{kdvgaser}, we can rewrite equation~\er{kdvqzcr} as
\begin{multline*}
\sum_{l_1,l,i_0,\dots,i_\ocn\ge 0}(l+1) 
x^{l_1} t^{l}(u_0-a_0)^{i_0}\dots(u_\ocn-a_\ocn)^{i_\ocn}\cdot\ga^{l_1,l+1}_{i_0\dots i_\ocn}=\\
=\frac{\pd}{\pd x}(\gb)+\sum_{k=0}^{\ocn+2} u_{k+1}\frac{\pd}{\pd u_k}(\gb)
-\sum_{k=0}^\ocn \Big(u_{k+3}+D_x^k\big(f(u_0,u_1)\big)\Big)
\frac{\pd}{\pd u_k}(\ga)+[\ga,\gb].
\end{multline*}
This implies that $\ga^{l_1,l+1}_{i_0\dots i_\ocn}$ 
is equal to a linear combination of elements of the form 
\beq
\label{elemgall}
\ga^{\hat{l}_1,\hat{l}_2}_{\hat{\imath}_0\dots\hat{\imath}_\ocn},\qquad
\gb^{\tilde{l}_1,\tilde{l}_2}_{\tilde{\jmath}_0\dots\tilde{\jmath}_{\ocn+2}},\qquad
\Big[\ga^{\hat{l}_1,\hat{l}_2}_{\hat{\imath}_0\dots\hat{\imath}_\ocn},
\gb^{\tilde{l}_1,\tilde{l}_2}_{\tilde{\jmath}_0\dots\tilde{\jmath}_{\ocn+2}}\Big],\qquad
\hat{l}_2\le l,\qquad \tilde{l}_2\le l.
\ee
Using Lemmas~\ref{gbllalgl},~\ref{gb00mgl} and the condition~$\tilde{l}_2\le l$, we get 
$\gb^{\tilde{l}_1,\tilde{l}_2}_{\tilde{\jmath}_0\dots\tilde{\jmath}_{\ocn+2}}\in\mg_{\tilde{l}_2}\subset\mg_{l}$. 
Therefore, the elements~\er{elemgall} belong to $\mg_{l}$. 
Hence $\ga^{l_1,l+1}_{i_0\dots i_\ocn}\in\mg_{l}$.
\end{proof}

Return to the proof of Proposition~\ref{lemgenfdq}.
According to Lemmas~\ref{gbllalgl},~\ref{gb00mgl} and the definition of~$\mg_{l}$, 
we have 
$\ga^{l_1,l_2}_{i_0\dots i_\ocn},\gb^{l_1,l_2}_{j_0\dots j_{\ocn+2}}\in\mg_{l_2}$ 
for all $l_1,l_2,i_0,\dots i_\ocn,j_0,\dots,j_{\ocn+2}\in\zp$. 
Lemma~\ref{gallmg} implies that 
\beq
\notag
\mg_{l_2}\subset\mg_{l_2-1}\subset\mg_{l_2-2}\subset\dots\subset\mg_0.
\ee
Therefore, $\fdn^{\ocn}(\CE,a)$ is equal to $\mg_0$, which is generated by the elements~\er{gal1alprop}.
\end{proof}

From~\er{uq2buqa} it follows that $\gb$ is of the form
\begin{equation}
\label{buq2a}
\gb=u_{\ocn+2}\frac{\pd}{\pd u_\ocn}(\ga)+\gb_0(x,t,u_0,\dots,u_{\ocn+1}),
\end{equation}
where $\gb_0(x,t,u_0,\dots,u_{\ocn+1})$ is a power series in the variables 
$x$, $t$, $u_0-a_0,\dots,u_{\ocn+1}-a_{\ocn+1}$. 

Differentiating~\eqref{kdvqzcr} with respect to $u_{\ocn+2}$, $u_{\ocn+1}$ 
and using~\er{buq2a}, one gets 
\begin{equation*}
\frac{\pd^2}{\pd u_{\ocn}\pd u_{\ocn}}(\ga)+\frac{\pd^2}{\pd u_{\ocn+1}\pd u_{\ocn+1}}(\gb_0)=0.
\end{equation*}
Therefore, $\gb_0=\gb_0(x,t,u_0,\dots,u_{\ocn+1})$ is of the form 
\begin{equation}
\label{buq2b}
\gb_0=-\frac12(u_{\ocn+1})^2\frac{\pd^2}{\pd u_{\ocn}\pd u_{\ocn}}(\ga)+
u_{\ocn+1}\gb_{01}(x,t,u_0,\dots,u_{\ocn})+\gb_{00}(x,t,u_0,\dots,u_{\ocn}),
\end{equation}
where $\gb_{0i}(x,t,u_0,\dots,u_{\ocn})$ is a power series 
in the variables $x$, $t$, $u_0-a_0,\dots,u_{\ocn}-a_{\ocn}$ for $i=0,1$.

Applying the operator~$\dfrac{\pd^3}{\pd u_{\ocn+1}\pd u_{\ocn+1}\pd u_{\ocn+1}}$ 
to equation~\er{kdvqzcr} and using~\er{buq2a},~\er{buq2b}, 
we get 
$$
\dfrac{\pd^3}{\pd u_{\ocn}\pd u_{\ocn}\pd u_{\ocn}}(\ga)=0.
$$
Hence $\ga$ is of the form
\beq
\label{ga210}
\ga=(u_{\ocn}-a_\ocn)^2\ga_2(x,t,u_0,\dots,u_{\ocn-1})+
(u_{\ocn}-a_\ocn)\ga_1(x,t,u_0,\dots,u_{\ocn-1})+\ga_0(x,t,u_0,\dots,u_{\ocn-1}),
\ee
where $\ga_j(x,t,u_0,\dots,u_{\ocn-1})$ is a power series 
in the variables $x$, $t$, $u_0-a_0,\dots,u_{\ocn-1}-a_{\ocn-1}$ for $j=0,1,2$.

Recall that we assume $\ocn\ge 1$.
Equation~\er{kdvd=0} for $s=\ocn$ yields 
\beq
\label{uqga10}
\ga_1(x,t,u_0,\dots,u_{\ocn-1})=0.
\ee
Combining~\er{buq2a},~\er{buq2b},~\er{ga210},~\er{uqga10}, we get
\begin{multline}
\label{gblong}
\gb=2u_{\ocn+2}(u_{\ocn}-a_\ocn)\ga_2(x,t,u_0,\dots,u_{\ocn-1})-(u_{\ocn+1})^2\ga_2(x,t,u_0,\dots,u_{\ocn-1})+\\
+u_{\ocn+1}\gb_{01}(x,t,u_0,\dots,u_{\ocn})+\gb_{00}(x,t,u_0,\dots,u_{\ocn}).
\end{multline}

Applying the operator~$\dfrac{\pd^2}{\pd u_{\ocn+1}\pd u_{\ocn+1}}$ to equation~\er{kdvqzcr}, 
one gets
\beq
\label{uq1uq1}
-2D_x(\ga_2)+2\frac{\pd}{\pd u_\ocn}(\gb_{01})-2[\ga_0,\ga_2]=0.
\ee
Differentiating~\er{uq1uq1} with respect to~$u_{\ocn}$, we obtain 
\beq
\label{uq1uq1uq}
-2\frac{\pd}{\pd u_{\ocn-1}}(\ga_2)+2\frac{\pd^2}{\pd u_\ocn\pd u_\ocn}(\gb_{01})=0.
\ee
Applying the operator~$\dfrac{\pd^3}{\pd u_{\ocn}\pd u_{\ocn}\pd u_{\ocn+2}}$ to
equation~\er{kdvqzcr}, one gets
\beq
\lb{uoc1a2}
4\frac{\pd}{\pd u_{\ocn-1}}(\ga_2)+
\frac{\pd^2}{\pd u_{\ocn}\pd u_{\ocn}}(\gb_{01})-2\frac{\pd}{\pd u_{\ocn-1}}(\ga_2)=0
\ee
Equations~\er{uq1uq1uq},~\er{uoc1a2} imply 
\beq
\label{uqqqa1}
\frac{\pd}{\pd u_{\ocn-1}}\big(\ga_2(x,t,u_0,\dots,u_{\ocn-1})\big)=0.
\ee
Applying the operator~$\dfrac{\pd^2}{\pd u_{\ocn}\pd u_{\ocn+2}}$ to equation~\er{kdvqzcr} and using~\er{uqqqa1}, we get 
\beq
\lb{dxga2uoc}
2D_x(\ga_2)+\frac{\pd}{\pd u_\ocn}(\gb_{01})+2[\ga_0,\ga_2]=0.
\ee
Combining~\er{dxga2uoc} with~\er{uq1uq1}, we obtain 
\beq
\lb{dxga2g0}
D_x(\ga_2)+[\ga_0,\ga_2]=0.
\ee
\begin{lemma}
\lb{ga2uk}
One has
\beq
\lb{ukga2}
\frac{\pd}{\pd u_k}(\ga_2)=0\qquad\qquad\forall\,k\in\zp.
\ee
\end{lemma}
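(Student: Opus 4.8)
The plan is to establish $\frac{\pd}{\pd u_k}(\ga_2)=0$ for every $k\in\zp$ by descending induction on $k$, using equation~\er{dxga2g0}, that is $D_x(\ga_2)+[\ga_0,\ga_2]=0$, as the main tool. First note that by~\er{ga210} together with $\ga_1=0$ from~\er{uqga10}, the series $\ga_2$ depends only on $x,t,u_0-a_0,\dots,u_{\ocn-1}-a_{\ocn-1}$, so $\frac{\pd}{\pd u_k}(\ga_2)=0$ holds trivially for $k\ge\ocn$, while the case $k=\ocn-1$ is precisely~\er{uqqqa1}. The remaining task is to show $\frac{\pd}{\pd u_m}(\ga_2)=0$ for $0\le m\le\ocn-2$, which I would prove by downward induction, assuming $\frac{\pd}{\pd u_j}(\ga_2)=0$ for all $j>m$.

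For the inductive step I would differentiate~\er{dxga2g0} with respect to $u_{m+1}$. Under the inductive hypothesis $\ga_2$ depends only on $x,t,u_0,\dots,u_m$, so $D_x(\ga_2)=\frac{\pd}{\pd x}(\ga_2)+\sum_{k=0}^{m}u_{k+1}\frac{\pd}{\pd u_k}(\ga_2)$ and its only $u_{m+1}$-dependent term is $u_{m+1}\frac{\pd}{\pd u_m}(\ga_2)$. Since also $\frac{\pd}{\pd u_{m+1}}(\ga_2)=0$, applying $\frac{\pd}{\pd u_{m+1}}$ to~\er{dxga2g0} yields
\[
\frac{\pd}{\pd u_m}(\ga_2)+\Big[\frac{\pd \ga_0}{\pd u_{m+1}},\,\ga_2\Big]=0.
\]

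The point that looks like the main obstacle is that this identity couples $\ga_2$ to the unknown $\frac{\pd \ga_0}{\pd u_{m+1}}$ through the commutator, so it does not close on its own. This is exactly where the gauge normal form~\er{kdvd=0} enters: taking $s=m+1$ there (note $1\le m+1\le\ocn-1$) gives $\frac{\pd}{\pd u_{m+1}}(\ga)\big|_{u_k=a_k,\ k\ge m+1}=0$, and because $\ga=(u_\ocn-a_\ocn)^2\ga_2+\ga_0$ with $m+1<\ocn$, the substitution $u_\ocn=a_\ocn$ annihilates the quadratic term and leaves $\frac{\pd \ga_0}{\pd u_{m+1}}\big|_{u_k=a_k,\ k\ge m+1}=0$. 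I would then evaluate the displayed identity on the locus $u_k=a_k$, $k\ge m+1$: the term $\frac{\pd}{\pd u_m}(\ga_2)$ is unchanged, since $\ga_2$ and hence its $u_m$-derivative do not involve these variables, whereas the commutator vanishes because its first entry does on that locus. Hence $\frac{\pd}{\pd u_m}(\ga_2)=0$, and the induction goes through down to $m=0$. The only thing to keep straight throughout is which jet variables $\ga_2$ and $\ga_0$ depend on at each stage; the substance of the argument is the interplay between the covariant-constancy relation~\er{dxga2g0} and the normalization~\er{kdvd=0}, the latter being precisely what forces the obstructing commutator to evaluate to zero.
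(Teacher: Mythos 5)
Your proof is correct and is essentially the paper's own argument: the paper phrases it as a contradiction, taking the maximal $k_0$ with $\frac{\pd}{\pd u_{k_0}}(\ga_2)\neq 0$ (which plays exactly the role of your inductive index $m$), and then derives the same key identity $\frac{\pd}{\pd u_{k_0}}(\ga_2)+\Big[\frac{\pd}{\pd u_{k_0+1}}(\ga_0),\,\ga_2\Big]=0$ by differentiating~\er{dxga2g0}, kills the commutator on the locus $u_k=a_k$, $k\ge k_0+1$, via~\er{kdvd=0}, and uses~\er{uqqqa1} to bound $k_0<\ocn-1$, just as you do. The maximal-counterexample versus downward-induction distinction is purely a matter of packaging.
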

\begin{proof}
Suppose that~\eqref{ukga2} does not hold. Let $k_0$ be the maximal integer 
such that $\dfrac{\pd}{\pd u_{k_0}}(\ga_2)\neq 0$. 

From~\er{uqqqa1} it follows that $k_0<\ocn-1$. 
Equation~\er{kdvd=0} for $s=k_0+1$ implies 
\beq
\lb{pduk0ga0}
\frac{\pd}{\pd u_{k_0+1}}(\ga_0)\,\bigg|_{u_k=a_k,\ k\ge k_0+1}=0.
\ee
Differentiating~\er{dxga2g0} with respect to~$u_{k_0+1}$, we obtain 
\beq
\lb{pdga2g0}
\frac{\pd}{\pd u_{k_0}}(\ga_2)+\Big[\frac{\pd}{\pd u_{k_0+1}}(\ga_0),\,\ga_2\Big]=0.
\ee
Substituting $u_k=a_k$ in~\er{pdga2g0} for all $k\ge k_0+1$ and 
using~\er{pduk0ga0}, one gets $\dfrac{\pd}{\pd u_{k_0}}(\ga_2)=0$, 
which contradicts to our assumption.				 
\end{proof}
From~\er{ukga2} it follows that equation~\er{dxga2g0} reads 
\beq
\lb{pdxga2g0}
\frac{\pd}{\pd x}(\ga_2)+[\ga_0,\ga_2]=0.
\ee
Note that condition~\er{kdvaukak} implies 
\beq
\lb{ga0ukak}
\ga_0\Big|_{u_k=a_k,\ k\ge 0}=0.
\ee
Substituting $u_k=a_k$ in~\er{pdxga2g0} for all $k\ge 0$ and using~\er{ukga2},~\er{ga0ukak}, 
we get 
\beq
\lb{pdxga20}
\frac{\pd}{\pd x}(\ga_2)=0.
\ee
Combining~\er{pdxga20} with~\er{pdxga2g0}, one obtains 
\beq
\lb{ga0ga20}
[\ga_2,\ga_0]=0.
\ee 

In view of~\er{kdvgaser},~\er{ga210}, we have 
\beq
\lb{ga0sum}
\ga_0=\sum_{l_1,l_2,i_0,\dots,i_{\ocn-1}\ge 0} 
x^{l_1} t^{l_2}(u_0-a_0)^{i_0}\dots(u_{\ocn-1}-a_{\ocn-1})^{i_{\ocn-1}}\cdot
\ga^{l_1,l_2}_{i_0\dots i_{\ocn-1}0}
\ee
According to~\er{kdvgaser},~\er{ga210},~\er{ukga2},~\er{pdxga20}, one has 
\beq
\lb{ga2sum}
\ga_2=\sum_{l\ge 0}t^l\cdot\tilde\ga^l,\qquad\qquad 
\tilde\ga^l=\ga^{0,l}_{0\dots 02}\in\fdn^\ocn(\CE,a).
\ee

Combining~\er{ga210},~\er{uqga10},~\er{ga0sum},~\er{ga2sum} with 
Proposition~\ref{lemgenfdq}, we obtain that the elements 
\beq
\lb{tga0ga}
\tilde\ga^0,\qquad\ga^{l_1,0}_{i_0\dots i_{\ocn-1}0},\qquad 
l_1,i_0,\dots,i_{\ocn-1}\in\zp,
\ee
generate the algebra $\fdn^\ocn(\CE,a)$.

Substituting $t=0$ in~\er{ga0ga20} and using~\er{ga0sum},~\er{ga2sum}, one gets 
\beq
\lb{tgagal1}
\big[\tilde\ga^0,
\ga^{l_1,0}_{i_0\dots i_{\ocn-1}0}\big]=0\qquad\forall\,l_1,i_0,\dots,i_{\ocn-1}\in\zp.
\ee
Since the elements~\er{tga0ga} generate the algebra $\fdn^\ocn(\CE,a)$, equation~\er{tgagal1} yields
\beq
\lb{tga0fdq}
\big[\tilde\ga^0,\,\fdn^\ocn(\CE,a)\big]=0.
\ee

\begin{lemma} One has
\beq
\lb{tgalfdq}
\big[\tilde\ga^l,\,\fdn^\ocn(\CE,a)\big]=0\qquad\qquad\forall\,l\in\zp.
\ee
\end{lemma}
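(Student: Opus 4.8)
The plan is to prove \er{tgalfdq} by induction on $l$. The base case $l=0$ is precisely \er{tga0fdq}, already established above. For the inductive step I would fix $l\ge 1$, assume that $\big[\tilde\ga^{l'},\fdn^\ocn(\CE,a)\big]=0$ for every $l'<l$, and then show that $\tilde\ga^l$ commutes with a full generating set of $\fdn^\ocn(\CE,a)$. By \er{tga0ga}, such a generating set is given by $\tilde\ga^0$ together with the elements $\ga^{l_1,0}_{i_0\dots i_{\ocn-1}0}$, so it suffices to check that $\tilde\ga^l$ commutes with each of these.

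The only new input needed is the identity $[\ga_2,\ga_0]=0$ from \er{ga0ga20}, which so far has been used only at $t=0$ to produce the base case. I would instead read it off at every order in $t$, by substituting the explicit series \er{ga0sum} for $\ga_0$ and \er{ga2sum} for $\ga_2$. Collecting in the resulting power series the coefficient of the monomial $x^{l_1}t^{l}(u_0-a_0)^{i_0}\cdots(u_{\ocn-1}-a_{\ocn-1})^{i_{\ocn-1}}$ yields, for every choice of $l_1,i_0,\dots,i_{\ocn-1}\in\zp$, the relation
\[
\sum_{b=0}^{l}\big[\tilde\ga^{\,l-b},\,\ga^{l_1,b}_{i_0\dots i_{\ocn-1}0}\big]=0.
\]
In this sum every term with $b\ge 1$ has $l-b\le l-1$, so by the induction hypothesis $\tilde\ga^{\,l-b}$ is central and the corresponding bracket vanishes. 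Only the term $b=0$ survives, giving $\big[\tilde\ga^l,\ga^{l_1,0}_{i_0\dots i_{\ocn-1}0}\big]=0$ for all $l_1,i_0,\dots,i_{\ocn-1}\in\zp$.

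It then remains to note that $\tilde\ga^l$ commutes with the one remaining generator $\tilde\ga^0$: since $\tilde\ga^0$ is central by \er{tga0fdq}, we have $[\tilde\ga^l,\tilde\ga^0]=0$. Thus $\tilde\ga^l$ commutes with every generator listed in \er{tga0ga}, hence with all of $\fdn^\ocn(\CE,a)$, which completes the induction. I expect no serious obstacle here; the only point demanding care is the bookkeeping in the coefficient extraction, where one must be sure that the grading by powers of $t$ isolates $\tilde\ga^l$ in the top term $b=0$ so that the induction hypothesis applies to all lower terms. The essential idea is simply that \er{ga0ga20}, read order by order in $t$ rather than only at $t=0$, upgrades the centrality of $\tilde\ga^0$ to that of every $\tilde\ga^l$.
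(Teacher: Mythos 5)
Your proposal is correct and follows essentially the same route as the paper: the paper's proof is also an induction on $l$ in which the identity \er{ga0ga20} is expanded to order $n+1$ in $t$ (via $\tfrac{\pd^{n+1}}{\pd t^{n+1}}\big|_{t=0}$ and the Leibniz rule, which after dividing by $(n+1)!$ is exactly your coefficient extraction $\sum_{b=0}^{l}\big[\tilde\ga^{\,l-b},\ga^{l_1,b}_{i_0\dots i_{\ocn-1}0}\big]=0$), the lower-order terms are killed by the induction hypothesis, and centrality of $\tilde\ga^{n+1}$ then follows from commuting with the generating set \er{tga0ga}. The only difference is presentational: derivatives at $t=0$ versus direct monomial bookkeeping.
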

\begin{proof}
We prove~\er{tgalfdq} by induction on $l$. 
The property $\big[\tilde\ga^0,\,\fdn^\ocn(\CE,a)\big]=0$ was obtained 
in~\er{tga0fdq}.
Let $n\in\zp$ be such that $\big[\tilde\ga^l,\,\fdn^\ocn(\CE,a)\big]=0$ 
for all $l\le n$. Since 
$\dfrac{\pd^{l}}{\pd t^{l}}(\ga_2)\,\bigg|_{t=0}=l!\cdot\tilde\ga^l$, 
we get
\beq
\lb{pdkga2}
\bigg[
\frac{\pd^{l}}{\pd t^{l}}(\ga_2)\,\bigg|_{t=0},\,
\frac{\pd^{m}}{\pd t^{m}}(\ga_0)\,\bigg|_{t=0}\bigg]=0\qquad\forall\,l\le n,
\qquad\forall\,m\in\zp.
\ee
Applying the operator $\dfrac{\pd^{n+1}}{\pd t^{n+1}}$ to equation~\er{ga0ga20}, 
substituting $t=0$, and using~\er{pdkga2}, one obtains 
\begin{multline*}
0=\frac{\pd^{n+1}}{\pd t^{n+1}}
\big([\ga_2,\ga_0]\big)\,\bigg|_{t=0}
=\sum_{k=0}^{n+1}\binom{n+1}{k}\cdot
\bigg[
\frac{\pd^{k}}{\pd t^{k}}(\ga_2)\,\bigg|_{t=0},\,
\frac{\pd^{n+1-k}}{\pd t^{n+1-k}}(\ga_0)\,\bigg|_{t=0}\bigg]=\\
=
\bigg[
\frac{\pd^{n+1}}{\pd t^{n+1}}(\ga_2)\,\bigg|_{t=0},\,
\ga_0\,\bigg|_{t=0}\bigg]=\\
=\bigg[
(n+1)!\cdot\tilde\ga^{n+1},\,
\sum_{l_1,i_0,\dots,i_{\ocn-1}} 
x^{l_1}(u_0-a_0)^{i_0}\dots(u_{\ocn-1}-a_{\ocn-1})^{i_{\ocn-1}}\cdot
\ga^{l_1,0}_{i_0\dots i_{\ocn-1}0}\bigg],
\end{multline*}
which implies 
\beq
\lb{tgan1ga0}
\big[\tilde\ga^{n+1},\,\ga^{l_1,0}_{i_0\dots i_{\ocn-1}0}\big]=0
\qquad\qquad\forall\,l_1,i_0,\dots,i_{\ocn-1}\in\zp.
\ee
Equation~\er{tga0fdq} yields
\beq
\lb{tga0tgan1}
\big[\tilde\ga^{0},\tilde\ga^{n+1}\big]=0.
\ee

Since the elements~\er{tga0ga} generate the algebra $\fdn^\ocn(\CE,a)$, 
from~\er{tgan1ga0},~\er{tga0tgan1} it follows that 
$\big[\tilde\ga^{n+1},\,\fdn^\ocn(\CE,a)\big]=0$. 
\end{proof}

\begin{theorem}
\lb{thcenter}
Let $\CE$ be the infinite prolongation of equation~\er{kdveq}. Let $a\in\CE$. 
For each $\ocn\in\zsp$, 
consider the homomorphism $\vf_\ocn\cl\fdn^\ocn(\CE,a)\to\fdn^{\ocn-1}(\CE,a)$ constructed 
in~\er{fdnn-1}. 
We have 
\beq
\lb{v1v2ker}
[v_1,v_2]=0\qquad\qquad\forall\,v_1\in\ker\vf_\ocn,\qquad\forall\,v_2\in\fdn^\ocn(\CE,a).
\ee
That is, the kernel of $\vf_\ocn$ is contained in the center of the Lie 
algebra $\fdn^\ocn(\CE,a)$.

For each $k\in\zsp$, let $\psi_k\colon\fdn^k(\CE,a)\to\fdn^{0}(\CE,a)$ 
be the composition of the homomorphisms
$$
\fdn^k(\CE,a)\to\fdn^{k-1}(\CE,a)\to\dots
\to\fdn^{1}(\CE,a)\to\fdn^{0}(\CE,a)
$$
from~\er{fdnn-1}. Then   
\beq
\lb{hhhhk}
[h_1,[h_2,\dots,[h_{k-1},[h_k,h_{k+1}]]\dots]]=0\qquad\qquad\forall\,h_1,\dots,h_{k+1}\in\ker\psi_k.
\ee
In particular, the kernel of $\psi_k$ is nilpotent.
\end{theorem}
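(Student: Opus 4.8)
The plan is to identify $\ker\vf_\ocn$ explicitly from the normal form for $\ga,\gb$ obtained above, and then to deduce the nilpotency statement \er{hhhhk} formally from the centrality statement \er{v1v2ker} by a filtration argument along the tower \er{fdnn-1}.

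For \er{v1v2ker}, recall that $\vf_\ocn$ is the homomorphism of \er{fdnn-1} that annihilates the generators $\ga^{l_1,l_2}_{i_0\dots i_\ocn}$ with $i_\ocn\neq 0$ and $\gb^{l_1,l_2}_{j_0\dots j_{\ocn+2}}$ with $j_{\ocn+2}\neq 0$, sending every other generator to the corresponding generator of $\fdn^{\ocn-1}(\CE,a)$; thus $\ker\vf_\ocn$ is the ideal generated by these top-jet generators. Now the normal form \er{ga210}, \er{uqga10}, \er{gblong} shows that the whole $u_\ocn$-dependence of $\ga$ and the whole $u_{\ocn+2}$-dependence of $\gb$ is carried by the single coefficient $\ga_2$, and \er{ga2sum} gives $\ga_2=\sum_l t^l\,\tilde\ga^l$ with $\tilde\ga^l=\ga^{0,l}_{0\dots 02}$. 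Reading off coefficients, the top-jet generators are exactly the elements $\tilde\ga^l$ (the $\gb$-side producing $\gb^{0,l}_{0\dots 0\,1\,0\,1}=2\tilde\ga^l$). Hence $\ker\vf_\ocn$ is the ideal generated by $\{\tilde\ga^l:l\in\zp\}$. But \er{tgalfdq} asserts that each $\tilde\ga^l$ is central, so the ideal they generate collapses to their linear span, which is central. This is precisely \er{v1v2ker}.

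To prove \er{hhhhk}, I set $\Phi_j=\vf_{k-j+1}\circ\dots\circ\vf_k\cl\fdn^k(\CE,a)\to\fdn^{k-j}(\CE,a)$ for $0\le j\le k$, with $\Phi_0=\id$ and $\Phi_k=\psi_k$, and write $C_j=\ker\Phi_j$, so that $0=C_0\subseteq C_1\subseteq\dots\subseteq C_k=\ker\psi_k$. The single fact I need is $[C_j,\fdn^k(\CE,a)]\subseteq C_{j-1}$ for $1\le j\le k$. Indeed, for $c\in C_j$ the factorization $\Phi_j=\vf_{k-j+1}\circ\Phi_{j-1}$ gives $\Phi_{j-1}(c)\in\ker\vf_{k-j+1}$, which by \er{v1v2ker} lies in the center of $\fdn^{k-j+1}(\CE,a)$; hence $\Phi_{j-1}([c,x])=[\Phi_{j-1}(c),\Phi_{j-1}(x)]=0$ for every $x\in\fdn^k(\CE,a)$, i.e.\ $[c,x]\in C_{j-1}$. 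Granting this, for $h_1,\dots,h_{k+1}\in\ker\psi_k=C_k$ the inner bracket $[h_k,h_{k+1}]$ lies in $C_{k-1}$, then $[h_{k-1},[h_k,h_{k+1}]]\in C_{k-2}$, and after all $k$ brackets the nested expression lies in $C_0=0$; the same estimate gives $\gamma_{k+1}(\ker\psi_k)\subseteq C_0=0$ for the lower central series, so $\ker\psi_k$ is nilpotent.

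The formal second half is routine once \er{v1v2ker} is in hand, so the real work sits in the first half, and specifically in the identification $\ker\vf_\ocn=\mathrm{span}\{\tilde\ga^l\}$. The inclusion $\supseteq$ is immediate; the nontrivial inclusion $\subseteq$ amounts to checking that killing the top-jet generators turns the order-$\ocn$ relation \er{gckdv} into exactly the defining relations of $\fdn^{\ocn-1}(\CE,a)$, with no further collapse, equivalently that $\fdn^\ocn(\CE,a)/\mathrm{span}\{\tilde\ga^l\}\cong\fdn^{\ocn-1}(\CE,a)$. This is the step where the structural lemmas culminating in \er{ga210}, \er{uqga10}, \er{gblong}, \er{tgalfdq} and the generating set \er{tga0ga} are indispensable, and it is the main obstacle; once it is cleared, the centrality of $\ker\vf_\ocn$, and with it the whole theorem, follows as above.
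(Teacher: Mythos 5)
Your proposal is correct and follows essentially the same route as the paper: the centrality claim \er{v1v2ker} is obtained exactly as in the paper's proof, by using the normal form \er{ga210}, \er{uqga10}, \er{gblong}, \er{ga2sum} to identify $\ker\vf_\ocn$ with the ideal generated by the elements $\tilde\ga^l$, which collapses to their (central) linear span by \er{tgalfdq}. Your filtration $0=C_0\subseteq C_1\subseteq\dots\subseteq C_k=\ker\psi_k$ with $[C_j,\fdn^k(\CE,a)]\subseteq C_{j-1}$ is just a repackaging of the paper's induction on $k$ (both rest solely on the centrality of $\ker\vf_j$ at each level), so the second half coincides with the paper's argument as well.
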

\begin{proof}
Combining formulas~\er{ga210},~\er{uqga10},~\er{gblong},~\er{ga2sum} with the definition 
of the homomorphism $\vf_\ocn\cl\fdn^\ocn(\CE,a)\to\fdn^{\ocn-1}(\CE,a)$, we see that 
$\ker\vf_\ocn$ is generated by the elements $\tilde\ga^l$, $l\in\zp$.
Then~\er{v1v2ker} follows from~\er{tgalfdq}.

So we have proved that 
the kernel of the homomorphism $\vf_\ocn\cl\fdn^\ocn(\CE,a)\to\fdn^{\ocn-1}(\CE,a)$ 
is contained in the center of the Lie algebra $\fdn^\ocn(\CE,a)$ for any $\ocn\in\zsp$.

Let us prove~\er{hhhhk} by induction on $k$. 
Since $\psi_1=\vf_1$, for $k=1$ property~\er{hhhhk} follows from~\er{v1v2ker}.
Let $n\in\zsp$ be such that~\er{hhhhk} is valid for $k=n$. 
Then for any $h'_1,h'_2,\dots,h'_{n+2}\in\ker\psi_{n+1}$ we have 
\beq
\lb{vfhn2}
\big[\vf_{n+1}(h'_2),\big[\vf_{n+1}(h'_3),\dots,\big[\vf_{n+1}(h'_n),
\big[\vf_{n+1}(h'_{n+1}),\vf_{n+1}(h'_{n+2})\big]\big]\dots\big]\big]=0,
\ee
because $\vf_{n+1}(h'_i)\in\ker\psi_n$ for $i=2,3,\dots,n+2$.
Equation~\er{vfhn2} says that 
\beq
\lb{hhkervf}
\big[h'_2,\big[h'_3,\dots,\big[h'_n,
\big[h'_{n+1},h'_{n+2}\big]\big]\dots\big]\big]\in\ker\vf_{n+1}.
\ee
Since $\ker\vf_{n+1}$ is contained in the center of $\fdn^{n+1}(\CE,a)$,
property~\er{hhkervf} yields 
$$
\big[h'_1,\big[h'_2,\big[h'_3,\dots,\big[h'_n,
\big[h'_{n+1},h'_{n+2}\big]\big]\dots\big]\big]\big]=0.
$$
So we have proved~\er{hhhhk} for $k=n+1$. 
Clearly, property~\er{hhhhk} implies that $\ker\psi_k$ is nilpotent.
\end{proof}

\section*{Acknowledgements}
The author would like to thank M.~Crainic, 
A.~Henriques, I.~S.~Krasil{\cprime}shchik, J.~van de Leur, 
Yu.~I.~Manin, I.~Miklaszewski,  
V.~V.~Sokolov, A.~M.~Verbovetsky, 
and A.~M.~Vinogradov for useful discussions.
 
This work is supported by 
the Netherlands Organisation for Scientific Research (NWO) grants 639.031.515 and 613.000.906.
The author is grateful to the Max Planck Institute for Mathematics (Bonn, Germany) 
for its hospitality and excellent working conditions 
during 02.2006--01.2007 and 06.2010--09.2010, 
when part of this research was done.

\end{document}